\newcommand{\Sup}{\bigvee}
\newcommand{\id}{\mathit{id}}
\newcommand{\bool}{\mathbb{B}}
\newcommand{\scomp}{\,\textcolor{orange}{\bullet}\,}
\newcommand{\pcomp}{\,\textcolor{teal}{\bullet}\,}
\newcommand{\sconv}{\,\textcolor{orange}{\ast}\,}
\newcommand{\pconv}{\,\textcolor{teal}{\ast}\,}
\newcommand{\srel}{\textcolor{orange}{R}}
\newcommand{\prel}{\textcolor{teal}{R}}
\newcommand{\se}{\textcolor{orange}{1}}
\newcommand{\pe}{\textcolor{teal}{1}}
\newcommand{\sE}{\textcolor{orange}{E}}
\newcommand{\pE}{\textcolor{teal}{E}}
\newcommand{\sD}{\textcolor{orange}{D}}
\newcommand{\pD}{\textcolor{teal}{D}}
\newcommand{\sdot}{\textcolor{orange}{\otimes}}
\newcommand{\pdot}{\textcolor{teal}{\otimes}}
\newcommand{\sid}{\textcolor{orange}{\mathit{id}_E}}
\newcommand{\pid}{\textcolor{teal}{\mathit{id}_E}}
\newcommand{\pstar}{\textcolor{teal}{\star}}
\newcommand{\sstar}{\textcolor{orange}{\star}}
\newcommand{\ai}[1]{$\textrm{I}_{#1}$}
\newcommand{\ri}[1]{$\textrm{RI}_{#1}$}
\newcommand{\pow}{\mathcal{P}}
\newcommand{\At}{\mathit{At}}
\newtheorem{theorem}{Theorem}[section]
\newtheorem{proposition}[theorem]{Proposition}
\newtheorem{lemma}[theorem]{Lemma}
\newtheorem{corollary}[theorem]{Corollary}
\theoremstyle{definition}
\newtheorem{definition}[theorem]{Definition}
\newtheorem{example}[theorem]{Example}
\newtheorem{question}[theorem]{Question}
\definecolor{jccolor}{rgb}{1,0.5,0.5}
\definecolor{sdcolor}{cmyk}{1,0,1,0}
\definecolor{gscolor}{cmyk}{1,0,0,0}
\begin{document}

\title{Convolution and Concurrency}

\author{James Cranch}
\author{Simon Doherty}
\author{Georg Struth}
\affil{University of Sheffield\\ United Kingdom}

\date{}

\maketitle

\begin{abstract}
  We show how concurrent quantales and concurrent Kleene algebras
  arise as convolution algebras $Q^X$ of functions from structures $X$
  with two ternary relations that satisfy relational interchange laws
  into concurrent quantales or Kleene algebras $Q$. The elements of
  $Q$ can be understood as weights; the case $Q=\bool$ corresponds to
  a powerset lifting. We develop a correspondence theory between
  relational properties in $X$ and algebraic properties in $Q$ and
  $Q^X$ in the sense of modal and substructural logics,  and boolean
  algebras with operators. As examples, we construct the concurrent
  quantales and Kleene algebras of $Q$-weighted words, digraphs,
  posets, isomorphism classes of finite digraphs and pomsets.
\end{abstract}

\pagestyle{plain}


\section{Introduction}\label{S:introduction}

Our initial motivation has been the provision of recipes for
constructing graph models for concurrent quantales and concurrent
Kleene algebras~\cite{HMSW11}.  These axiomatise the sequential and
parallel compositions $\cdot$ and $\|$ of concurrent and distributed
systems as well as their finite sequential and parallel iterations and
impose that these compositions interact via a lax interchange law
$(w\| x) \cdot (y\| z)\le (w\cdot y)\| (x\cdot z)$.  Two classical
models are languages over finite words with respect to sequential and
shuffle composition in interleaving concurrency, and languages over
partial orders or partial words (pomsets) with respect to serial and
parallel composition in true concurrency. The relation $\le$ is
interpreted as set inclusion in these models.

In both models, the language-level algebras are constructed by lifting
structural properties of compositions from single objects---single
words, single posets---to power sets. In fact, both liftings are
instances of the classical Stone-type duality between $n+1$-ary
relations and $n$-ary operators (or modalities) on power set boolean
algebras~\cite{JonssonT51,Goldblatt}; here for ternary relations and
binary operators. In the word model, the ternary relations on words
are $u=v\cdot w$ and $u\in v\| w$; the binary operators on power sets
are $X\cdot Y=\{u\cdot v\mid u\in X\land v\in Y\}$ and
$X\| Y = \bigcup\{u\|v\mid u\in Y\land v\in Y\}$. In the poset model,
the ternary relations on posets are $P=P_1\cdot P_2$ provided
$P_1\cdot P_2$ is defined ($P_1\cdot P_2$ being disjoint union with
additional arrows from each element of $P_1$ to each element of $P_2$)
and $P\preceq P_1\| P_2$ provided $P_1\| P_2$ is defined ($P_1\| P_2$
being disjoint union and the relation holds if there is a bijective
order morphism from the right-hand poset to the left-hand one).  The
binary operators on power sets are
$X\cdot Y=\{P_1\cdot P_2\mid P_1\in X, P_2\in Y \text{ and } P_1\cdot
P_2 \text{ is defined}\}$
and
$X\| Y=\{P \mid \exists P_1 \in X,P_2\in Y.\ P\preceq P_1\| P_2\text{
  and } P_1\| P_2 \text{ is defined}\}$.

Both constructions generalise further to weighted words and weighted
po(m)sets~\cite{Handbook} and beyond that---yet ignoring
interchange---to arbitrary functions $X\to Q$ from partial monoids or
ternary relations over $X$ into quantales
$Q$~\cite{DongolHS16,DongolHS17}. The binary operations on function
spaces $Q^X$ then generalise to convolutions of the form
$(f\ast g)\, x = \Sup\{f\, x \bullet g\, y\mid R^x_{yz}\}$, and the
algebra on the function space $Q^X$ is called \emph{convolution
  algebra}.  This raises the more specific question how concurrent
quantales and similar structures on $Q^X$ can be constructed from
ternary relations on $X$ and weight quantales $Q$---in particular the
above lax interchange law and its variants on $Q^X$. This question is
not only of structural interest. Operationally, checking relational
properties on $X$ tends to be much simpler than those on $Q^X$, and
the first activity suffices if the construction of $Q^X$ from $X$ and
$Q$ is uniform. The rest of this article investigates this question.

First, in Section~\ref{S:summary}, we summarise the previous approach
to relational convolution in $Q^X$~\cite{DongolHS17}, where $X$ is a
set equipped with a ternary relation and $Q$ a quantale, and introduce
the basic lifting construction, namely that $Q^X$ forms a quantale if
$X$ satisfies a relational associativity law and $Q$ is a quantale,
and if a suitable set of relational units is present.

In Section~\ref{S:correspondence-interchange} we prove novel
correspondence results between relational interchange laws on $X$ and
algebraic interchange laws on $Q$ and
$Q^X$. Proposition~\ref{P:correspondence1} shows that interchange on
$X$ and $Q$ give rise to those on $Q^X$. In addition, under mild
non-degeneracy conditions on $Q$, interchange laws on $Q$ and $Q^X$
give rise to those on $X$ (Proposition~\ref{P:correspondence2}) and,
under mild non-degeneracy conditions on $X$, interchange laws on $X$
and $Q^X$ give rise to those on $Q$
(Proposition~\ref{P:correspondence3}).  In combination, these results
show that interchange laws on $X$ and $Q$ are precisely what is needed
to obtain such laws on $Q^X$.

Additional correspondences are then presented in
Section~\ref{S:further-correspondences}. First, we prove such results
for sets of relational units in $X$ and quantalic units on $Q$ and
$Q^X$ and show how the above degeneracy conditions simplify in the
presence of units. Secondly, we show how correspondences for
(semi-)associativity and commutativity laws arise from those for
interchange.

Equipped with these correspondences we then introduce relational
interchange monoids and interchange quantales in
Section~\ref{S:interchange-quantales} and package the individual
correspondences for these structures in the main theorem of this
article (Theorem~\ref{P:interchange-quantale-correspondence}): a
correspondence result between relational monoids $X$, interchange
quantales $Q$ and interchange quantales $Q^X$. Interchange quantales
are essentially concurrent quantales without commutativity assumptions
on the ``parallel'' composition. In addition we prove a weak
Eckmann-Hilton argument that shows that certain small interchange laws
are subsumed by the one presented above.

In  light of the duality between $n+1$-ary relations and boolean
algebras with $n$-ary operators, the natural question arises how a
more general duality between $X$, $Q$ and $X^Q$ can be
obtained. Partial results are already known~\cite{HardingWW18}.  We
explain the special case of the power set lifting ($Q=\bool$) in
Section~\ref{S:duality} and relate this results with constructions
from Section~\ref{S:correspondence-interchange}, but leave the general
case for future work.

In Section~\ref{S:interchange-kas} we specialise
Theorem~\ref{P:interchange-quantale-correspondence} to interchange
Kleene algebras and concurrent Kleene algebras, which requires
finiteness and grading assumptions on ternary relations
(Theorem~\ref{P:interchange-ka-correspondence}).

Finally, Sections~\ref{S:shuffle}-\ref{S:graph-type-languages} apply
the constructions from
Theorem~\ref{P:interchange-quantale-correspondence} and
\ref{P:interchange-ka-correspondence} to the examples mentioned above.
In Section~\ref{S:shuffle} we construct the concurrent quantale and
Kleene algebra of $Q$-weighted shuffle languages using an isomorphism
between ternary relations and certain
multimonoids~\cite{GalmicheL06}. In Section~\ref{S:graph-languages}
and \ref{S:graph-type-languages} we construct the concurrent quantale
and Kleene algebra of $Q$-weighted digraph languages and those of
isomorphism classes of finite digraphs. To prepare for these
constructions, Section~\ref{S:pims} introduces partial interchange
monoids, which form relational interchange monoids under certain
restrictions.  It then suffices to show that graphs under the
operations $\cdot$ and $\|$ outlined form such monoids.  The
specialisation to (weighted) partial orders and partial words or
pomsets, which are isomorphism classes of labelled partial orders, is
then straightforward.

Ultimately these results yield uniform construction principles for
(weighted) concurrent quantales and Kleene algebras from simpler
structures such as ternary relations, multimonoids and similar ordered
monoidal structures: to construct such structures it suffices to know
the underlying relational structure, the rest is then
automatic. Beyond that, our results provide valuable structural
insights that might be stepping stones to future duality results.


\section{Relational Convolution: a Summary} \label{S:summary}

This section summarises the general approach.

Relational convolution~\cite{DongolHS17} has its origins in J\'onsson
and Tarski's boolean algebras with operators~\cite{JonssonT51}, Rota's
foundations of combinatorics~\cite{Rota64}, Sch\"utzenberger's
approach to language theory~\cite{BerstelReutenauer,Handbook} and
Goguen's L-fuzzy maps and relations~\cite{Goguen67}. It is an
operation in the algebra of functions $X\to Q$ from a set $X$ into a
complete lattice $Q$ equipped with an additional operation $\bullet$
of composition and constrained by a ternary relation on $X$, which we
identify with a predicate of type $X\to X\to X\to \bool$.  It is defined as
\begin{equation*}
  (f\ast g)\, x = \Sup_{y,z:R^x_{yz}} f\, y\bullet g\, z,
\end{equation*}
where the right-hand side abbreviates
$\Sup \{f\, y \bullet g\, z\mid R^x_{yz}\}$ and $\Sup$ denotes the
supremum in $Q$. It is well known that the function space $Q^X$ forms
a complete lattice when the order and sups in $Q$ are extended
pointwise~\cite{AbramskyJung}.  Yet the convolution $\ast$ need not satisfy any algebraic
laws on $Q^X$, unless conditions on $R$ and $Q$ are imposed.

This is reminiscent of modal correspondence theory, where conditions
on relational Kripke frames force algebraic properties of modal
operators and vice versa, or more generally to dualities between
categories of $n+1$-ary relational structures and those of boolean
algebras with $n$-ary operators~\cite{JonssonT51,Goldblatt}. In
fact, $R$ is a ternary relational structure and $\ast$ a binary
modality similar to the product of the Lambek calculus~\cite{Lambek58},
the chop operation of interval temporal logics~\cite{MM83} or the
separating conjunction of separation logic~\cite{OHRY01}.

\begin{example}\label{ex:incidence}
  Let $X$ be an incidence algebra of closed intervals (over
  $\mathbb{R}$, say)~\cite{Rota64}, with interval fusion $[p,q][r,s]$
  equal to $[p,s]$ if $q=r$, and undefined otherwise. Let $R^x_{yz}$
  hold if the fusion of intervals $y$ and $z$ is defined and equal to
  $x$.  Let $Q=\bool$ be the (complete) lattice of booleans with
  $\bullet$ as meet.  Functions $X\to \bool$ are then predicates
  ranging over intervals in $X$. The predicate $f\ast g$ holds of an
  interval $x$ whenever it can be decomposed into a prefix interval
  $y$ and a suffix interval $z$ by fusion such that $f\, y$ and
  $g\, z$ both hold.  This captures the semantics of the binary chop
  modality of interval temporal logics~\cite{MM83} .\qed
\end{example}

It is well known that chop is associative in the convolution algebra
$\mathbb{B}^X$ due to associativity of meet in $\mathbb{B}$ and
associativity of interval fusion in $X$---up to definedness.

\begin{definition}[\cite{Rosenthal90}]\label{D:quantale}
A \emph{quantale} $Q$ is a complete lattice equipped
with an associative composition $\bullet$ that preserves sups in both
arguments: for all $a,b\in Q$ and
$A,B\subseteq Q$,
\begin{equation*}
  a\bullet \left(\Sup B\right) = \Sup \{a\bullet b\mid b\in
  B\}\qquad\text{ and }\qquad
  \left(\Sup A\right) \bullet b = \Sup \{a\bullet b\mid a\in A\}.
\end{equation*}
A quantale is \emph{unital} if $\bullet$ has a unit $1$.
\end{definition}
Convolution is then associative in $Q^X$ if $R$ is \emph{relationally
  associative}~\cite{DongolHS17}: for all $x,u,v,w\in X$,

\begin{equation*}
  \exists y.\ R^y_{uv} \wedge R^x_{yw} \Leftrightarrow \exists
  y.\ R^x_{uy} \wedge R^y_{vw}.
\end{equation*}
This yields one direction of a correspondence between the ternary
relation $R$ and convolution $\ast$ viewed as a binary modality.
Similarly, convolution is commutative in $Q^X$ if $R$ is
\emph{relationally commutative}: for all $x,u,v\in X$,
\begin{equation*}
  R^x_{uv} \Rightarrow R^x_{vu}.
\end{equation*}
Finally, if the unary relation $E^x$ is a set of relational units for
$R$ and the quantale $Q$ unital with unit of composition $1$, then
convolution has the indicator function $\id_E$ as a left and right
unit, where
\begin{equation*}
  \id_E\, x =
  \begin{cases}
    1,& \text{if } E^x,\\
0, & \text{otherwise}.
  \end{cases}
\end{equation*}
\begin{definition}[\cite{DongolHS17}]\label{D:rel-units}
  The set $E\subseteq X$ is a \emph{set of relational units} for the
  ternary relation $R$ over $X$ if it satisfies, for all $x,y\in X$,
\begin{gather*}
\exists e.\  R^x_{ex} \wedge E^e, \qquad
R^x_{ey} \wedge E^e \Rightarrow x=y,\qquad
\exists e.\  R^x_{xe}\wedge E^e, \qquad
R^x_{ye}\wedge E^e\Rightarrow x = y.
\end{gather*}
\end{definition}
\noindent This guarantees that each $x\in X$ has a unique left unit as
well as a unique right one.  With the Kronecker delta function
$\delta_x: X\to \bool$ defined as $\delta_x\, y$ equal to $1$ if $x=y$
and to $0$ otherwise, therefore,
\begin{equation*}
\id_E = \Sup_{e:E^e} \delta_e.
\end{equation*}

The convolution algebras on $Q^X$ can now be described as follows.
\begin{theorem}[\cite{DongolHS17}]\label{P:conv-algebras} ~
\begin{enumerate}
\item If $R$ is relationally associative and $Q$ a quantale, then
  $Q^X$ is a quantale.
\item If $R$ is relationally associative and commutative and $Q$
  an abelian quantale, then $Q^X$ is an abelian quantale.
\item If $R$ is relationally associative and has relational
  units, and $Q$ is a unital quantale, then $Q^X$ is a unital
  quantale
.
\end{enumerate}
\end{theorem}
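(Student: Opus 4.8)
The plan is to verify the quantale axioms on $Q^X$ in turn, treating associativity of $\ast$ as the crux and reusing the underlying machinery for commutativity and units. Since $Q$ is a complete lattice and the order and sups on $Q^X$ are defined pointwise, $Q^X$ is automatically a complete lattice; this is the cited fact from~\cite{AbramskyJung}, so nothing new is required there. Next I would check that $\ast$ preserves sups in each argument. Fixing $g$ and a family $(f_i)$, I would compute $((\Sup_i f_i)\ast g)\, x = \Sup\{(\Sup_i f_i)\, y \bullet g\, z\mid R^x_{yz}\}$, use that sups in $Q^X$ are pointwise to rewrite $(\Sup_i f_i)\, y = \Sup_i (f_i\, y)$, and then invoke sup-preservation of $\bullet$ in the left argument of $Q$ to pull the sup outside $\bullet$; swapping the two suprema (legitimate in a complete lattice) yields $\Sup_i ((f_i\ast g)\, x)$. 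The right argument is symmetric.

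The heart of item~(1) is associativity. I would unfold $((f\ast g)\ast h)\, x = \Sup\{(f\ast g)\, y\bullet h\, w\mid R^x_{yw}\}$ and expand the inner convolution; sup-preservation of $\bullet$ on the left lets me flatten the nested suprema into a single supremum over the joint index set, giving $\Sup\{(f\, u\bullet g\, v)\bullet h\, w\mid \exists y.\ R^x_{yw}\wedge R^y_{uv}\}$. Symmetrically, $(f\ast(g\ast h))\, x = \Sup\{f\, u\bullet(g\, v\bullet h\, w)\mid \exists y.\ R^x_{uy}\wedge R^y_{vw}\}$. Now associativity of $\bullet$ in $Q$ equates the summands, and relational associativity of $R$ equates the two index predicates; hence the two suprema range over identical indexed families and coincide. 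The main obstacle here is the bookkeeping: justifying the flattening step rigorously (it rests precisely on sup-preservation of $\bullet$, not merely monotonicity) and confirming that the summand identity and the index identity may be applied independently, so that the families match term by term.

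For item~(2), commutativity of $\ast$ follows by reindexing: commutativity of $\bullet$ in the abelian quantale rewrites each summand $f\, y\bullet g\, z$ as $g\, z\bullet f\, y$, while relational commutativity, applied in both directions, gives $R^x_{yz}\Leftrightarrow R^x_{zy}$, so swapping the roles of $y$ and $z$ leaves the index set unchanged and identifies $(f\ast g)\, x$ with $(g\ast f)\, x$. For item~(3), I would show $\id_E$ is a two-sided unit. Using that $\bullet$ preserves the empty sup, so that $0\bullet a = a\bullet 0 = 0$, the summands of $(\id_E\ast f)\, x = \Sup\{\id_E\, y\bullet f\, z\mid R^x_{yz}\}$ vanish unless $E^y$ holds, in which case $\id_E\, y\bullet f\, z = 1\bullet f\, z = f\, z$. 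The relational unit axioms then collapse the surviving index set $\{z\mid \exists y.\ R^x_{yz}\wedge E^y\}$ to exactly $\{x\}$: existence supplies $x$ and uniqueness excludes everything else, so the supremum is $f\, x$. The right unit law is dual, using the remaining two axioms of Definition~\ref{D:rel-units}.
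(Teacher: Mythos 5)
Your proposal is correct and follows exactly the route the paper (and its cited source~\cite{DongolHS17}) relies on: the theorem is stated by citation, but the techniques displayed elsewhere in the paper---flattening nested suprema via sup-preservation of $\bullet$ as in the proof of Proposition~\ref{P:correspondence1}, and the unit computation as in Proposition~\ref{P:correspondence-units}(1)---are precisely the steps you carry out. You also correctly identify the two points that need explicit justification, namely that the flattening and the annihilation $0\bullet a = a\bullet 0 = 0$ both rest on preservation of (possibly empty) sups rather than mere monotonicity.
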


\begin{example}\label{ex:languages}
  Let $(X^\ast, \cdot,\varepsilon)$ be the free monoid over $X$. For
  $u,v,w\in X^\ast$ define $R^u_{vw} \Leftrightarrow u = v\cdot w$.
  Then $R$ is relationally associative with relational unit
  $\varepsilon$.  For any quantale $Q$, the convolution algebra is the
  quantale $Q^{X^\ast}$ of $Q$-weighted languages over $X$ whereas
  $\bool^{X^\ast}$ is the usual language quantale over
  $X$. Convolution is (weighted) language product (the boolean case is
  also known as complex or Minkowski product). The construction
  generalises to arbitrary monoids.

  Words are \emph{finitely decomposable} in that each word can only be
  split into finitely many prefix/suffix pairs.  All sups in
  convolutions therefore remain finite and $Q$ can be replaced by an
  arbitrary semiring.  This yields the usual weighted languages
  formalised as rational power series in the sense of Sch\"utzenberger
  and Eilenberg~\cite{BerstelReutenauer,Handbook}. \qed
\end{example}

\begin{example}\label{ex:relations}
  Define a composition $\cdot :X\times X\to X$ on a set $X$ such that
  $(a,b)\cdot (c,d)$ is equal to $(a,d)$ if $b=c$ and undefined
  otherwise.  For $x,y,z\in X\times X$, let $R^x_{yz}$ hold if and
  only if $y\cdot z$ is defined and equal to $x$, and let
  $E=\{(a,a)\mid a \in X\}$.  Then $R$ is relationally associative and
  the elements of $E$ are the relational units.  For any quantale $Q$,
  the convolution algebra $Q^{X\times X}$ is the quantale of
  $Q$-weighted binary relations over $X$, while $\bool^{X\times X}$ is
  simply the quantale of binary relations over $X$.  Convolution is
  (weighted) relational composition. This specialises to quantales of
  weighted closed intervals in linear orders, as in
  Example~\ref{ex:incidence}, which can be represented by ordered
  pairs $(a,b)$ in which $a\le b$.\qed
\end{example}

\begin{example}\label{ex:sep-logic}
  Define a composition $\oplus$ on the set of partial functions of
  type $X\rightharpoonup Y$ such that $f \oplus g$ is $f\cup g$ if
  $\mathit{dom}\, f$ and $\mathit{dom}\, g$ are disjoint and undefined
  otherwise.  The set $Y^X$ can model the heap memory area with
  addresses in $X$, values in $Y$ and $\oplus$ as heaplet
  addition. For $f,g,h:X\rightharpoonup Y$ let $R^f_{gh}$ hold if and
  only if $g\oplus h$ is defined and equal to $f$.  Then $R$ is
  relationally associative and commutative; the empty partial function
  (which is undefined everywhere) is its relational unit. For any
  abelian quantale $Q$, the convolution algebra is the abelian
  quantale $Q^{(Y^X)}$ of $Q$-weighted assertions of separation logic
  over the set $Y^X$ of heaps. Convolution is separating
  conjunction~\cite{DongolHS16}. The standard assertion algebra of separation
  logic is formed by $\bool^{(Y^X)}$.\qed
\end{example}


\section{Correspondences for Interchange
  Laws}\label{S:correspondence-interchange}

Theorem~\ref{P:conv-algebras} generalises to correspondences between
quantales with two compositions $\textcolor{orange}{\bullet}$ and
$\textcolor{teal}{\bullet}$ related by seven interchange laws and
relational structures with suitable relational constraints. The choice
of the interchange laws is explained in
Section~\ref{S:interchange-quantales}; the six small interchange laws
are precisely those that can be derived from the seventh in the
presence of suitable units. We start with the relational structures.

\begin{definition}\label{D:rel-bi-magma}~
\begin{enumerate}
\item A \emph{relational magma} $(X,R)$ is a set $X$ equipped with a
  ternary relation $R:X\to X\to X\to \bool$.  It is \emph{unital} if
  there is a set $E$ of relational units satisfying the axioms in
  Definition~\ref{D:rel-units}.
\item A \emph{relational bi-magma} $(X,\srel,\prel)$ is a set $X$
  equipped with two ternary relations $\srel$ and $\prel$. It is
  \emph{unital} if $\srel$ has a set of relational units $\sE$ and
  $\prel$ a set of relational units $\pE$.
\end{enumerate}
\end{definition}

\noindent The constraints considered on a bi-magma $X$ are, for
$t,u,v,w,x,y,z\in X$, the \emph{relational interchange laws}
\begin{align}
\srel^x_{uv} &\Rightarrow \prel^x_{uv}, \tag{\ri1}  \label{eq:ri1}\\
\srel^x_{uv} &\Rightarrow \prel^x_{vu}, \tag{\ri2}  \label{eq:ri2}\\
\exists y.\ \srel^x_{uy} \wedge \prel^y_{vw} &\Rightarrow
\exists y.\ \srel^y_{uv}  \wedge \prel^x_{yw},
\tag{\ri3}\label{eq:ri3}\\
\exists y.\ \prel^y_{uv}  \wedge \srel^x_{yw} &\Rightarrow  \exists y.\ \prel^x_{uy}\wedge \srel^y_{vw},
\tag{\ri4}\label{eq:ri4}\\
\exists y.\ \srel^x_{uy} \land \prel^y_{vw} &\Rightarrow \exists
                                             y. \prel^x_{vy} \land
                                             \srel^y_{uw},\tag{\ri5}\label{eq:ri5}\\
\exists y.\ \prel^y_{uv}\land \srel^x_{yw} & \Rightarrow \exists
                                            y. \srel^y_{uw}\land \prel^x_{yv},\tag{\ri6}\label{eq:ri6}\\
\exists y,z.\ \prel^y_{tu} \wedge \srel^x_{yz} \wedge
\prel^z_{vw} &\Rightarrow \exists y,z.\ \srel^y_{tv} \wedge \prel^x_{yz} \wedge \srel^z_{uw}.
\tag{\ri7}\label{eq:ri7}
\end{align}
They memoise the relationships between the trees in $X$
shown in
Figure~\ref{fig:memotrees}.
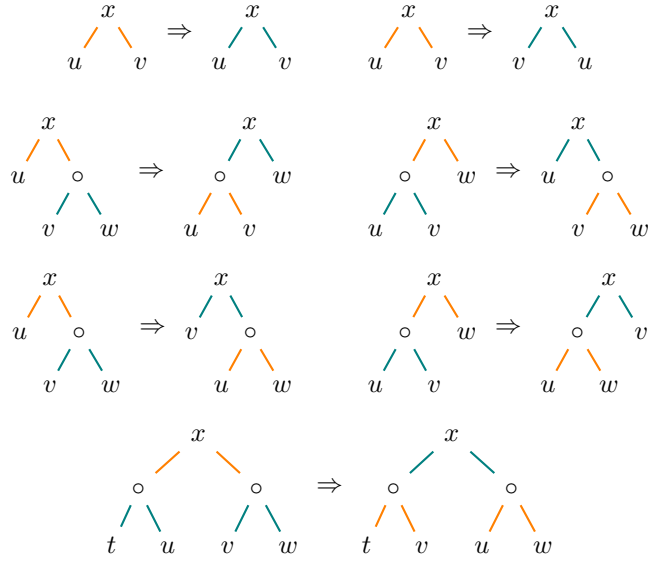
\begin{figure}[t]
  \centering
\begin{equation*}
    \begin{tikzcd}[column sep=-.06cm, row sep=-.06cm]
& x \arrow[dash,line width=.8pt,orange,ddl]\arrow[dash,line width=.8pt,orange,ddr] &\\
&\phantom{x}&\\
u&&v
\end{tikzcd}
\Rightarrow
    \begin{tikzcd}[column sep= -.06cm, row sep=-.06cm]
& x \arrow[dash,line width=.8pt,teal,ddl]\arrow[dash,line width=.8pt,teal,ddr] &\\
&\phantom{x}&\\
u&&v
\end{tikzcd} 
\qquad
    \begin{tikzcd}[column sep=-.06cm, row sep=-.06cm]
& x \arrow[dash,line width=.8pt,orange,ddl]\arrow[dash,line width=.8pt,orange,ddr] &\\
&\phantom{x}&\\
u&&v
\end{tikzcd}
\Rightarrow
\begin{tikzcd}[column sep= -.06cm, row sep=-.06cm]
& x \arrow[dash,line width=.8pt,teal,ddl]\arrow[dash,line width=.8pt,teal,ddr] &\\
&\phantom{x}&\\
v&&u
\end{tikzcd} 
\end{equation*}
\begin{equation*}
\begin{tikzcd}[column sep= -.1cm, row sep=.3cm]
&x\ar[dash,line width=.8pt,orange,dl]\ar[dash,line width=.8pt,orange,dr]&&\\
u&&\circ \ar[dash,line width=.8pt,teal,dl]\ar[dash,line width=.8pt,teal,dr]&\\
&v &&w
\end{tikzcd}
\Rightarrow
\begin{tikzcd}[column sep= -.1cm, row sep=.3cm]
&&x \ar[dash,line width=.8pt,teal,dl]\ar[dash,line width=.8pt,teal,dr]&\\
&\circ \ar[dash,line width=.8pt,orange,dl]\ar[dash,line width=.8pt,orange,dr]&& w\\
u&&v&
\end{tikzcd} 
\qquad
\begin{tikzcd}[column sep= -.1cm, row sep=.3cm]
&&x \ar[dash,line width=.8pt,orange,dl]\ar[dash, line width=.8pt,orange,dr]&\\
&\circ \ar[dash,line width=.8pt,teal,dl]\ar[dash,line width=.8pt,teal,dr]&& w\\
u&&v&
\end{tikzcd}
\Rightarrow
\begin{tikzcd}[column sep= -.1cm, row sep=.3cm]
&x\ar[dash,line width=.8pt,teal,dl]\ar[dash,line width=.8pt,teal,dr]&&\\
u&&\circ\ar[dash,line width=.8pt,orange,dl]\ar[dash,line width=.8pt,orange,dr]&\\
&v &&w
\end{tikzcd} 
\end{equation*}
\begin{equation*}
\begin{tikzcd}[column sep= -.1cm, row sep=.3cm]
&x\ar[dash,line width=.8pt,orange,dl]\ar[dash,line width=.8pt,orange,dr]&&\\
u&&\circ \ar[dash,line width=.8pt,teal,dl]\ar[dash,line width=.8pt,teal,dr]&\\
&v &&w
\end{tikzcd}
\Rightarrow
\begin{tikzcd}[column sep= -.1cm, row sep=.3cm]
&x\ar[dash,line width=.8pt,teal,dl]\ar[dash,line width=.8pt,teal,dr]&&\\
v&&\circ \ar[dash,line width=.8pt,orange,dl]\ar[dash,line width=.8pt,orange,dr]&\\
&u &&w
\end{tikzcd}
\qquad
\begin{tikzcd}[column sep= -.1cm, row sep=.3cm]
&&x \ar[dash,line width=.8pt,orange,dl]\ar[dash, line width=.8pt,orange,dr]&\\
&\circ \ar[dash,line width=.8pt,teal,dl]\ar[dash,line width=.8pt,teal,dr]&& w\\
u&&v&
\end{tikzcd}
\Rightarrow
\begin{tikzcd}[column sep= -.1cm, row sep=.3cm]
&&x \ar[dash,line width=.8pt,teal,dl]\ar[dash, line width=.8pt,teal,dr]&\\
&\circ \ar[dash,line width=.8pt,orange,dl]\ar[dash,line width=.8pt,orange,dr]&& v\\
u&&w&
\end{tikzcd}
 \end{equation*}
\begin{equation*}
\begin{tikzcd}[column sep= -.1cm, row sep=.3cm]
&&&x \ar[dash,line width=.8pt,orange,dll]\ar[dash, line width=.8pt,orange,drr]&&&\\
&\circ \ar[dash,line width=.8pt,teal,dl]\ar[dash,line width=.8pt,teal,dr]&&&& \circ \ar[dash,line width=.8pt,teal,dl]\ar[dash,line width=.8pt,teal,dr]&\\
t&& u && v && w
\end{tikzcd}
\Rightarrow
\begin{tikzcd}[column sep= -.1cm, row sep=.3cm]
&&&x \ar[dash,line width=.8pt,teal,dll]\ar[dash, line width=.8pt,teal,drr]&&&\\
&\circ \ar[dash,line width=.8pt,orange,dl]\ar[dash,line width=.8pt,orange,dr]&&&& \circ \ar[dash,line width=.8pt,orange,dl]\ar[dash,line width=.8pt,orange,dr]&\\
t&& v && u && w
\end{tikzcd}
\end{equation*}
 \caption{Trees memoised by the relational interchange laws.}
  \label{fig:memotrees}
\end{figure}

Next we turn to quantales. As their monoidal structure emerges in the
constructions, we generalise.

\begin{definition}\label{D:pre-bi-quantale}~
\begin{enumerate}
\item A \emph{prequantale}~\cite{Rosenthal90} is a structure
  $(Q,\le,\bullet)$ such that $(Q,\le)$ is a complete lattice and the
  binary operation $\bullet$ on $Q$ preserves sups in both
  arguments. It is \emph{unital} if $\bullet$ has unit $1$.
\item  A \emph{bi-prequantale} is a structure $(Q,\le,\scomp,\pcomp)$ such
  that $(Q,\le,\scomp)$ and $(Q,\le,\pcomp)$ are both prequantales. It
  is unital if $\scomp$ has unit $\se$ and $\pcomp$ unit $\pe$.
\end{enumerate}
\end{definition}

\noindent A quantale is thus a prequantale with associative composition.

For
$a,b,c,d\in Q$ we define the \emph{algebraic interchange laws}
\begin{align}
  a\scomp b &\le a\pcomp b,\tag{\ai1}\label{eq:i1}\\
 a\scomp b &\le b\pcomp c,\tag{\ai2}\label{eq:i2}\\
a\scomp (b\pcomp c) &\le (a\scomp b)\pcomp c,\tag{\ai3}\label{eq:i3}\\
  (a\pcomp b)\scomp c &\le a\pcomp (b\scomp
                        c),\tag{\ai4}\label{eq:i4}\\
a\scomp (b\pcomp c) &\le b\pcomp (a\scomp c),\tag{\ai5}\label{eq:i5}\\
(a\pcomp b)\scomp c &\le (a\scomp c)\pcomp b,\tag{\ai6}\label{eq:i6}\\
 (a\pcomp b)\scomp (c\pcomp d) &\le (a\scomp c)\pcomp (b\scomp
 d).\tag{\ai7}\label{eq:i7}
\end{align}
Interestingly, the syntax trees of these laws in $Q$, as shown in
Figure~\ref{fig:syntaxtrees}, have the structure of the trees in $X$ in
Figure~\ref{fig:memotrees}. The following example provides some
intuition.
\begin{figure}[t]
  \centering
\begin{equation*}
    \begin{tikzcd}[column sep=-.06cm, row sep=-.06cm]
& \scomp \arrow[dash,ddl]\arrow[dash,ddr] &\\
&\phantom{x}&\\
a&&b
\end{tikzcd}
\le
    \begin{tikzcd}[column sep= -.06cm, row sep=-.06cm]
& \pcomp \arrow[dash,ddl]\arrow[dash,ddr] &\\
&\phantom{x}&\\
a&&b
\end{tikzcd} 
\qquad
    \begin{tikzcd}[column sep=-.06cm, row sep=-.06cm]
& \scomp \arrow[dash,ddl]\arrow[dash,ddr] &\\
&\phantom{x}&\\
a&&b
\end{tikzcd}
\le
\begin{tikzcd}[column sep= -.06cm, row sep=-.06cm]
& \pcomp \arrow[dash,ddl]\arrow[dash,ddr] &\\
&\phantom{x}&\\
b&&a
\end{tikzcd} 
\end{equation*}
\begin{equation*}
\begin{tikzcd}[column sep= -.1cm, row sep=.3cm]
&\scomp\ar[dash,dl]\ar[dash,dr]&&\\
a&&\pcomp \ar[dash,dl]\ar[dash,dr]&\\
&b &&c
\end{tikzcd}
\le
\begin{tikzcd}[column sep= -.1cm, row sep=.3cm]
&&\pcomp \ar[dash,dl]\ar[dash,dr]&\\
&\scomp
 \ar[dash,dl]\ar[dash,dr]&& c\\
a&&b&
\end{tikzcd} 
\qquad
\begin{tikzcd}[column sep= -.1cm, row sep=.3cm]
&&\scomp \ar[dash,dl]\ar[dash,dr]&\\
&\pcomp \ar[dash,dl]\ar[dash,dr]&& c\\
a&&b&
\end{tikzcd}
\le
\begin{tikzcd}[column sep= -.1cm, row sep=.3cm]
&\pcomp\ar[dash,dl]\ar[dash,dr]&&\\
a&&\scomp\ar[dash,dl]\ar[dash,dr]&\\
&b &&c
\end{tikzcd} 
\end{equation*}
\begin{equation*}
\begin{tikzcd}[column sep= -.1cm, row sep=.3cm]
&\scomp\ar[dash,dl]\ar[dash,dr]&&\\
a&&\pcomp \ar[dash,dl]\ar[dash,dr]&\\
&b &&c
\end{tikzcd}
\le
\begin{tikzcd}[column sep= -.1cm, row sep=.3cm]
&\pcomp\ar[dash,dl]\ar[dash,dr]&&\\
b&&\scomp\ar[dash,dl]\ar[dash,dr]&\\
&a &&c
\end{tikzcd}
\qquad
\begin{tikzcd}[column sep= -.1cm, row sep=.3cm]
&&\scomp \ar[dash,dl]\ar[dash,dr]&\\
&\pcomp \ar[dash,dl]\ar[dash,dr]&& c\\
a&&b&
\end{tikzcd}
\le
\begin{tikzcd}[column sep= -.1cm, row sep=.3cm]
&&\pcomp \ar[dash,dl]\ar[dash,dr]&\\
&\scomp \ar[dash,dl]\ar[dash,dr]&& b\\
a&&c&
\end{tikzcd}
 \end{equation*}
\begin{equation*}
\begin{tikzcd}[column sep= -.1cm, row sep=.3cm]
&&&\scomp \ar[dash,dll]\ar[dash,drr]&&&\\
&\pcomp \ar[dash,dl]\ar[dash,dr]&&&& \pcomp \ar[dash,dl]\ar[dash,
dr]&\\
a&& b && c && d
\end{tikzcd}
\le
\begin{tikzcd}[column sep= -.1cm, row sep=.3cm]
&&&\pcomp \ar[dash,dll]\ar[dash,drr]&&&\\
&\scomp \ar[dash,dl]\ar[dash,dr]&&&& \scomp\ar[dash,dl]\ar[dash,dr]&\\
a&& c && b && d
\end{tikzcd}
\end{equation*}
 \caption{Syntax trees of the algebraic interchange laws.}
  \label{fig:syntaxtrees}
\end{figure}
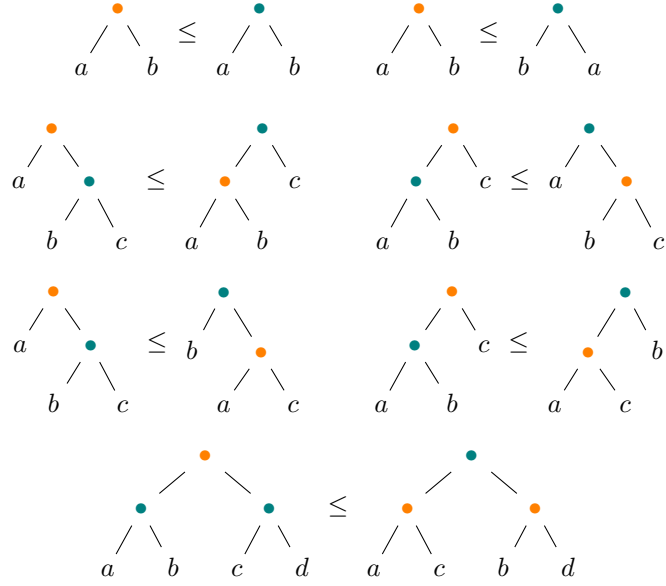
\begin{example}\label{ex:memotrees}
  Let $X=Q$, $\srel^x_{ab} \Leftrightarrow x\le a\scomp b$ and
  $\prel^x_{ab}  \Leftrightarrow x\le a\pcomp b$. Then, for instance,
\begin{equation*}
\exists y,z.\ \prel^y_{ab} \wedge \srel^x_{yz} \wedge
  \prel^z_{cd} \Leftrightarrow x \le (a\pcomp b)\scomp (c\pcomp d),
\end{equation*}
and likewise for the other terms in interchange laws. The relational
and algebraic interchange laws then translate into each other. For
instance, for (\ref{eq:ri7}) and (\ref{eq:i7}),
\begin{align*}
  \left(\exists y,z.\ \prel^y_{ab} \wedge \srel^x_{yz} \wedge
  \prel^z_{cd} \Rightarrow \exists y,z.\ \srel^y_{ac} \wedge
  \prel^x_{yz} \wedge \srel^z_{bd}\right)
&\Leftrightarrow
\left(x \le (a\pcomp b)\scomp (c\pcomp d) \Rightarrow x\le (a\scomp
  c)\pcomp (b\scomp d)\right)\\
&\Leftrightarrow
(a\pcomp b)\scomp (c\pcomp d) \le (a\scomp c)\pcomp (b\scomp d),
\end{align*}
that is,
\begin{equation*}
\left(
\begin{tikzcd}[column sep= -.2cm, row sep=.2cm]
&&&x \ar[dash,line width=.8pt,orange,dll]\ar[dash, line width=.8pt,orange,drr]&&&\\
&\circ \ar[dash,line width=.8pt,teal,dl]\ar[dash,line width=.8pt,teal,dr]&&&& \circ \ar[dash,line width=.8pt,teal,dl]\ar[dash,line width=.8pt,teal,dr]&\\
t&& u && v && w
\end{tikzcd}
\Rightarrow
\begin{tikzcd}[column sep= -.2cm, row sep=.2cm]
&&&x \ar[dash,line width=.8pt,teal,dll]\ar[dash, line width=.8pt,teal,drr]&&&\\
&\circ \ar[dash,line width=.8pt,orange,dl]\ar[dash,line width=.8pt,orange,dr]&&&& \circ \ar[dash,line width=.8pt,orange,dl]\ar[dash,line width=.8pt,orange,dr]&\\
t&& v && u && w
\end{tikzcd}
\right)
\ \Leftrightarrow \
\left(
\begin{tikzcd}[column sep= -.2cm, row sep=.2cm]
&&&\scomp \ar[dash,dll]\ar[dash,drr]&&&\\
&\pcomp \ar[dash,dl]\ar[dash,dr]&&&& \pcomp \ar[dash,dl]\ar[dash,dr]&\\
t&& u && v && w
\end{tikzcd}
\le
\begin{tikzcd}[column sep= -.2cm, row sep=.2cm]
&&&\pcomp \ar[dash,dll]\ar[dash,drr]&&&\\
&\scomp \ar[dash,dl]\ar[dash,dr]&&&& \scomp \ar[dash,dl]\ar[dash,dr]&\\
t&& v && u && w
\end{tikzcd}
\right).
\end{equation*}
With this particular encoding, the relational interchange laws
simply memoise the algebraic ones. \qed
\end{example}
In general, however, the relationship between relational and algebraic
convolution is more complex. The
left-to-right translation in Example~\ref{ex:memotrees} can therefore
fail.

For functions $f,g:X\to Q$ we define the convolutions
\begin{equation*}
  (f\sconv g)\, x = \Sup_{y,z:\srel^x_{yz}} f\, y\scomp g\,
  z\quad\text{ and}\quad (f\pconv g)\, x = \Sup_{y,z:\prel^x_{yz}} f\, y\pcomp g\, z.
\end{equation*}
They can be represented using trees in $X$, $Q$ and $Q^X$ as
\begin{equation*}
\begin{tikzcd}[column sep= -.2cm, row sep=.2cm]
&\sconv \ar[dash,dl]\ar[dash,dr]&\\
f && g
\end{tikzcd}
=
\lambda x.\ \Sup
\left\{
\begin{tikzcd}[column sep= -.2cm, row sep=.2cm]
&\scomp \ar[dash,dl]\ar[dash,dr]&\\
f\,y && g\, z
\end{tikzcd}
\, \middle|\,
\begin{tikzcd}[column sep= -.2cm, row sep=.2cm]
&x \ar[dash,line width=.8pt,orange,dl]\ar[dash, line width=.8pt,orange,dr]&\\
y && z
\end{tikzcd}
\right\}
\qquad\text{ and }\qquad
\begin{tikzcd}[column sep= -.2cm, row sep=.2cm]
&\pconv \ar[dash,dl]\ar[dash,dr]&\\
f && g
\end{tikzcd}
=
\lambda x.\ \Sup
\left\{
\begin{tikzcd}[column sep= -.2cm, row sep=.2cm]
&\pcomp \ar[dash,dl]\ar[dash,dr]&\\
f\,y && g\, z
\end{tikzcd}
\, \middle|\,
\begin{tikzcd}[column sep= -.2cm, row sep=.2cm]
&x \ar[dash,line width=.8pt,teal,dl]\ar[dash, line width=.8pt,teal,dr]&\\
y && z
\end{tikzcd}
\right\}.
\end{equation*}
Convolution thus translates trees with the same structure in $X$ and
$Q$ into trees in $Q^X$.

One can then prove correspondences between relational and
algebraic interchange laws. First we show that relational interchange
laws in $X$ and algebraic interchange laws in $Q$ force algebraic
interchange laws in the convolution algebra on $Q^X$.

\begin{proposition}\label{P:correspondence1}
  Let $X$ be a relational bi-magma and $Q$ a bi-prequantale. Then
  (\ri{k}) in $X$ and (\ai{k}) in $Q$ imply (\ai{k}) in $Q^X$, for
  each $1\leq k\leq 7$.
\end{proposition}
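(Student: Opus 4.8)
The plan is to verify all seven implications by a single uniform computation, which I illustrate on the hardest case \ai7; the others differ only in the amount of bookkeeping. Since the order on $Q^X$ is pointwise, it suffices to fix $x\in X$ and establish the appropriate inequality between the two convolution terms evaluated at $x$. For \ai7 the instance in $Q^X$ is $(f\pconv g)\sconv(h\pconv j)\le (f\sconv h)\pconv(g\sconv j)$, and I would first unfold both sides completely.

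Expanding the outer and inner convolutions and using that $\scomp$ and $\pcomp$ preserve suprema in both arguments (the defining property of a bi-prequantale), I would pull every supremum to the front, relying on the lattice identity $\Sup_a\Sup_b=\Sup_{a,b}$. Writing $x$ as split by the outer $\sconv$ into $y,z$ with $\srel^x_{yz}$, then $y$ into $t,u$ with $\prel^y_{tu}$ and $z$ into $v,w$ with $\prel^z_{vw}$, the left-hand side at $x$ becomes
\[
\Sup\bigl\{(f\,t\pcomp g\,u)\scomp(h\,v\pcomp j\,w)\mid \exists y,z.\ \prel^y_{tu}\wedge\srel^x_{yz}\wedge\prel^z_{vw}\bigr\}.
\]
The summand depends only on the leaves $t,u,v,w$ and not on the intermediate nodes $y,z$, so the supremum over constrained tuples $(t,u,v,w,y,z)$ collapses to one over $(t,u,v,w)$ subject to the existential premise of \ri7. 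Symmetrically, the right-hand side at $x$ becomes
\[
\Sup\bigl\{(f\,t\scomp h\,v)\pcomp(g\,u\scomp j\,w)\mid \exists y,z.\ \srel^y_{tv}\wedge\prel^x_{yz}\wedge\srel^z_{uw}\bigr\},
\]
indexed by the conclusion of \ri7.

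Next I would combine the two hypotheses leaf-by-leaf. Fixing leaves $t,u,v,w$ that witness the left-hand premise, \ri7 in $X$ supplies witnesses $y,z$ making the right-hand condition hold for the \emph{same} leaves, so the summand $(f\,t\scomp h\,v)\pcomp(g\,u\scomp j\,w)$ genuinely occurs in the right-hand family. Applying \ai7 in $Q$ with $a=f\,t$, $b=g\,u$, $c=h\,v$, $d=j\,w$ bounds the left-hand summand above by that right-hand summand. Hence every left-hand summand lies below some member of the right-hand family, and monotonicity of $\Sup$ gives the desired pointwise inequality; the empty-index case is automatic since an empty supremum is $\bot$.

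The remaining six cases follow the identical three-step pattern---unfold, flatten the nested suprema, then transfer the index via \ri{k} and bound the summand via \ai{k}---but with fewer quantifiers: \ai1 and \ai2 have no intermediate node at all, while \ai3--\ai6 have a single intermediate node on each side. I expect the only real obstacle to be the bookkeeping in the flattening step: one must name leaves and intermediate nodes on the two sides so that the premise and conclusion of \ri{k} align precisely with the two index sets, and must check in each case that the summand is independent of the intermediate nodes, so that the collapse to the existentially quantified index set is valid. Once the indices are matched, the prequantale sup-preservation and \ai{k} in $Q$ close the argument mechanically.
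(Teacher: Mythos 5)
Your argument is correct and is essentially the paper's own proof: unfold both convolutions at a point $x$, use sup-preservation of $\scomp$ and $\pcomp$ to flatten the nested suprema into a single supremum indexed by the existential premise of (\ri{k}), then transfer indices via (\ri{k}) and bound each summand via (\ai{k}) in $Q$. The paper presents the flattened-supremum comparison as a single chain of equalities and one inequality rather than spelling out the ``each left summand is dominated by some right summand'' step, but the content is identical.
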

\begin{proof} Let
  $\exists y,z.\ \prel^y_{tu} \wedge \srel^x_{yz} \wedge
    \prel^z_{vw}\Rightarrow \exists y,z.\ \srel^y_{tv} \wedge
    \prel^x_{yz} \wedge \srel^z_{uw}$ in $X$
  and
  $(w\pcomp x)\scomp (y\pcomp z) \le (w\scomp y)\pcomp (x\scomp z)$ in
  $Q$. Then
   \begin{align*}
     \left((f\pconv g) \sconv (h\pconv k)\right)\, x
&= \Sup \left\{ \Sup \left\{f\, t\pcomp g\, u \mid \prel^y_{tu}\right\}
  \scomp \Sup \left\{h\, v\pcomp k\, w\mid \prel^z_{vw}\right\} \,
  \middle|\, \srel^x_{yz}\right\}\\
&= \Sup\left\{ (f\,
  t \pcomp g\, u) \scomp (h\, v \pcomp k\, w)\,\middle|\, \exists y,z.\
  \prel^y_{tu}\wedge \srel^x_{yz}\wedge \prel^z_{vw}\right\}\\
&\le \Sup\left\{ (f\,
  t \scomp h\, v) \pcomp (g\, u \scomp k\, w)\, \middle|\, \exists y,z.\
  \srel^y_{tv}\wedge \prel^x_{yz}\wedge \srel^z_{uw}\right\}\\
&= \Sup \left\{ \Sup \left\{f\, t\scomp h\, v \mid \srel^y_{tv}\right\}
  \pcomp \Sup \left\{g\, u\scomp k\, w\mid \srel^z_{uw}\right\} \,
  \middle|\, \prel^x_{yz}\right\}\\
&= \left((f\sconv h)\pconv (g\sconv k)\right)\, x
   \end{align*}
   proves (\ref{eq:i7}) in $Q^X$. Alternatively, using trees,
\begin{eqnarray*}
\begin{tikzcd}[column sep= -.2cm, row sep=.2cm]
&&&\sconv \ar[dash,dll]\ar[dash,drr]&&&\\
&\pconv \ar[dash,dl]\ar[dash,dr]&&&& \pconv \ar[dash,dl]\ar[dash,dr]&\\
f&& g && h && k\end{tikzcd}
&=&
\lambda x.\ \Sup
\left\{
\begin{tikzcd}[column sep= -.2cm, row sep=.2cm]
&&&\scomp \ar[dash,dll]\ar[dash,drr]&&&\\
&\pcomp \ar[dash,dl]\ar[dash,dr]&&&& \pcomp \ar[dash,dl]\ar[dash,dr]&\\
f\, t&& g\, u && h\, v && k\, w
\end{tikzcd}
\, \middle|\,
\begin{tikzcd}[column sep= -.2cm, row sep=.2cm]
&&&x \ar[dash,line width=.8pt,orange,dll]\ar[dash, line width=.8pt,orange,drr]&&&\\
&\circ \ar[dash,line width=.8pt,teal,dl]\ar[dash,line width=.8pt,teal,dr]&&&& \circ \ar[dash,line width=.8pt,teal,dl]\ar[dash,line width=.8pt,teal,dr]&\\
t&& u && v && w
\end{tikzcd}
\right\} \\
&\le&
\lambda x.\ \Sup
\left\{
\begin{tikzcd}[column sep= -.2cm, row sep=.2cm]
&&&\pcomp \ar[dash,dll]\ar[dash,drr]&&&\\
&\scomp \ar[dash,dl]\ar[dash,dr]&&&& \scomp \ar[dash,dl]\ar[dash,dr]&\\
f\, t&& h\, v && g\, u && k\, w
\end{tikzcd}
\, \middle|\,
\begin{tikzcd}[column sep= -.2cm, row sep=.2cm]
&&&x \ar[dash,line width=.8pt,teal,dll]\ar[dash, line width=.8pt,teal,drr]&&&\\
&\circ \ar[dash,line width=.8pt,orange,dl]\ar[dash,line width=.8pt,orange,dr]&&&& \circ \ar[dash,line width=.8pt,orange,dl]\ar[dash,line width=.8pt,orange,dr]&\\
t&& v && u && w
\end{tikzcd}
\right\}\\
&=&
\begin{tikzcd}[column sep= -.2cm, row sep=.2cm]
&&&\pconv \ar[dash,dll]\ar[dash,drr]&&&\\
&\sconv \ar[dash,dl]\ar[dash,dr]&&&& \sconv \ar[dash,dl]\ar[dash,dr]&\\
f&& h && g && k\end{tikzcd}.
\end{eqnarray*}
The proofs for the small interchange laws are similar, and left to the
reader. In particular, the proof of (\ref{eq:i3}) from (\ref{eq:ri3})
and that of (\ref{eq:i4}) from (\ref{eq:ri4}) are related by
\emph{opposition}: one can be obtained from the other by swapping the
operands of $\sconv$, $\pconv$, $\scomp$ and $\pcomp$ and the lower
indices of $\srel$ and $\prel$, that is, by reversing the algebraic
syntax trees in $Q$ and the trees in $X$ memoised in the relational
interchange laws. The same holds for the proof of (\ref{eq:i5}) from
(\ref{eq:ri5}) and that of (\ref{eq:i6}) from (\ref{eq:ri6}).
\end{proof}

Next we show that, under mild nondegeneracy conditions on
$Q$, algebraic interchange laws in $Q^X$ force relational
interchange laws in $X$, and that under mild nondegeneracy
conditions on $X$, algebraic interchange laws in $Q^X$ force algebraic
interchange laws in $Q$.  Yet we begin with a definition and some
lemmas.

For all $x,y\in X$ and $a\in Q$, we define the function
$\delta^a_x:X\to Q$ by
\begin{equation*}
  \delta^a_x\, y =
\begin{cases}
  a, & \text{ if } x= y,\\
0 & \text{ otherwise}
\end{cases}
\end{equation*}
and the function $(-\mid -):Q\to\bool\to Q$, for all $a\in Q$ and
$P:\bool$, by
\begin{equation*}
  (a\mid P) =
  \begin{cases}
    a, &\text{ if } P,\\
0, & \text{ otherwise}.
  \end{cases}
\end{equation*}
Obviously, $\delta^a_x\, y = (a\mid x = y)$.
\begin{lemma}\label{P:delta-conv-props}
Let $X$ be a relational bi-magma and $Q$ a bi-prequantale. For all
$a,b,c,d\in Q$ and $x,t,u,v,w\in X$,
\begin{enumerate}[label=(\arabic*)]
\item $\left(\delta^a_u \sconv \delta^b_v\right) x = \left(a \scomp b\mid
  \srel^x_{uv}\right)$,
\item $\left(\delta^a_u \sconv \left(\delta^b_v \pconv
      \delta^c_w\right)\right) x = \left(a
  \scomp (b \pcomp c) \mid
  \exists y.\ \srel^x_{uy}\wedge \prel^y_{vw}\right) $,
\item $\left(\left(\delta^a_u
  \pconv \delta^b_v\right)\sconv \delta^c_w\right) x = \left((a \pcomp b)\scomp c\mid
  \exists y.\ \prel^y_{uv}\wedge \srel^x_{yw}\right) $,
\item $\left(\left(\delta^a_t\pconv \delta^b_u\right)\sconv \left(\delta^c_v \pconv
  \delta^d_w\right)\right) x = \left((a\pcomp b)\scomp (c\pcomp d)\mid \exists y,z.\
  \prel^y_{tu}\wedge \srel^x_{yz}\wedge \prel^z_{vw}\right) $,
\item properties (1)--(4) hold with colours interchanged.
\end{enumerate}
\end{lemma}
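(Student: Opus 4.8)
The plan is to reduce everything to part (1) together with a single structural fact about prequantales: the bottom element $0 = \Sup\emptyset$ of the lattice annihilates both compositions. Since $\scomp$ preserves sups in each argument, $a\scomp 0 = a\scomp\Sup\emptyset = \Sup\{a\scomp b\mid b\in\emptyset\} = \Sup\emptyset = 0$, and symmetrically $0\scomp a = 0$; the same holds for $\pcomp$. This annihilation is exactly what forces a delta function to select a single index and collapse the supremum in a convolution to one potentially nonzero term, and it is the only ingredient beyond unfolding definitions. I therefore expect no genuine obstacle here; the only real care-point is the bookkeeping of the guarding side conditions as they pass through the nested suprema.

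First I would prove (1), and, by the identical argument with the two colours swapped, its teal analogue $\left(\delta^a_u\pconv\delta^b_v\right)x = \left(a\pcomp b\mid\prel^x_{uv}\right)$. Unfolding the definition gives $\left(\delta^a_u\sconv\delta^b_v\right)x = \Sup_{y,z:\srel^x_{yz}}\delta^a_u\,y\scomp\delta^b_v\,z$. By the definition of $\delta$ and the annihilation fact, the summand equals $a\scomp b$ when $y=u$ and $z=v$ and is $0$ otherwise, so the unique possibly nonzero term sits at $(y,z)=(u,v)$ and appears in the index set precisely when $\srel^x_{uv}$. The remaining $0$-terms are absorbed by the sup, yielding $a\scomp b$ if $\srel^x_{uv}$ and $0$ otherwise, that is $\left(a\scomp b\mid\srel^x_{uv}\right)$.

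For the compound identities (2)--(4) I would isolate the collapse as a one-line observation usable throughout: for any $f:X\to Q$ the summand $\delta^a_u\,y\scomp f\,z$ vanishes unless $y=u$, so $\left(\delta^a_u\sconv f\right)x = \Sup_{z:\srel^x_{uz}}a\scomp f\,z$, and symmetrically $\left(f\sconv\delta^b_v\right)x = \Sup_{y:\srel^x_{yv}}f\,y\scomp b$. Then (2) follows by rewriting the inner convolution via the teal analogue of (1) as $\left(\delta^b_v\pconv\delta^c_w\right)z = \left(b\pcomp c\mid\prel^z_{vw}\right)$, applying the left collapse, and evaluating $\Sup_{z:\srel^x_{uz}}\left(a\scomp(b\pcomp c)\mid\prel^z_{vw}\right)$: each summand is either the constant $a\scomp(b\pcomp c)$ or $0$ (the latter by annihilation of $a\scomp 0$), so the sup is that constant exactly when some $z$ satisfies $\srel^x_{uz}\wedge\prel^z_{vw}$, which after renaming is $\left(a\scomp(b\pcomp c)\mid\exists y.\ \srel^x_{uy}\wedge\prel^y_{vw}\right)$. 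Identity (3) is the mirror image using the right collapse, and (4) combines both collapses: rewriting both outer arguments via the teal analogue of (1), the summand $\left(a\pcomp b\mid\prel^y_{tu}\right)\scomp\left(c\pcomp d\mid\prel^z_{vw}\right)$ equals $(a\pcomp b)\scomp(c\pcomp d)$ when both $\prel^y_{tu}$ and $\prel^z_{vw}$ hold and is $0$ otherwise (annihilation on either side), so the sup over $\srel^x_{yz}$ produces the stated guarded product.

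Finally, (5) requires no new work: swapping the two colours uniformly, $\scomp\leftrightarrow\pcomp$, $\sconv\leftrightarrow\pconv$ and $\srel\leftrightarrow\prel$, carries each proof of (1)--(4) to a proof of its colour-interchanged form, since every fact invoked---sup preservation, annihilation of $0$, and the definition of $\delta$---is symmetric in the two colours. The teal analogue of (1) used above is thus established by its own colour-swapped argument rather than by appeal to (5), so there is no circularity.
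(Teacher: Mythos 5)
Your proof is correct and follows essentially the same route as the paper's: unfold the convolutions and use sup-preservation of $\scomp$ and $\pcomp$ (hence annihilation by $0 = \Sup\emptyset$) to collapse each delta-guarded supremum to a single conditional term. The only cosmetic difference is that you evaluate the nested convolutions layer by layer via an explicit collapse lemma, whereas the paper flattens all the suprema at once by reusing the computation from the proof of Proposition~\ref{P:correspondence1}; both arguments rest on exactly the same two facts.
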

\begin{proof}
  For (4), we use the proof of Proposition~\ref{P:correspondence1} to
  calculate
  \begin{align*}
    ((\delta^a_t\pconv \delta^b_u)\sconv (\delta^c_v \pconv
  \delta^d_w))\, x
&= \Sup\{(\delta^a_t\, x_1 \pcomp \delta^b_u\,
  x_2)\scomp (\delta^c_v\, x_3 \pcomp \delta^d_w\, x_4)\mid \exists
  y,z. \ \prel^y_{x_1x_2} \wedge \srel^x_{yz}\wedge
  \prel^z_{x_3x_4}\}\\
&= ((a\pcomp b)\scomp (c\pcomp d)\mid \exists y,z.\
  \prel^y_{tu}\wedge \srel^x_{yz}\wedge \prel^z_{vw})\\
&\le (a\pcomp b)\scomp (c\pcomp d).
  \end{align*}
Alternatively, using trees,
\begin{eqnarray*}
\begin{tikzcd}[column sep= -.2cm, row sep=.2cm]
&&&\sconv \ar[dash,dll]\ar[dash,drr]&&&\\
&\pconv \ar[dash,dl]\ar[dash,dr]&&&& \pconv \ar[dash,dl]\ar[dash,dr]&\\
\delta^a_t&& \delta^b_u && \delta^c_v&& \delta^d_w\end{tikzcd}
&=&
\lambda x.\ \Sup
\left\{
\begin{tikzcd}[column sep= -.2cm, row sep=.2cm]
&&&\scomp \ar[dash,dll]\ar[dash,drr]&&&\\
&\pcomp \ar[dash,dl]\ar[dash,dr]&&&& \pcomp \ar[dash,dl]\ar[dash,dr]&\\
\delta^a_t\, x_1 && \delta^b_u\, x_2 && \delta^c_v\, x_3 &&
\delta^d_w\, x_4
\end{tikzcd}
\, \middle|\,
\begin{tikzcd}[column sep= -.2cm, row sep=.2cm]
&&&x \ar[dash,line width=.8pt,orange,dll]\ar[dash, line width=.8pt,orange,drr]&&&\\
&\circ \ar[dash,line width=.8pt,teal,dl]\ar[dash,line width=.8pt,teal,dr]&&&& \circ \ar[dash,line width=.8pt,teal,dl]\ar[dash,line width=.8pt,teal,dr]&\\
x_1&& x_2 && x_3 && x_4
\end{tikzcd}
\right\}\\
&=&
\lambda x.\
\left(
\begin{tikzcd}[column sep= -.2cm, row sep=.2cm]
&&&\scomp \ar[dash,dll]\ar[dash,drr]&&&\\
&\pcomp \ar[dash,dl]\ar[dash,dr]&&&& \pcomp \ar[dash,dl]\ar[dash,dr]&\\
a && b && c && d
\end{tikzcd}
\, \middle|\,
\begin{tikzcd}[column sep= -.2cm, row sep=.2cm]
&&&x \ar[dash,line width=.8pt,orange,dll]\ar[dash, line width=.8pt,orange,drr]&&&\\
&\circ \ar[dash,line width=.8pt,teal,dl]\ar[dash,line width=.8pt,teal,dr]&&&& \circ \ar[dash,line width=.8pt,teal,dl]\ar[dash,line width=.8pt,teal,dr]&\\
t&& u && v && w
\end{tikzcd}
\right)\\
&\le &
\begin{tikzcd}[column sep= -.2cm, row sep=.2cm]
&&&\scomp \ar[dash,dll]\ar[dash,drr]&&&\\
&\pcomp \ar[dash,dl]\ar[dash,dr]&&&& \pcomp \ar[dash,dl]\ar[dash,dr]&\\
a && b && c && d
\end{tikzcd}.
\end{eqnarray*}

All other proofs are similar, and left to the reader. In particular,
(3) follows from (2) by opposition.
\end{proof}

The next lemma shows that the trees memoised by the relational
interchange laws can be expressed in terms of deltas and convolution
in the presence of the following mild nondegeneracy conditions on $Q$:
\begin{gather}
\exists a, b \in Q.\ a \scomp b \neq 0,  \tag{$D_1$}\label{eq:d1}\\
\exists a,b,c \in Q.\ a\scomp (b\pcomp c) \neq 0,
\tag{$D_2$}\label{eq:d2}\\
\exists a,b,c \in Q.\ (a\pcomp b)\scomp c \neq 0,
\tag{$D_3$}\label{eq:d3}\\
\exists a,b,c,d \in Q.\ (a\pcomp b)\scomp (c\pcomp d) \neq 0. \tag{$D_4$}\label{eq:d4}
\end{gather}
\begin{lemma}\label{P:delta-rel}
Let $X$ be a relational bi-magma and $Q$ a
bi-prequantale. Then
\begin{enumerate}[label=(\arabic*)]
 \item $\srel^x_{yz} \Rightarrow \left(\delta_y^a \sconv
  \delta_z^b\right) x = a\scomp b$, and the converse implication follows from (\ref{eq:d1}),
\item
  $\exists y.\ \srel^x_{uy}\wedge \prel^y_{vw} \Rightarrow
  \left(\delta^a_u \sconv \left(\delta^b_v \pconv
      \delta^c_w\right)\right) x = a\scomp (b\pcomp c)$,
  and the converse implication follows from (\ref{eq:d2}),
\item
  $\exists y.\ \prel^y_{uv}\wedge \srel^x_{yw} \Rightarrow \left(\left(\delta^a_u \pconv\delta^b_v\right) \sconv
    \delta^c_w\right) x = (a\pcomp
  b)\scomp c$,
  and the converse implication follows from (\ref{eq:d3}),
\item
  $\exists y,z.\ \prel^y_{tu}\wedge \srel^x_{yz}\wedge \prel^z_{vw}
  \Rightarrow
  \left(\left(\delta^a_t\pconv \delta^b_u\right)\sconv
    \left(\delta^c_v \pconv \delta^d_w\right)\right) x = (a\pcomp b)\scomp(c\pcomp d)$,
  and the converse implication follows from (\ref{eq:d4}),
\item properties (1)--(5) hold with colours interchanged, including in the
  degeneracy conditions.
\end{enumerate}
\end{lemma}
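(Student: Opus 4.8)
The plan is to reduce everything to Lemma~\ref{P:delta-conv-props} together with the defining clause of the notation $(-\mid-)$, namely that $(c\mid P)$ equals $c$ when $P$ holds and $0$ otherwise. Each of the four equations in the present lemma concerns exactly the convolution expression whose value Lemma~\ref{P:delta-conv-props} already computes as $(c\mid P)$, where $c$ is the relevant $Q$-composite and $P$ the relevant existential relational statement. So I would first rewrite, say for part~(4), the left-hand side $\left(\left(\delta^a_t\pconv \delta^b_u\right)\sconv \left(\delta^c_v \pconv \delta^d_w\right)\right) x$ as $\left((a\pcomp b)\scomp(c\pcomp d)\mid P\right)$ with $P$ abbreviating $\exists y,z.\ \prel^y_{tu}\wedge \srel^x_{yz}\wedge \prel^z_{vw}$, directly from Lemma~\ref{P:delta-conv-props}(4), and likewise for (1)--(3) using parts (1)--(3) of that lemma.

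The forward implications are then immediate: assuming $P$, the clause $(c\mid P)=c$ gives the claimed equation, with no hypothesis on the weights. For the converse implications I would argue through the same evaluation. Suppose the stated equation holds but $P$ fails; then $(c\mid P)=0$, so the equation forces $c=0$. The non-degeneracy hypothesis is exactly what forbids this: for part~$k$, condition $(D_k)$ supplies weights in $Q$ making the composite $c$ nonzero, and instantiating the equation at those witnessing weights yields $c\neq 0$, a contradiction; hence $P$ must hold. This is the only place the non-degeneracy conditions enter, and it is the only genuine subtlety — the equations are vacuously satisfiable with $c=0$ regardless of the relational pattern, so some positivity of the composite is truly needed, and it must be the positivity of the particular composite attached to that case.

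Finally I would cut the bookkeeping by symmetry. Part~(3) follows from part~(2) by opposition, exactly as in the proof of Lemma~\ref{P:delta-conv-props}, by reversing the syntax trees in $Q$ and the memoised trees in $X$; and the colour-interchanged statement (item~(5)) follows by swapping $\scomp\leftrightarrow\pcomp$, $\sconv\leftrightarrow\pconv$ and $\srel\leftrightarrow\prel$ throughout, which also replaces each $(D_k)$ by its colour-dual. The main (mild) obstacle is purely organisational: keeping the four composites and their four existential patterns aligned with the correct $(D_k)$ across all cases, so that each converse invokes the non-degeneracy condition that actually witnesses its own composite rather than a neighbouring one.
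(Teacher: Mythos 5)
Your proposal is correct and follows essentially the same route as the paper: both directions are read off from Lemma~\ref{P:delta-conv-props} via the $(-\mid-)$ notation, with the forward implication needing no hypothesis on the weights and the converse using the relevant $(D_k)$ to rule out the composite being $0$ when the relational pattern fails, and with (3) obtained from (2) by opposition and (5) by swapping colours. The paper phrases the converse as a direct chain of equalities from $0\neq(a\pcomp b)\scomp(c\pcomp d)$ rather than your contrapositive, but this is the same argument.
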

\begin{proof}

  For (4), suppose
  $\exists y,z.\ \prel^y_{tu}\wedge \srel^x_{yz}\wedge \prel^z_{vw}$.
Then
\begin{equation*}
  \left(\left(\delta^a_t\pconv \delta^b_u\right)\sconv
    \left(\delta^c_v \pconv \delta^d_w\right)\right) x =
\left((a\pcomp b)\scomp (c\pcomp d)\mid \exists y,z.\
  \prel^y_{tu}\wedge \srel^x_{yz}\wedge \prel^z_{vw}\right) = (a\pcomp b)\scomp(c\pcomp d)
\end{equation*}
by Lemma~\ref{P:delta-conv-props}(4).
For the converse implication,
  \begin{equation*}
    0\neq (a\pcomp b)\scomp(c\pcomp d) = \left((\delta^a_t\pconv
  \delta^b_u)\scomp (\delta^c_v \pconv \delta^d_w)\right)\, x = \left((a\pcomp
  b)\scomp(c\pcomp d) \mid \exists y,z.\
  \prel^y_{tu}\wedge \srel^x_{yz}\wedge \prel^z_{vw}\right)
  \end{equation*}
by Lemma~\ref{P:delta-conv-props}(4) and therefore $\exists y,z.\
  \prel^y_{tu}\wedge \srel^x_{yz}\wedge \prel^z_{vw}$.

All other proofs are similar, and left to the reader. (3)
follow from (2) by opposition.
\end{proof}

Intuitively, convolutions of delta functions represent trees in $X$ in
the function space $Q^X$ by
creating their ``shadows'' in $Q$---which requires nondegeneracy. The
case of Lemma~\ref{P:delta-rel}(4) and its dual are shown in
Figure~\ref{fig:delta-diagram}.

\begin{figure}[t]
  \centering

\begin{tikzcd}[column sep= .2cm, row sep=.2cm]
&&&\scomp \ar[dash,dll]\ar[dash,drr]&&&\\
&\pcomp \ar[dash,dl]\ar[dash,dr]&&&& \pcomp \ar[dash,dl]\ar[dash,dr]&\\
a && b && c && d\\
&&&x\ar[dash,line width=.8pt,orange,uuu, "\sconv" description] \ar[dash,line width=.8pt,orange,dll]\ar[dash, line width=.8pt,orange,drr]&&&\\
&\circ\ar[dash,line width=.8pt,teal,uuu, "\pconv" description] \ar[dash,line
width=.8pt,teal,dl]\ar[dash,line width=.8pt,teal,dr]&&&&
\circ\ar[dash,line width=.8pt,teal,uuu, "\pconv" description] \ar[dash,line width=.8pt,teal,dl]\ar[dash,line width=.8pt,teal,dr]&\\
t\arrow[uuu, "\delta^a_t" description]&& u\ar[crossing over, uuu,
"\delta^b_u" description] && v\ar[crossing over, uuu, "\delta^c_v"
description] && w\ar[uuu, "\delta^d_w" description]
\end{tikzcd}
\qquad\qquad
\begin{tikzcd}[column sep= .2cm, row sep=.2cm]
&&&\pcomp \ar[dash,dll]\ar[dash,drr]&&&\\
&\scomp \ar[dash,dl]\ar[dash,dr]&&&& \scomp \ar[dash,dl]\ar[dash,dr]&\\
a && c && b && d\\
&&&x\ar[dash,line width=.8pt,teal,uuu, "\pconv" description] \ar[dash,line width=.8pt,teal,dll]\ar[dash, line width=.8pt,teal,drr]&&&\\
&\circ\ar[dash,line width=.8pt,orange,uuu, "\sconv" description] \ar[dash,line
width=.8pt,orange,dl]\ar[dash,line width=.8pt,orange,dr]&&&&
\circ\ar[dash,line width=.8pt,orange,uuu, "\sconv" description] \ar[dash,line width=.8pt,orange,dl]\ar[dash,line width=.8pt,orange,dr]&\\
t\arrow[uuu, "\delta^a_t" description]&& v\ar[crossing over, uuu,
"\delta^c_v" description] && u\ar[crossing over, uuu, "\delta^b_u"
description] && w\ar[uuu, "\delta^d_w" description]
\end{tikzcd}

\caption{$\left(\delta^a_t\pconv \delta^b_u\right)\sconv
  \left(\delta^c_v \pconv \delta^d_w\right)\, x$
  representing
  $\exists y,z.\ \prel^y_{tu}\wedge \srel^x_{yz}\wedge \prel^z_{vw}$
  using $(a\pcomp b)\scomp(c\pcomp d)$ in
  Lemma~\ref{P:delta-rel}(4) together with its dual.}
  \label{fig:delta-diagram}
\end{figure}

We are now prepared to prove our second correspondence result, namely
that algebraic interchange laws in $Q^X$ force relational
interchange laws in $X$ subject to mild nondegeneracy conditions on $Q$.
\begin{proposition}\label{P:correspondence2}
  Let $X$ be a relational bi-magma and $Q$ a bi-prequantale. Then
  $(D_{\lceil\frac{k}{2}\rceil})$ in $Q$ and $(I_k)$ in $Q^X$ imply $(RI_k)$ in $X$, for
  each $1\le k\le 7$.
\end{proposition}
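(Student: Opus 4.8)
The plan is to run the argument of Proposition~\ref{P:correspondence1} in the reverse direction, reading relational facts in $X$ off \emph{nonzero} values of convolutions of delta functions. Write $L_k\le M_k$ for the law $(I_k)$, and let $\varphi_k$ and $\psi_k$ denote the antecedent and consequent of $(RI_k)$. The engine is Lemma~\ref{P:delta-conv-props}: a convolution of deltas shaped like $L_k$, evaluated at $x$, equals the guarded value $(L_k\mid\varphi_k)$, while the colour-interchanged clause~(5) evaluates the $M_k$-shaped convolution at $x$ to $(M_k\mid\psi_k)$. The point of the nondegeneracy hypotheses is to make $L_k$ nonzero for a suitable choice of weights, so that the inequality $L_k\le M_k$ transported to $Q^X$ forces the guard $\psi_k$ to hold.

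I treat $k=7$ in full; the remaining six cases are identical after selecting the matching clause of Lemma~\ref{P:delta-conv-props} and its colour dual, and they pair up under the opposition symmetry already exploited in Proposition~\ref{P:correspondence1}. Assume $(I_7)$ in $Q^X$ and $(D_4)$ in $Q$, and fix $a,b,c,d$ with $(a\pcomp b)\scomp(c\pcomp d)\neq 0$. Suppose $\varphi_7$, that is, $\exists y,z.\ \prel^y_{tu}\wedge\srel^x_{yz}\wedge\prel^z_{vw}$, holds. Instantiating $(I_7)$ at the delta functions $\delta^a_t,\delta^b_u,\delta^c_v,\delta^d_w$ and evaluating at $x$ gives
\begin{align*}
(a\pcomp b)\scomp(c\pcomp d)
&= \left((\delta^a_t\pconv\delta^b_u)\sconv(\delta^c_v\pconv\delta^d_w)\right) x\\
&\le \left((\delta^a_t\sconv\delta^c_v)\pconv(\delta^b_u\sconv\delta^d_w)\right) x
= \left((a\scomp c)\pcomp(b\scomp d)\,\middle|\,\exists y,z.\ \srel^y_{tv}\wedge\prel^x_{yz}\wedge\srel^z_{uw}\right),
\end{align*}
where the first equality is Lemma~\ref{P:delta-conv-props}(4) (using $\varphi_7$), the inequality is $(I_7)$ applied pointwise, and the last equality is Lemma~\ref{P:delta-conv-props}(5) applied to clause~(4). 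The leftmost term is nonzero by the choice of weights, hence so is the guarded value on the right; since $(m\mid P)$ can be nonzero only when $P$ holds, we conclude $\exists y,z.\ \srel^y_{tv}\wedge\prel^x_{yz}\wedge\srel^z_{uw}$, which is exactly $\psi_7$. This establishes $(RI_7)$.

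For general $k$ the same three ingredients recur: the forward clause of Lemma~\ref{P:delta-conv-props} matching the shape of $L_k$ (part~(1) for $k\in\{1,2\}$, parts~(2),(3) for $k\in\{3,4,5,6\}$, part~(4) for $k=7$); the law $(I_k)$ in $Q^X$ to pass from $L_k$ to $M_k$; and the colour dual of that clause applied to $M_k$ to expose $\psi_k$. The hypothesis $(D_{\lceil k/2\rceil})$ enters in exactly one place, namely to pick weights making the left-hand product $L_k$ nonzero (binary for $k\le 2$, ternary for $3\le k\le 6$, quaternary for $k=7$), whereupon nonzeroness propagates across $(I_k)$ and forces the guard. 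The main obstacle is purely bookkeeping, not conceptual: one must match each convolution tree to the correct clause of Lemma~\ref{P:delta-conv-props} and to its colour dual, and track how the leaves $t,u,v,w$ (and the weights $a,b,c,d$) are permuted across $(I_k)$, since $\psi_k$ reads them off the leaves of $M_k$ in that permuted order. Opposition halves the labour, relating the case of $(I_3)$ to that of $(I_4)$ and that of $(I_5)$ to that of $(I_6)$ by reversing every tree and interchanging the two ternary nondegeneracy conditions.
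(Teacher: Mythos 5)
Your proof is correct and follows essentially the same route as the paper: instantiate $(I_k)$ in $Q^X$ at delta functions, use the nondegeneracy condition to make the left-hand product nonzero, and read the consequent relational guard off the nonzero guarded value via Lemma~\ref{P:delta-conv-props} and its colour dual (the paper packages the two directions as Lemma~\ref{P:delta-rel}). The treatment of the remaining cases by matching clauses and opposition also mirrors the paper's remarks.
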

\begin{proof}
  Suppose $(a\pcomp b)\scomp (c\pcomp d) \neq 0$ for some
  $a,b,c,d\in Q$ and
  $\left(\delta^a_t\pconv \delta^b_u\right)\sconv \left(\delta^c_v\pconv \delta^d_w\right)
  \le \left(\delta^a_t\sconv \delta^c_v\right)\pconv \left(\delta^b_u\sconv
  \delta^d_w\right)$.
  Then, using Lemma~\ref{P:delta-rel}(4),
\begin{align*}
\exists y,z.\ \prel^y_{tu} \wedge \srel^x_{yz} \wedge \prel^z_{vw}
&\Leftrightarrow 0 \neq \left(\left(\delta^a_t\pconv \delta^b_u\right)\sconv
 \left(\delta^c_v\pconv \delta^d_w\right)\right) x\\
&\Rightarrow 0 \neq \left(\left(\delta^a_t\sconv \delta^c_v\right)\pconv \left(\delta^b_u\sconv
  \delta^d_w\right)\right) x\\
&\Leftrightarrow \exists y,z.\ \srel^y_{tv} \wedge \prel^x_{yz}
  \wedge \srel^z_{uw}
\end{align*}
proves $(RI_7)$. The remaining proofs are similar. Those for $(RI_3)$
and $(RI_4)$ and those for $(RI_5)$ and $(RI_6)$ are related by
opposition.\qedhere
\end{proof}
Finally, we prove the third correspondence result for interchange
laws, namely that algebraic interchange laws on $Q^X$ force those on
$Q$, subject to the following mild degeneracy conditions on $X$:
\begin{gather}
\exists x,u,v.\ \srel^x_{uv}, \tag{$RD_1$} \label{eq:rd1}\\
\exists x,y,u,v,w.\ \srel^x_{uy} \wedge \prel^y_{vw},
\tag{$RD_2$}\label{eq:rd2}\\
\exists x,y,u,v,w.\ \prel^y_{uv} \wedge \srel^x_{yw},
\tag{$RD_3$}\label{eq:rd3}\\
\exists x,y,z,t,u,v,w.\ \prel^y_{tu} \wedge \srel^x_{yz} \wedge
\prel^z_{vw}.
\tag{$RD_4$}\label{eq:rd4}
\end{gather}
\begin{proposition}\label{P:correspondence3}.
  Let $X$ be a relational bi-magma and $Q$ a bi-prequantale. Then
  $(RD_{\lceil\frac{k}{2}\rceil})$ in $X$ and $(I_k)$ in $Q^X$ imply
  $(I_k)$ in $Q$, for each $1\le k\le 7$.

\end{proposition}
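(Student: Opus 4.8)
The plan is to read the correspondence of Proposition~\ref{P:correspondence2} in the remaining direction. Instead of feeding nondegeneracy of $Q$ into the converse clauses of Lemma~\ref{P:delta-rel}, I would feed the relational degeneracy of $X$ directly into the evaluation formulas of Lemma~\ref{P:delta-conv-props}. The mechanism is uniform: each relational degeneracy condition supplies a single point $x\in X$ at which the guard attached to the \emph{left-hand} convolution of delta functions is satisfied, so that the corresponding guarded scalar $(\cdot\mid\cdot)$ collapses to its bare algebraic term. The right-hand guarded scalar is then bounded above by its own term for free, since $(a\mid P)\le a$ always. Chaining these through the instance of $(I_k)$ in $Q^X$ evaluated at $x$ produces exactly $(I_k)$ in $Q$.

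Concretely, for $k=7$ I would fix, by $(\ref{eq:rd4})$, witnesses $x,y,z,t,u,v,w\in X$ with $\prel^y_{tu}\wedge\srel^x_{yz}\wedge\prel^z_{vw}$, and take arbitrary $a,b,c,d\in Q$. Instantiating $(\ref{eq:i7})$ in $Q^X$ at the delta functions $\delta^a_t,\delta^b_u,\delta^c_v,\delta^d_w$ and evaluating at $x$ gives, by Lemma~\ref{P:delta-conv-props}(4) and its colour-interchanged form,
\begin{align*}
(a\pcomp b)\scomp(c\pcomp d)
&= \left(\left(\delta^a_t\pconv\delta^b_u\right)\sconv\left(\delta^c_v\pconv\delta^d_w\right)\right)x\\
&\le \left(\left(\delta^a_t\sconv\delta^c_v\right)\pconv\left(\delta^b_u\sconv\delta^d_w\right)\right)x\\
&\le (a\scomp c)\pcomp(b\scomp d).
\end{align*}
Here the first equality holds because $y,z$ witness the guard $\exists y,z.\ \prel^y_{tu}\wedge\srel^x_{yz}\wedge\prel^z_{vw}$, the middle inequality is $(\ref{eq:i7})$ in $Q^X$ applied at $x$, and the final inequality merely discards the guard $\exists y,z.\ \srel^y_{tv}\wedge\prel^x_{yz}\wedge\srel^z_{uw}$. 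This is precisely $(\ref{eq:i7})$ in $Q$.

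The remaining six cases follow the same template, each time selecting the relational degeneracy condition whose existential matches the shape of the left-hand side of $(I_k)$: $(\ref{eq:rd1})$ for the binary laws $(\ref{eq:i1})$ and $(\ref{eq:i2})$, and $(\ref{eq:rd2})$ or $(\ref{eq:rd3})$ for the ternary laws according to whether the inner $\pcomp$ sits in the right or the left branch of the outer $\scomp$, invoking the matching clause of Lemma~\ref{P:delta-conv-props} in each case. As in Proposition~\ref{P:correspondence1}, the pairs $(\ref{eq:i3})$/$(\ref{eq:i4})$ and $(\ref{eq:i5})$/$(\ref{eq:i6})$ are related by opposition, so only one of each needs explicit verification. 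I expect no genuine obstacle beyond this bookkeeping: the only subtlety is to keep the roles of the two guards straight — the degeneracy witness must be chosen to satisfy the \emph{left} guard, whereas the right guard is simply dropped — and to confirm that the chosen delta functions place the atoms $a,b,c,d$ in the positions prescribed by Lemma~\ref{P:delta-conv-props}, which is exactly what that lemma records.
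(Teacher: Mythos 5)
Your proposal is correct and follows essentially the same route as the paper: pick a witness point $x$ from the relational degeneracy condition, evaluate the instance of $(I_k)$ in $Q^X$ on delta functions at $x$, and collapse the guarded scalars via Lemma~\ref{P:delta-conv-props} (the paper cites Lemma~\ref{P:delta-rel}(4) for the left-hand equality, but that is itself just the guarded-scalar collapse you describe). The treatment of the remaining cases by matching clauses and opposition also matches the paper.
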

\begin{proof}
  Suppose
  $(\delta^a_t\pconv \delta^b_u)\sconv (\delta^c_v\pconv \delta^d_w)
  \le (\delta^a_t\sconv \delta^c_v)\pconv (\delta^b_u\sconv
  \delta^d_w)$
  for some $a,b,c,d\in Q$ and let
  $\exists y,z.\ \prel^y_{tu} \wedge \srel^x_{yz} \wedge \prel^z_{vw}$
  for some $t,u,v,w\in X$. Then, using Lemma~\ref{P:delta-rel}(4) and
  \ref{P:delta-conv-props}(4),
\begin{equation*}
(a\pcomp b)\scomp (c\pcomp d)
= ((\delta^a_t \pconv\delta^b_u) \sconv (\delta^c_v\pconv
  \delta^d_w))\, x
\le ((\delta^a_t\sconv \delta^c_v)\pconv (\delta^b_u\sconv
  \delta^d_w))\, x
\le (a\scomp c)\pcomp (b\scomp d)
\end{equation*}
proves $(I_7)$ in $Q$.  The remaining
proofs are similar. Those for $(RD_3)$ and $(RD_4)$ and those for
$(RD_5)$ and $(RD_6)$ are related by opposition.
\end{proof}

It may be helpful to check the proofs of
Propositions~\ref{P:correspondence2} and \ref{P:correspondence3} with
the diagrams in Figure~\ref{fig:delta-diagram}.  The degeneracy
conditions are necessary. Indeed, if $Q$ is a singleton set, then so
is $Q^X$ and hence will obey all axioms independently of
$X$. Similarly, if all products on $Q$ vanish, then $Q^X$ will
automatically satisfy many axioms as all convolutions will be trivial.


\section{Further Correspondences}\label{S:further-correspondences}

When the relational bi-magma $X$ and the bi-prequantale $Q$ are both
unital, units can be defined in $Q^X$ as in
Section~\ref{S:summary}:
\begin{align*}
\sid\, x =
  \begin{cases}
    \se, & \text{if $\sE^x$},\\
0, & \text{otherwise}
  \end{cases}
\qquad\text{ and}\qquad
 \pid\, x=
  \begin{cases}
    \pe, & \text{if $\pE^x$},\\
0, & \text{otherwise}.
  \end{cases}
\end{align*}
Theorem~\ref{P:conv-algebras} then shows that $Q^X$ is a unital quantale if
$Q$ is a unital quantale and both compositions are associative and
unital in $X$.  We
restate the three kinds of correspondences for units in the weaker
setting of relational magmas and prequantales.
\begin{proposition}\label{P:correspondence-units}
Let $X$ be a relational magma and $Q$ a prequantale.
\begin{enumerate}
\item If $X$ and $Q$ are unital, then so is $Q^X$.
\item If $Q^X$ is unital and $1\neq 0$ in $Q$, then so is $X$.
\item If $Q^X$ is unital and $E\neq \emptyset$ in $X$ , then so is
  $Q$.
\end{enumerate}
\end{proposition}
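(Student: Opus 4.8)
Let me plan a proof of the three unit correspondences. The setting now involves only a single relational magma $(X,R)$ and a single prequantale $(Q,\le,\bullet)$ (so the colours may be dropped), with $\id_E$ the indicator function built from a set $E$ of relational units and the unit $1$ of $Q$. Throughout I will exploit the delta-function machinery of Lemmas~\ref{P:delta-conv-props} and~\ref{P:delta-rel}, now specialised to the monochrome single-relation case: the key identity is $\left(\delta^a_u \ast \delta^b_v\right) x = \left(a\bullet b\mid R^x_{uv}\right)$.

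\emph{Part (1).} The plan is to verify directly that $\id_E$ is a two-sided unit for convolution in $Q^X$, which is essentially the unital clause of Theorem~\ref{P:conv-algebras} but now without assuming associativity. For the left-unit law I would compute, for any $f:X\to Q$ and $x\in X$,
\begin{equation*}
  (\id_E \ast f)\, x = \Sup_{y,z:R^x_{yz}} \id_E\, y \bullet f\, z = \Sup_{y,z:R^x_{yz},\, E^y} 1\bullet f\, z,
\end{equation*}
using that $\id_E\, y = 1$ exactly when $E^y$ and $0$ otherwise, together with the fact that $0$ is annihilating for $\bullet$ in a prequantale (since $\bullet$ preserves the empty sup). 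The relational-unit axioms of Definition~\ref{D:rel-units}---namely $\exists e.\ R^x_{ex}\wedge E^e$ together with uniqueness $R^x_{ey}\wedge E^e\Rightarrow x=y$---collapse this supremum to the single summand $1\bullet f\, x = f\, x$, giving $\id_E\ast f = f$. The right-unit law is symmetric, using the other two axioms. Note that associativity of $\bullet$ in $Q$ is not needed here, so the argument is valid in the prequantale setting.

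\emph{Parts (2) and (3).} These are converse directions and are where the nondegeneracy hypotheses enter, exactly as in Propositions~\ref{P:correspondence2} and~\ref{P:correspondence3}. For part (2) I would fix $a,b\in Q$ with $a\bullet b\ne 0$ (available once $1\ne 0$, taking $a=b=1$), assume $Q^X$ has unit $u$, and instantiate the unit laws $u\ast \delta^a_x = \delta^a_x$ and $\delta^a_x\ast u = \delta^a_x$ to reconstruct the relational-unit axioms: the support of $u$ will furnish the required set $E$, and evaluating the convolutions via Lemma~\ref{P:delta-conv-props}(1) transfers the existence and uniqueness clauses from $Q^X$ down to $X$, with $a\bullet b\ne 0$ ensuring that a nonzero value in $Q$ detects the relational witness rather than being washed out. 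For part (3), assuming $E\ne\emptyset$ pick $e$ with $E^e$; evaluating the $Q^X$-unit laws at $\delta^a_e$-type functions and reading off the value at the appropriate point, again through Lemma~\ref{P:delta-conv-props}, isolates a genuine unit for $\bullet$ in $Q$.

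The main obstacle I anticipate is part (2): extracting the \emph{uniqueness} halves of the relational-unit axioms from the algebraic unit laws is more delicate than extracting existence, because one must argue that the unit $u$ of $Q^X$ is forced to be (the restriction of) an indicator-style function before the Kronecker deltas can separate points. The cleanest route is probably to test the unit laws against the family $\{\delta^1_x\}_{x\in X}$ and use the separation property $\delta^a_x\, y = (a\mid x=y)$ together with $1\ne 0$ to pin down both the location and multiplicity of witnesses; the nondegeneracy condition $1\ne 0$ is precisely what prevents the degenerate collapse warned about after Proposition~\ref{P:correspondence3}.
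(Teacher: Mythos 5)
Your proposal is correct and follows essentially the same route as the paper: part (1) is the same direct computation that collapses the convolution supremum via the relational unit axioms and annihilation by $0$, and parts (2)--(3) test the $Q^X$-unit laws against Kronecker deltas exactly as the paper does. The only (harmless) difference is that you work with an arbitrary unit $u$ of $Q^X$ and take $E$ to be its support, where the paper assumes outright that the unit has the form $\id_E$; the uniqueness step you worry about does go through, since $\Sup S = 0$ in a complete lattice forces every element of $S$ to be $0$.
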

\begin{proof}~
\begin{enumerate}
\item If $X$ and $Q$ are unital, then
  \begin{align*}
(f\ast \id_E)\, x
=\Sup \{f\, y \bullet \delta_e\, z \mid R^x_{yz} \land E^e\}
=\Sup \{f\, y \bullet 1 \mid \exists e.\ R^x_{ye} \land E^e\}
=(f\, x \mid \exists e.\ R^x_{xe} \land E^e)
= f\, x,
\end{align*}
where the last two steps use the relational unit axioms from
Definition~\ref{D:rel-units}. The proof for left units follows by
opposition.

\item If $\id_E$ is the right unit in $Q^X$, then
  \begin{equation*}
(1\mid (y=x))
=   \delta_y\, x
= (\delta_y\ast \id_E)\, x
= (1\mid \exists e.\ R^x_{ye} \land E^e).
  \end{equation*}
  Suppose $1\neq 0$.  Then $x=y$ implies
  $\exists e.\ R^x_{xe} \land E^e$, the existence axiom for right
  relational units, and $\exists e.\ R^x_{ye} \land E^e$
  implies that $x=y$, the uniqueness axiom.  The proofs for left units
  follow by opposition.

 \item If $\id_E$ is the right unit in $Q^X$, then
   \begin{equation*}
     a\bullet 1
= (a\bullet 1\mid \exists e.\ R^x_{xe} \land E^e)
=\Sup\{\delta_x\, x \mid \exists e.\ R^x_{xe} \land E^e\}
=  (\delta^a_x \ast \id_E)\, x
= \delta^a_x\, x
= a
   \end{equation*}
   proves that $1$ is a right unit in $Q$. The left unit law follows
   by opposition.\qedhere
\end{enumerate}
\end{proof}

In the presence of non-trivial units in $X$ and $Q$, the nondegeneracy
conditions for interchange laws in Proposition~\ref{P:correspondence2}
and \ref{P:correspondence3} simplify. Condition (\ref{eq:d1}) becomes
trivial with $\se\scomp\se = \se\neq 0$, condition (\ref{eq:d2}) with
$\se\scomp (\pe\pcomp\pe) = \pe\neq 0$ and condition (\ref{eq:d3}) by
opposition. Condition (\ref{eq:d4}) reduces to $(\pe\pcomp \pe)\scomp
(\pe \pcomp \pe) = \pe \scomp \pe \neq 0$, but remains non-trivial. It
becomes trivial when $\se=\pe$.  It is easy to check that the
nondegeneracy conditions (\ref{eq:rd1})--(\ref{eq:rd3}) become trivial
in a similar way, using the fact that $\srel^e_{ee}$ holds for all
$e\in \sE$ and $\prel^e_{ee}$ for all $e\in \pE$. Once again,
(\ref{eq:rd4}) becomes trivial when $\sE=\pE$.

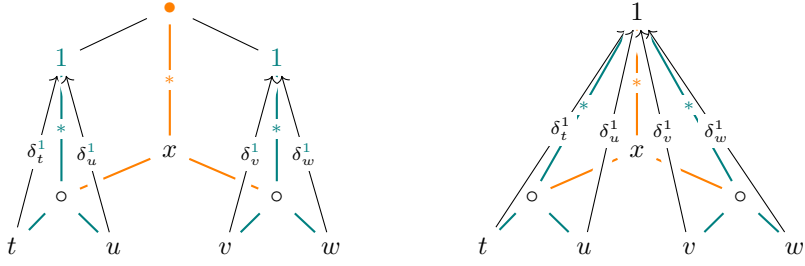
\begin{figure}[t]
  \centering

\begin{tikzcd}[column sep= .2cm, row sep=.2cm]
&&&\scomp \ar[dash,dll]\ar[dash,drr]&&&\\
&\pe &&&& \pe &\\
 &&  &\phantom{x}&  && \\
&&&x\ar[dash,line width=.8pt,orange,uuu, "\sconv" description] \ar[dash,line width=.8pt,orange,dll]\ar[dash, line width=.8pt,orange,drr]&&&\\
&\circ\ar[dash,line width=.8pt,teal,uuu, "\pconv" description] \ar[dash,line
width=.8pt,teal,dl]\ar[dash,line width=.8pt,teal,dr]&&&&
\circ\ar[dash,line width=.8pt,teal,uuu, "\pconv" description] \ar[dash,line width=.8pt,teal,dl]\ar[dash,line width=.8pt,teal,dr]&\\
t\arrow[uuuur, "\delta^{\pe}_t" description]&& u\ar[crossing over, uuuul,
"\delta^{\pe}_u" description] && v\ar[crossing over, uuuur, "\delta^{\pe}_v"
description] && w\ar[uuuul, "\delta^{\pe}_w" description]
\end{tikzcd}
\qquad\qquad
\begin{tikzcd}[column sep= .2cm, row sep=.2cm]
&&&1&&&\\
&&&\phantom{x}&&&\\
 &&  &\phantom{x}&  && \\
&&&x\ar[dash,line width=.8pt,orange,uuu, "\sconv" description] \ar[dash,line width=.8pt,orange,dll]\ar[dash, line width=.8pt,orange,drr]&&&\\
&\circ\ar[dash,line width=.8pt,teal,uuuurr, "\pconv" description] \ar[dash,line
width=.8pt,teal,dl]\ar[dash,line width=.8pt,teal,dr]&&&&
\circ\ar[dash,line width=.8pt,teal,uuuull, "\pconv" description] \ar[dash,line width=.8pt,teal,dl]\ar[dash,line width=.8pt,teal,dr]&\\
t\arrow[uuuuurrr, "\delta^{1}_t" description]&& u\ar[crossing over, uuuuur,
"\delta^{1}_u" description] && v\ar[crossing over, uuuuul, "\delta^{1}_v"
description] && w\ar[uuuuulll, "\delta^{1}_w" description]
\end{tikzcd}

\caption{Degeneracy condition (\ref{eq:d4}) in the presence of units.}
  \label{fig:delta-diagram2}
\end{figure}

\begin{corollary}\label{P:correspondence23-cor}
  Let $X$ be a unital relational bi-magma satisfying
  $\sE=\pE\neq \emptyset$ and $Q$ a unital bi-prequantale satisfying
  $\se=\pe\neq 0$. Then $(I_k)$ holds in $Q^X$ if and only if $(I_k)$
  holds in $Q$ and $(RI_k)$ holds in $X$, for each $1\le k\le 7$.
\end{corollary}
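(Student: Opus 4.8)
The plan is to read the corollary as the packaging of Propositions~\ref{P:correspondence1}, \ref{P:correspondence2} and~\ref{P:correspondence3}, with the unitality hypotheses serving only to discharge the (non)degeneracy side conditions that those propositions carry. Fix $k$ with $1\le k\le 7$. For the implication from right to left I would simply invoke Proposition~\ref{P:correspondence1}: if (\ai{k}) holds in $Q$ and (\ri{k}) holds in $X$, then (\ai{k}) holds in $Q^X$. This proposition carries no degeneracy hypotheses, so there is nothing further to check in this direction.

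For the converse, assume (\ai{k}) holds in $Q^X$. To recover (\ri{k}) in $X$ I would apply Proposition~\ref{P:correspondence2}, which yields (\ri{k}) provided $(D_{\lceil\frac{k}{2}\rceil})$ holds in $Q$; and to recover (\ai{k}) in $Q$ I would apply Proposition~\ref{P:correspondence3}, which yields (\ai{k}) provided $(RD_{\lceil\frac{k}{2}\rceil})$ holds in $X$. The whole task therefore reduces to verifying that $\se=\pe\neq 0$ forces all of \eqref{eq:d1}--\eqref{eq:d4} and that $\sE=\pE\neq\emptyset$ forces all of \eqref{eq:rd1}--\eqref{eq:rd4}. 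The algebraic conditions are immediate from the unit laws: $\se\scomp\se=\se\neq 0$ gives \eqref{eq:d1}, $\se\scomp(\pe\pcomp\pe)=\pe\neq 0$ gives \eqref{eq:d2}, \eqref{eq:d3} follows by opposition, and $(\pe\pcomp\pe)\scomp(\pe\pcomp\pe)=\pe\scomp\pe=\se\neq 0$ gives \eqref{eq:d4}, the last equality using $\se=\pe$.

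For the relational conditions the key ingredient is that every unit $e$ satisfies $\srel^e_{ee}$, and likewise $\prel^e_{ee}$ for $e\in\pE$. I would derive this from Definition~\ref{D:rel-units}: the existence of a right unit at $x=e$ yields some $e'$ with $\srel^e_{ee'}$ and $\sE^{e'}$, and since $\sE^e$ also holds, the left-unit uniqueness axiom applied to $\srel^e_{ee'}$ forces $e'=e$. Choosing all quantified variables equal to a common $e\in\sE=\pE$ then witnesses each of \eqref{eq:rd1}--\eqref{eq:rd4}, where \eqref{eq:rd4} relies on $\sE=\pE$ so that a single $e$ can fill both the $\srel$- and the $\prel$-slots. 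Assembling the two halves establishes the biconditional for each $k$.

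I expect the only nonroutine point to be this derivation of $\srel^e_{ee}$, since it is not literally one of the relational-unit axioms; everything else is bookkeeping across the seven indices, organised by the same opposition symmetry already exploited in the proofs of Propositions~\ref{P:correspondence2} and~\ref{P:correspondence3}.
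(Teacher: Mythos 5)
Your proposal is correct and follows essentially the same route as the paper, which leaves the corollary without an explicit proof but justifies it in the preceding paragraph by exactly this reduction: Propositions~\ref{P:correspondence1}--\ref{P:correspondence3} plus the observation that shared non-trivial units discharge all of \eqref{eq:d1}--\eqref{eq:d4} and \eqref{eq:rd1}--\eqref{eq:rd4}. Your explicit derivation of $\srel^e_{ee}$ from the right-unit existence axiom and the left-unit uniqueness axiom of Definition~\ref{D:rel-units} is sound and fills in a fact the paper merely asserts.
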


In the only-if directions, Functions $\delta_x$ can now be used. This
leads to a simpler relationship between deltas and ternary relations
than in Lemma~\ref{P:delta-rel}.
\begin{corollary}\label{P:delta-rel-unital-cor}
  Let $X$ be a relational magma and $Q$ a unital prequantale with
  $1\neq 0$. Then
  \begin{equation*}
    R^x_{yz} \Leftrightarrow (\delta_y \ast \delta_z)\, x = 1.
  \end{equation*}
\end{corollary}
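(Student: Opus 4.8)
The plan is to specialise Lemma~\ref{P:delta-conv-props}(1) to the present single-composition, single-relation setting and then read off both implications from the hypothesis $1\neq 0$. Throughout, $\delta_y$ abbreviates $\delta_y^1$, the delta function weighted by the unit $1$ of $Q$, so that $\delta_y\, y' = (1\mid y'=y)$; under the evident embedding of $\bool$ this agrees with the boolean $\delta_y$ of Section~\ref{S:summary}.

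First I would record the value of the convolution on deltas. Unfolding the definition gives
\begin{equation*}
  (\delta_y\ast \delta_z)\, x = \Sup_{y',z':R^x_{y'z'}} \delta_y\, y'\bullet \delta_z\, z'.
\end{equation*}
Every summand with $y'\neq y$ or $z'\neq z$ carries a factor $0$ (the bottom of the lattice), and since $\bullet$ preserves sups it annihilates $0$ on either side; hence only the term $y'=y$, $z'=z$ can be non-zero, and it appears in the indexing set precisely when $R^x_{yz}$ holds. This is exactly the instance $a=b=1$ of Lemma~\ref{P:delta-conv-props}(1), so that
\begin{equation*}
  (\delta_y\ast \delta_z)\, x = (1\bullet 1\mid R^x_{yz}) = (1\mid R^x_{yz}).
\end{equation*}

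The two directions of the equivalence then fall out immediately. For the forward implication, if $R^x_{yz}$ holds the right-hand side evaluates to $1$, whence $(\delta_y\ast\delta_z)\, x = 1$. For the converse, if $R^x_{yz}$ fails then $(1\mid R^x_{yz}) = 0$, so the only way to obtain $(\delta_y\ast \delta_z)\, x = 1$ is for $R^x_{yz}$ to hold — here one invokes $1\neq 0$ to rule out the degenerate case.

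The argument is entirely routine, and I expect no genuine obstacle. The one point that does real work is the hypothesis $1\neq 0$: it is what separates the two branches of $(1\mid R^x_{yz})$ and thereby lets the value of the convolution detect the relation. Without it $Q$, and hence $Q^X$, could collapse and the equivalence would fail, exactly as in the degeneracy discussion following Proposition~\ref{P:correspondence3}.
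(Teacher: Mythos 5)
Your proof is correct and follows essentially the paper's intended route: the corollary is just Lemma~\ref{P:delta-rel}(1) (equivalently, Lemma~\ref{P:delta-conv-props}(1), from which it is derived) specialised to $a=b=1$, with $1\bullet 1=1\neq 0$ discharging the nondegeneracy condition $(D_1)$ and separating the two branches of $(1\mid R^x_{yz})$. Nothing is missing.
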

\noindent It is therefore compelling to see $\bool$ as the sublattice
over $\{0,1\}$ of $Q$ and simply write
$R^x_{yz} = (\delta_y\ast\delta_z)\, x$ or even
$(f\ast g)\, x = \Sup_{y,z} f\, y \bullet g\, z \bullet R^x_{yz}$.
Figure~\ref{fig:delta-diagram2} shows how the presence of units
affects the right-hand term in (\ref{eq:ri7}).

Next we present a correspondence result for relational units that is
useful in Section~\ref{S:interchange-quantales}.

\begin{lemma}\label{P:unit-rel-conv}
Let $X$ be a unital bi-magma and $Q$ a unital bi-prequantale.
\begin{enumerate}
\item If $\pE \subseteq \sE$ in $X$ $\pe\le \se$ in $Q$, then $\pid
  \le \sid$ in $Q^X$.
\item If $\pid \le \sid$ in $Q^X$ and $\pe\neq 0$ in $Q$, then $\pE \subseteq \sE$ in
  $X$.
\item If $\pid \le \sid$ in $Q^X$ and $\pE\neq \emptyset$ in $X$, then $\pe\le \se$ in $Q$.
\end{enumerate}
\end{lemma}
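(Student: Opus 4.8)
The plan is to argue all three parts pointwise, recalling that the order on $Q^X$ is the pointwise extension of the order on $Q$, so that $\pid\le\sid$ holds if and only if $\pid\, x\le\sid\, x$ for every $x\in X$. Throughout I will exploit the fact that, by definition, $\sid\, x$ takes only the two values $\se$ and $0$, and $\pid\, x$ only the values $\pe$ and $0$, according to whether $\sE^x$ respectively $\pE^x$ holds. This mirrors the structure of the proof of Proposition~\ref{P:correspondence-units}: part (1) combines the two hypotheses, while parts (2) and (3) extract a relational and an algebraic fact respectively from the inequality in $Q^X$.

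For part (1), I would fix $x\in X$ and split on whether $\pE^x$ holds. If $\pE^x$ fails, then $\pid\, x=0$, which is below $\sid\, x$ since $0$ is the bottom of $Q$. If $\pE^x$ holds, then $\pid\, x=\pe$; the hypothesis $\pE\subseteq\sE$ gives $\sE^x$, whence $\sid\, x=\se$, and the hypothesis $\pe\le\se$ closes the case. Both hypotheses are thus used exactly once, and only in the second branch.

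For part (2), I would take any $x$ with $\pE^x$ and show $\sE^x$. Here $\pid\, x=\pe$, and the assumed inequality gives $\pe\le\sid\, x$. Since $\sid\, x\in\{\se,0\}$ and $\pe\neq 0$, the value $0$ is excluded, as it would force $\pe\le 0$, that is $\pe=0$; hence $\sid\, x=\se$, which occurs only when $\sE^x$ holds. Therefore $\pE\subseteq\sE$.

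Part (3) is the one requiring a small observation rather than a direct computation: because $\sid\, x$ is always either $\se$ or $0$, one always has $\sid\, x\le\se$, independently of whether $\sE^x$ holds. Using $\pE\neq\emptyset$, I would pick some $x$ with $\pE^x$, so that $\pid\, x=\pe$; the assumed inequality then yields $\pe=\pid\, x\le\sid\, x\le\se$. I do not expect any genuine obstacle here; the only subtlety is this uniform bound $\sid\, x\le\se$, which lets part (3) avoid any case distinction on $\sE^x$ and hence dispense with a nondegeneracy hypothesis on $Q$ of the kind used in part (2).
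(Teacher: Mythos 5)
Your argument is correct and follows essentially the same route as the paper: all three parts are handled pointwise using the two-valued nature of $\sid$ and $\pid$, with part (2) excluding the value $0$ via $\pe\neq 0$ and part (3) using the uniform bound $\sid\, x\le\se$. Your part (1) is in fact slightly more explicit than the paper's (which only records the implication $\sid\, x=0\Rightarrow\pid\, x=0$ and leaves the case $\sE^x$ implicit), but the substance is identical.
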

\begin{proof}~
\begin{enumerate}
\item Let $\pE\subseteq \sE$ and $\pe\le \se$. Then $\sid\, x = 0 \Leftrightarrow \neg \sE^x \Rightarrow \neg \pE^x
  \Leftrightarrow \pid\, x = 0$ and therefore $\pid\le \sid$.
 \item Let $\pid\le \sid$. If $\pE^x$, then $0\neq \pe =\pid\, x \le
   \sid\, x$ and therefore $\sE^x$.
 \item Let $\pid\le \sid$ and $\pE^x$.  Then $\pe= \pid\, x \le \sid\, x \le \se$.\qedhere
\end{enumerate}
\end{proof}
\begin{corollary}\label{P:unit-rel-conv-cor}
  Let $X$ be a unital bi-magma with $\pE\neq \emptyset$ and $Q$ a
  unital bi-prequantale with $\pe\neq 0$. Then $\pid
  \le \sid$ in $Q^X$ if and only $\pE \subseteq \sE$ in $X$ and $\pe\le \se$ in $Q$.
\end{corollary}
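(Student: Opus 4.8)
The plan is to assemble the three implications of Lemma~\ref{P:unit-rel-conv}, whose conjunction already yields both directions of the biconditional under the stated nondegeneracy hypotheses. Since the corollary merely repackages the lemma, no new calculation is required; the only care needed is in tracking which side-condition feeds which implication.

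First I would dispatch the \emph{if} direction. Assuming $\pE\subseteq\sE$ in $X$ and $\pe\le\se$ in $Q$, Lemma~\ref{P:unit-rel-conv}(1) gives $\pid\le\sid$ in $Q^X$ at once. This direction uses neither nondegeneracy hypothesis, so it holds unconditionally, and in particular the assumptions $\pE\neq\emptyset$ and $\pe\neq 0$ play no role here.

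For the \emph{only-if} direction I would assume $\pid\le\sid$ in $Q^X$ and then apply the two remaining parts of the lemma separately, each consuming exactly one of the two side-conditions. Since $\pe\neq 0$ by hypothesis, Lemma~\ref{P:unit-rel-conv}(2) yields $\pE\subseteq\sE$ in $X$; since $\pE\neq\emptyset$ by hypothesis, Lemma~\ref{P:unit-rel-conv}(3) yields $\pe\le\se$ in $Q$. Conjoining these two conclusions produces the required right-hand side of the biconditional.

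There is no genuine obstacle: the mathematical content is entirely carried by Lemma~\ref{P:unit-rel-conv}, and the corollary is a bookkeeping step matching the nondegeneracy conditions $\pe\neq 0$ and $\pE\neq\emptyset$ to the forward implications (2) and (3) respectively. The only point worth flagging is the asymmetry just noted, namely that the \emph{if} direction is unconditional whereas each half of the \emph{only-if} direction relies on precisely one of the two hypotheses.
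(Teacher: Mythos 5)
Your proposal is correct and matches the paper's intent exactly: the corollary is stated without proof precisely because it is the conjunction of the three parts of Lemma~\ref{P:unit-rel-conv}, with part (1) giving the \emph{if} direction and parts (2) and (3) giving the \emph{only-if} direction under the hypotheses $\pe\neq 0$ and $\pE\neq\emptyset$ respectively. Your observation about which side-condition feeds which implication is accurate and is the only bookkeeping the corollary requires.
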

Because of the symmetry in the definitions of unital bi-magmas and
bi-prequantales, Lemma~\ref{P:unit-rel-conv} and
Corollary~\ref{P:unit-rel-conv-cor} hold with colours swapped. We do
not spell them out explicitly.

The correspondences between interchange laws can be specialised to
obtain the associativity laws for a quantale.  The relational
interchange law (\ref{eq:ri3}),
$\exists y.\ \srel^x_{uy} \wedge \prel^y_{vw} \Rightarrow \exists y.\
\srel^y_{uv} \wedge \prel^x_{yw}$,
becomes the relational semi-associativity law
$\exists y. R^x_{uy} \wedge R^y_{vw}\Rightarrow \exists y.\ R^y_{u\,
  v} \wedge R^x_{y\, w}$
when colours are switched off; (\ref{eq:ri4}) translates into the
opposite implication.  Similarly, the interchange laws (\ref{eq:i3})
and (\ref{eq:i4}),
$a\scomp (b\pcomp c)\le (a\scomp b)\pcomp c$ and
$(a\pcomp b)\scomp c\le a \pcomp (b\scomp c)$, become the
semi-associativity laws
$a\bullet (b\bullet c)\le (a\bullet b)\bullet c$ and
$(a\bullet b)\bullet c\le a \bullet (b\bullet c)$.  This yields the
following corollary to Proposition~\ref{P:correspondence1},
\ref{P:correspondence2} and \ref{P:correspondence3}.

\begin{corollary}\label{P:assoc-correspondence1}
  Let $X$ be a relational magma and $Q$ a prequantale.
  \begin{enumerate}
  \item If $X$ is relationally associative and $Q$ associative, then
    $Q^X$ is associative.
\item If $Q^X$ is associative and some $a,b,c\in Q$ satisfy $a\bullet
  (b\bullet c)\neq 0\neq (a\bullet b) \bullet c$, then $X$ is
  relationally associative.
\item If $Q^X$ is associative and some $x,y,z,u,v,w\in X$ satisfy
  $R^x_{uz}$, $R^x_{yw}$ $R^z_{vw}$ and $R^y_{uv}$, then $Q$ is associative.
  \end{enumerate}
\end{corollary}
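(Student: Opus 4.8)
The plan is to derive all three statements by ``switching off colours'' and invoking Propositions~\ref{P:correspondence1}--\ref{P:correspondence3} at the two indices $k=3$ and $k=4$. Concretely, I would instantiate the bi-structures by putting $\srel=\prel=R$ on $X$ and $\scomp=\pcomp=\bullet$ on $Q$, so that $(X,R,R)$ is a relational bi-magma, $(Q,\le,\bullet,\bullet)$ a bi-prequantale, and both convolutions on $Q^X$ collapse to the single $\ast$. As recorded in the surrounding discussion, (\ref{eq:ri3}) and (\ref{eq:ri4}) then become the two opposite relational semi-associativity implications, whose conjunction is relational associativity, while (\ref{eq:i3}) and (\ref{eq:i4}) become $a\bullet(b\bullet c)\le(a\bullet b)\bullet c$ and $(a\bullet b)\bullet c\le a\bullet(b\bullet c)$, whose conjunction is associativity; the identical reading applies in $Q^X$.

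For part (1), relational associativity of $X$ supplies both (\ref{eq:ri3}) and (\ref{eq:ri4}) and associativity of $Q$ supplies both (\ref{eq:i3}) and (\ref{eq:i4}); Proposition~\ref{P:correspondence1} at $k=3,4$ then yields both laws in $Q^X$, whose conjunction is associativity of $\ast$. For part (2), associativity of $Q^X$ gives both (\ref{eq:i3}) and (\ref{eq:i4}) there, and Proposition~\ref{P:correspondence2} at $k=3,4$ returns (\ref{eq:ri3}) and (\ref{eq:ri4}) in $X$, hence relational associativity---provided the requisite nondegeneracy conditions hold. Part (3) is dual: associativity of $Q^X$ together with Proposition~\ref{P:correspondence3} at $k=3,4$ yields (\ref{eq:i3}) and (\ref{eq:i4}) in $Q$, i.e.\ associativity.

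The point demanding care---and the only real obstacle---is the bookkeeping of the nondegeneracy conditions across the two indices. Since (\ref{eq:ri4}) and (\ref{eq:i4}) are the opposites of (\ref{eq:ri3}) and (\ref{eq:i3}), the two calls to Proposition~\ref{P:correspondence2} consume the two opposite conditions (\ref{eq:d2}) and (\ref{eq:d3}), which uncoloured read $a\bullet(b\bullet c)\neq 0$ and $(a\bullet b)\bullet c\neq 0$; this is exactly why the hypothesis of part (2) lists both inequalities rather than one. Likewise the two calls to Proposition~\ref{P:correspondence3} consume the uncoloured forms of (\ref{eq:rd2}) and (\ref{eq:rd3}), which are witnessed respectively by $R^x_{uz}\wedge R^z_{vw}$ and by $R^y_{uv}\wedge R^x_{yw}$ among the four triangles assumed in part (3). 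I would simply verify that each listed assumption feeds the index that consumes it, after which the conjunction of the two one-sided laws delivers the desired two-sided associativity in each case.
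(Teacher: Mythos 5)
Your proposal is correct and follows exactly the route the paper takes: the paper presents this result as a corollary of Propositions~\ref{P:correspondence1}--\ref{P:correspondence3}, obtained by switching off colours so that (\ref{eq:ri3})/(\ref{eq:ri4}) jointly become relational associativity and (\ref{eq:i3})/(\ref{eq:i4}) jointly become associativity. Your accounting of the nondegeneracy conditions --- the uncoloured (\ref{eq:d2}), (\ref{eq:d3}) matching the two inequalities in part (2), and the uncoloured (\ref{eq:rd2}), (\ref{eq:rd3}) matching the four triangles in part (3) --- is precisely the bookkeeping the corollary's hypotheses encode.
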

\noindent Similar correspondences between semi-associativity laws are
straightforward, but not as interesting for our purposes. In the
presence of units, Corollary~\ref{P:assoc-correspondence1}
simplifies further.

\begin{corollary}\label{P:assoc-correspondence2}
  Let $X$ be a unital relational magma satisfying $E\neq \emptyset$
  and $Q$ a unital prequantale satisfying $1\neq 0$.  Then $Q^X$ is
  associative if and only if $X$ is relationally associative and $Q$
  is associative.
\end{corollary}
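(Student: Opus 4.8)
The plan is to obtain this statement as a direct packaging of Corollary~\ref{P:assoc-correspondence1}: its three implications already separate out the two nondegeneracy hypotheses, so all that remains is to check that those hypotheses hold automatically once $X$ and $Q$ are unital with $E\neq\emptyset$ and $1\neq 0$. The \emph{if} direction then requires no further work, being exactly Corollary~\ref{P:assoc-correspondence1}(1): relational associativity of $X$ together with associativity of $Q$ yields associativity of $Q^X$, with no unit assumptions needed.

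For the \emph{only if} direction I would assume $Q^X$ associative and derive each conjunct separately. To obtain relational associativity of $X$, I apply Corollary~\ref{P:assoc-correspondence1}(2), for which I must exhibit $a,b,c\in Q$ with $a\bullet(b\bullet c)\neq 0\neq(a\bullet b)\bullet c$; taking $a=b=c=1$ reduces both products to $1$, which is nonzero by hypothesis, so the condition is met. To obtain associativity of $Q$, I apply Corollary~\ref{P:assoc-correspondence1}(3), which asks for $x,y,z,u,v,w\in X$ satisfying $R^x_{uz}$, $R^x_{yw}$, $R^z_{vw}$ and $R^y_{uv}$; choosing all six elements equal to a single $e\in E$ collapses all four requirements to the single fact $R^e_{ee}$.

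The only genuine content is thus establishing $R^e_{ee}$ for $e\in E$ from the relational unit axioms of Definition~\ref{D:rel-units}, and this is the step I expect to be the crux. Given $e\in E$, the right-unit existence axiom applied to $x=e$ produces some $e'\in E$ with $R^e_{ee'}$. Since $E^e$ holds, the left-unit uniqueness axiom $R^x_{ey}\wedge E^e\Rightarrow x=y$, read with $e$ in its left-operand slot, forces $e=e'$, whence $R^e_{ee}$; symmetrically one could instead start from left-unit existence and invoke right-unit uniqueness. With $R^e_{ee}$ in hand both nondegeneracy conditions are satisfied, and the two applications of Corollary~\ref{P:assoc-correspondence1} complete the argument. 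This mirrors the general observation, made just before the corollary, that the degeneracy conditions become trivial in the presence of non-trivial units.
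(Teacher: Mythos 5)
Your proposal is correct and follows exactly the route the paper intends: the corollary is stated without proof precisely because it amounts to discharging the nondegeneracy hypotheses of Corollary~\ref{P:assoc-correspondence1} via the units, and the paper's earlier remark that the relational conditions trivialise ``using the fact that $\srel^e_{ee}$ holds for all $e\in \sE$'' is the same key observation you prove from the unit axioms of Definition~\ref{D:rel-units}. Your derivation of $R^e_{ee}$ (right-unit existence followed by left-unit uniqueness) and the choice $a=b=c=1$ are both sound, so nothing is missing.
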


The correspondences between interchange laws can also be specialised
to the commutativity law for a quantale.  The relational interchange
law (\ref{eq:ri2}), $\srel^x_{uv} \Rightarrow \prel^x_{vu}$,
specialises to $R^x_{uv} \Rightarrow R^x_{vu}$ when colours are
switched off while the interchange law (\ref{eq:i2}), $a\scomp b \le
b\pcomp a$, becomes $a\bullet b\le b\bullet a$.  This yields another
corollary to Proposition~\ref{P:correspondence1},
\ref{P:correspondence2} and \ref{P:correspondence3}.

\begin{corollary}\label{P:comm-correspondence}
Let $X$ be a relational magma and $Q$ a prequantale.
\begin{enumerate}
\item If $X$ is relationally commutative and $Q$ abelian, then $Q^X$
  is abelian.
\item If $Q^X$ is abelian and there exist $a,b\in Q$ with $a\bullet b\neq
  0$, then $X$ is relationally commutative.
\item If $Q^X$ is abelian and there exist $x,y,z\in X$ with $R^x_{x\, z}
$, then $Q$ is abelian.
\end{enumerate}
\end{corollary}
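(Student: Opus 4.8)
The plan is to read all three clauses off from the three interchange correspondences of the present section, specialised to $k=2$ and with the two colours identified. Concretely, set $\srel=\prel=R$ and $\scomp=\pcomp=\bullet$, so that $(X,R,R)$ is a relational bi-magma, $(Q,\le,\bullet,\bullet)$ a bi-prequantale, and the two convolutions coincide, $\sconv=\pconv=\ast$. Under this identification the relational interchange law~(\ref{eq:ri2}) collapses to relational commutativity $R^x_{uv}\Rightarrow R^x_{vu}$, the algebraic interchange law~(\ref{eq:i2}) collapses to $a\bullet b\le b\bullet a$, and~(\ref{eq:i2}) read in $Q^X$ collapses to $f\ast g\le g\ast f$. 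The two relevant degeneracy conditions collapse the same way: (\ref{eq:d1}) becomes $\exists a,b.\ a\bullet b\neq 0$ and (\ref{eq:rd1}) becomes $\exists x,u,v.\ R^x_{uv}$.

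With these translations in place each clause is a direct instance. Clause~(1) is Proposition~\ref{P:correspondence1} at $k=2$: relational commutativity of $R$ together with $a\bullet b\le b\bullet a$ in $Q$ yields $f\ast g\le g\ast f$ in $Q^X$. Clause~(2) is Proposition~\ref{P:correspondence2} at $k=2$, whose side condition $(D_{\lceil 2/2\rceil})=(D_1)$ is precisely the hypothesis $a\bullet b\neq 0$; from commutativity of $\ast$ it then extracts relational commutativity of $R$. Clause~(3) is Proposition~\ref{P:correspondence3} at $k=2$, whose side condition $(RD_{\lceil 2/2\rceil})=(RD_1)$ reads $\exists x,u,v.\ R^x_{uv}$; this is implied by the stated hypothesis $R^x_{xz}$ (which is in fact stronger than needed), and the proposition then delivers $a\bullet b\le b\bullet a$ in $Q$.

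The one point that needs care---and the only genuine obstacle---is that both~(\ref{eq:i2}) and its $Q^X$-counterpart are \emph{one-sided} inequalities $a\bullet b\le b\bullet a$, whereas ``abelian'' asks for the equality $a\bullet b=b\bullet a$. This gap closes cheaply: since the inequality holds for all $a,b$, applying it a second time to the swapped pair $(b,a)$ gives $b\bullet a\le a\bullet b$, hence equality. This upgrade is invoked for clause~(1), to pass from $f\ast g\le g\ast f$ to commutativity of $\ast$, and for clause~(3), to pass from $a\bullet b\le b\bullet a$ to commutativity of $\bullet$; note that ``abelian'' here constrains only the single composition and imposes no associativity, so identifying the colours is harmless. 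Clause~(2) needs no such symmetrisation, since relational commutativity is itself the implication $R^x_{uv}\Rightarrow R^x_{vu}$ that~(\ref{eq:ri2}) yields verbatim. Beyond this, the remaining work is the routine bookkeeping of checking that the collapsed degeneracy conditions match the stated hypotheses.
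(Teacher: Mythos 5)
Your proof is correct and follows essentially the same route as the paper, which presents this corollary as an instance of Propositions~\ref{P:correspondence1}, \ref{P:correspondence2} and \ref{P:correspondence3} at $k=2$ with the colours ``switched off'' ($\srel=\prel=R$, $\scomp=\pcomp=\bullet$), the degeneracy conditions collapsing to $(D_1)$ and $(RD_1)$ exactly as you say. Your explicit symmetrisation step upgrading the one-sided inequality $a\bullet b\le b\bullet a$ to equality is left implicit in the paper but is the right (and necessary) observation.
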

\noindent In the presence of units, this corollary simplifies further.
\begin{corollary}\label{P:comm-correspondence-unit}
  Let $X$ be a unital relational magma satisfying $E\neq \emptyset$
  and $Q$ a unital quantale satisfying $1\neq 0$. Then $Q^X$ is
  abelian if and only if $X$ is relationally commutative and $Q$ abelian.
\end{corollary}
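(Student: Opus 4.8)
The plan is to obtain the biconditional directly from the three conditional implications already established in Corollary~\ref{P:comm-correspondence}, using the unital hypotheses $E\neq\emptyset$ and $1\neq 0$ solely to discharge the nondegeneracy and degeneracy side-conditions appearing there. Nothing new needs to be computed; the corollary is a packaging of the general correspondences specialised to the commutative setting, with the units trivialising the provisos.

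The if-direction is immediate: if $X$ is relationally commutative and $Q$ is abelian, then Corollary~\ref{P:comm-correspondence}(1) yields that $Q^X$ is abelian, with no appeal to the units.

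For the only-if direction I would assume $Q^X$ abelian and derive the two conjuncts separately. To obtain relational commutativity of $X$, I apply Corollary~\ref{P:comm-correspondence}(2), whose side-condition requires some $a,b\in Q$ with $a\bullet b\neq 0$; since $Q$ is unital and $1\neq 0$, the choice $a=b=1$ gives $1\bullet 1=1\neq 0$, discharging the condition. To obtain commutativity of $Q$, I apply Corollary~\ref{P:comm-correspondence}(3), whose side-condition requires some $x,z\in X$ with $R^x_{xz}$. Here $E\neq\emptyset$ lets me fix a unit $e\in E$ (which also guarantees $X\neq\emptyset$), and the right-unit existence axiom of Definition~\ref{D:rel-units}, instantiated at $x=e$, produces $e'$ with $R^e_{ee'}$ and $E^{e'}$; the conjunct $R^e_{ee'}$ is exactly the required witness. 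Conjoining the two conclusions completes the only-if direction.

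This is purely a matter of bookkeeping, so I expect no genuine obstacle. The only step meriting a moment's care is the verification that the relational-unit axioms trivialise the condition $R^x_{xz}$: this works because those axioms guarantee that every element---in particular a unit---admits a right unit with which it composes, so a witness for $R^x_{xz}$ is always available once $E\neq\emptyset$, and one need not establish the stronger fact $R^e_{ee}$ for units $e$.
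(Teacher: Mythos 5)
Your proposal is correct and is exactly the argument the paper intends: the corollary is stated without proof as the specialisation of Corollary~\ref{P:comm-correspondence} in which the nondegeneracy provisos are discharged by $1\bullet 1=1\neq 0$ and by the right-unit existence axiom of Definition~\ref{D:rel-units} applied to any element of the nonempty $X$. Your bookkeeping, including the observation that a witness for $R^x_{xz}$ suffices and $R^e_{ee}$ is not needed, is accurate.
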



\section{Relational Interchange Monoids and Interchange
  Quantales}\label{S:interchange-quantales}

We now start shifting the focus from correspondence theory to
construction recipes for quantales with interchange laws.  To
avoid nondegeneracy conditions, we assume that relational magmas and
quantales are unital and impose an order between units:
$\emptyset\neq \pE\subseteq \sE$ and $0\neq \pe\le \se$.

Yet first we prove a weak variant of the classical Eckmann-Hilton
argument~\cite{EckmannH62}. It shows that if a unital bi-magma, a set equipped with
composition $\scomp$ and unit $\se$ and composition $\pcomp$ with unit
$\pe$, satisfies the strong interchange law
$(a\pcomp b)\scomp (c\pcomp d) = (a\scomp c)\pcomp (b\scomp d)$, then
$\se=\pe$, $\scomp$ and $\pcomp$ coincide, and they are associative and
commutative. We show how these properties change if strong interchange
is weakened to (\ref{eq:i7}).  This of course requires ordered
bimagmas, in which the underlying set is partially ordered and the two
compositions preserve the order in both arguments.

\begin{lemma}[weak Eckmann-Hilton]\label{P:weak-eh}
  Let $(S,\le,\scomp,\pcomp,\se,\pe)$ be an ordered bimagma in which
  (\ref{eq:i7}) holds. Then $\se \le \pe$, and
  (\ref{eq:i1})--(\ref{eq:i6}) hold whenever $\pe \le \se$.
\end{lemma}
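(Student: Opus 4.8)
The plan is to replay the classical Eckmann--Hilton argument, but tracking the single direction of inequality supplied by (\ref{eq:i7}) in place of an equation, and afterwards to read off the six small laws by feeding the common unit into (\ref{eq:i7}). It is worth noting in advance that monotonicity of $\scomp$ and $\pcomp$ is never used: only the unit laws and individual instances of (\ref{eq:i7}) enter, and antisymmetry of $\le$ is invoked solely to turn $\se\le\pe$ together with the hypothesis $\pe\le\se$ into the genuine equality $\se=\pe$.

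For the first claim I would expand the $\scomp$-unit $\se$ as a product whose outer operation is $\scomp$ and whose inner entries are rewrites of $\se$ built from $\pe$ via the $\pcomp$-unit laws, so that (\ref{eq:i7}) applies and its right-hand side collapses back onto $\pe$:
\begin{equation*}
  \se = \se\scomp\se = (\se\pcomp\pe)\scomp(\pe\pcomp\se) \le (\se\scomp\pe)\pcomp(\pe\scomp\se) = \pe\pcomp\pe = \pe.
\end{equation*}
Here the inequality is the instance of (\ref{eq:i7}) with $a=d=\se$ and $b=c=\pe$, while the surrounding equalities are unit laws. This computation is rigid: the outer composition must be $\scomp$ on the left and $\pcomp$ on the right for the fixed direction of (\ref{eq:i7}) to force $\se\le\pe$ rather than the reverse.

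For the second claim I would assume $\pe\le\se$, so that $\se=\pe=:e$ is a two-sided unit for both compositions, and then obtain each of (\ref{eq:i1})--(\ref{eq:i6}) as a single instance of (\ref{eq:i7}) in which one or two of the four arguments are set to $e$. Setting exactly one argument to $e$ yields the four three-term laws: putting the first argument equal to $e$ gives (\ref{eq:i5}), the second gives (\ref{eq:i3}), the third gives (\ref{eq:i4}) and the fourth gives (\ref{eq:i6}); for instance, setting the second argument to $e$ collapses (\ref{eq:i7}) by the unit laws to $a\scomp(b\pcomp c)\le(a\scomp b)\pcomp c$, which is (\ref{eq:i3}). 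Setting the two diagonal pairs to $e$ yields the two-term laws: $b=c=e$ gives $a\scomp b\le a\pcomp b$, which is (\ref{eq:i1}), and $a=d=e$ gives $a\scomp b\le b\pcomp a$, which is (\ref{eq:i2}). In each case the unit laws reduce both sides of (\ref{eq:i7}) exactly onto the two sides of the target law.

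I do not anticipate any genuine difficulty; the content is bookkeeping. The one point needing care is the first claim, where the placement of the two units inside the chosen instance of (\ref{eq:i7}) must be arranged so that the forced inequality points from $\se$ to $\pe$ --- this is precisely the step that in the full equational Eckmann--Hilton theorem would yield $\se=\pe$, here softened to a single inequality by the laxity of interchange. For the second claim the only thing to check is that, for each $k$, the chosen unit substitutions reduce (\ref{eq:i7}) to the correct terms, which I would confirm by tabulating the six substitutions and applying the unit laws.
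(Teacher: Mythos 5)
Your proof is correct and is exactly the argument the paper intends: the paper's proof consists solely of the remark that, as in the classical Eckmann--Hilton argument, one substitutes $\se$ and $\pe$ into (\ref{eq:i7}), and your instance $a=d=\se$, $b=c=\pe$ for $\se\le\pe$ together with the six unit substitutions for (\ref{eq:i1})--(\ref{eq:i6}) carries this out correctly (note also that (\ref{eq:i2}) as printed contains a typo and should read $a\scomp b\le b\pcomp a$, which is the form you derive). No gaps.
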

The proofs, like the classical Eckmann-Hilton ones, substitute $\se$ and
$\pe$ in (\ref{eq:i7}) and are straightforward. Analogous results hold for relational
bimagmas because of the various correspondence results in the
previous section and Lemma~\ref{P:weak-eh}.


\begin{lemma}\label{P:rel-interchange-redundant}
  Let $(X,\srel,\prel,\sE,\pE)$ be a unital relational bimagma in which (\ref{eq:ri7})
  holds. Then $\sE\subseteq \pE$, and (\ref{eq:ri1})--(\ref{eq:ri6}) hold
  whenever $\pE\subseteq \sE$.
\end{lemma}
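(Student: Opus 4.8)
The plan is to lift the problem to the convolution algebra $Q^X$, apply the weak Eckmann--Hilton argument of Lemma~\ref{P:weak-eh} there, and pull the resulting inequalities back to $X$ using the unit correspondences of Section~\ref{S:further-correspondences}. To this end I would fix a single auxiliary unital bi-prequantale $Q$ satisfying (\ref{eq:i7}) with $\se=\pe\neq 0$; the two-element boolean algebra $\bool$ with $\scomp=\pcomp={\wedge}$ and $\se=\pe=1$ is such a $Q$, and there (\ref{eq:i7}) holds as an equality. Since $X$ and $Q$ are unital, $Q^X$ is a unital bi-prequantale by Proposition~\ref{P:correspondence-units}(1) applied to each composition, with units $\sid$ and $\pid$. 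Its convolutions preserve sups in each argument (this lifting needs no associativity) and are therefore order-preserving, so $(Q^X,\le,\sconv,\pconv,\sid,\pid)$ is an ordered bimagma. Moreover, since (\ref{eq:ri7}) holds in $X$ and (\ref{eq:i7}) in $Q$, Proposition~\ref{P:correspondence1} gives (\ref{eq:i7}) in $Q^X$. Thus $Q^X$ meets all the hypotheses of Lemma~\ref{P:weak-eh}.

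For the first claim I would apply Lemma~\ref{P:weak-eh} to $Q^X$ to obtain $\sid\le\pid$. The colour-swapped form of Lemma~\ref{P:unit-rel-conv}(2), using the nondegeneracy $\se\neq 0$, then converts $\sid\le\pid$ into $\sE\subseteq\pE$ in $X$. Routing through the unit correspondence is precisely what makes this step clean: its proof evaluates the indicator functions $\sid$ and $\pid$ pointwise, so the set inclusion drops out directly and I never have to characterise membership in the unit set intrinsically (which is awkward, since unit sets can be large, as in Example~\ref{ex:relations}).

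For the second claim, assume $\pE\subseteq\sE$. Together with $\pe\le\se$ in $Q$ (here in fact $\pe=\se$), Lemma~\ref{P:unit-rel-conv}(1) yields $\pid\le\sid$ in $Q^X$. Feeding this reverse inequality back into Lemma~\ref{P:weak-eh} produces (\ref{eq:i1})--(\ref{eq:i6}) in $Q^X$. Finally, because $Q$ is unital with $\se=\pe\neq 0$, the nondegeneracy conditions (\ref{eq:d1})--(\ref{eq:d4}) needed for the backward transfer are trivially satisfied, so Proposition~\ref{P:correspondence2} carries (\ai{k}) in $Q^X$ to (\ri{k}) in $X$ for each $1\le k\le 6$, giving (\ref{eq:ri1})--(\ref{eq:ri6}).

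I expect no genuinely hard computation here, as the substance is entirely inherited from the algebraic weak Eckmann--Hilton argument; the one point requiring care is the bookkeeping of colours and inequality directions. Weak Eckmann--Hilton outputs $\sid\le\pid$, yet for the six small laws it must be triggered by the opposite inequality $\pid\le\sid$, and the two halves of Lemma~\ref{P:unit-rel-conv} are accordingly applied in opposite colour orientations. All hypotheses of Lemmas~\ref{P:weak-eh} and Propositions~\ref{P:correspondence1} and \ref{P:correspondence2} are met by the single uniform choice of $Q$, so verifying those hypotheses (that $Q^X$ really is an ordered bimagma satisfying (\ref{eq:i7}), and that the degeneracy conditions collapse in the presence of units) is the main, and only, obstacle. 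A self-contained alternative would mimic the Eckmann--Hilton substitutions directly in $X$, plugging elements of $\sE$ and $\pE$ into (\ref{eq:ri7}) and simplifying with the relational unit axioms of Definition~\ref{D:rel-units}, but the correspondence route is preferable precisely because it sidesteps proving set membership by hand.
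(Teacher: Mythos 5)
Your proof is correct, but it takes a genuinely different route from the paper's. The paper argues directly inside $X$: for $\sE\subseteq\pE$ it starts from $\srel^e_{ee}$, pads this with parallel units $e',e''$ to build the antecedent tree of (\ref{eq:ri7}), applies (\ref{eq:ri7}), and collapses the result with the uniqueness axioms of Definition~\ref{D:rel-units}; for the small laws it likewise pads the antecedent of, say, (\ref{eq:ri6}) with a unit leaf, applies (\ref{eq:ri7}), and strips the unit again. Your proof instead makes literal the remark the paper places just before the lemma---that the relational statement should follow from Lemma~\ref{P:weak-eh} and the correspondence results---by choosing the concrete witness $Q=\bool$ with $\se=\pe=1\neq 0$, lifting to $\bool^X$ via Proposition~\ref{P:correspondence-units}(1) and Proposition~\ref{P:correspondence1}, running weak Eckmann--Hilton there, and transferring back with the colour-swapped Lemma~\ref{P:unit-rel-conv} and Proposition~\ref{P:correspondence2}, whose degeneracy hypotheses indeed all collapse in $\bool$. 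Both arguments are sound and non-circular (none of the results you invoke depends on this lemma). What yours buys is maximal reuse of machinery and no element-level manipulation of ternary relations; what it costs is that it leans on the portions of Lemma~\ref{P:weak-eh} and of Propositions~\ref{P:correspondence1} and \ref{P:correspondence2} for $k\le 6$ that the paper only sketches or leaves to the reader, whereas the paper's direct proof is self-contained. The one bookkeeping point you flag---that weak Eckmann--Hilton outputs $\sid\le\pid$ while triggering the six small laws requires the converse $\pid\le\sid$ as input, obtained from the hypothesis $\pE\subseteq\sE$ via Lemma~\ref{P:unit-rel-conv}(1)---is handled correctly.
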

\begin{proof}
First, for all $e\in S$, and with (\ref{eq:ri7}) in the fourth step, 
\begin{align*}
\sE^e
&\Leftrightarrow \sE^e \land \srel^e_{ee}\\
& \Leftrightarrow \sE^e \land \exists x,y,e',e''.\ \pE^{e'}\land \prel^x_{e'e} \land
  \srel^e_{xy}\land \prel^y_{ee''}\\
& \Rightarrow \exists e', e''.\ \sE^e \land \pE^{e'}\land \prel^e_{e'e} \land
  \srel^e_{ee}\land \prel^e_{ee''}\\
& \Rightarrow \exists e',e''.\ \sE^e \land \pE^{e'}\land \srel^e_{e'e} \land
  \prel^e_{ee}\land \srel^e_{ee''}\\
& \Rightarrow \sE^e \land \pE^e\land \srel^e_{ee} \land
  \prel^e_{ee}\land \srel^e_{ee}\\
&\Rightarrow \pE^e.
\end{align*}
Second, let $\pE\subseteq \sE$ and assume (\ref{eq:ri7}). Then
 \begin{align*}
  \exists y.\ \prel^y_{uv} \wedge \srel^x_{yw}
&\Leftrightarrow \exists e,y.\ \prel^y_{uv} \wedge \srel^x_{yw}\wedge
  \prel^w_{ew} \wedge \pE^e\\
&\Leftrightarrow \exists e,y,z.\ \prel^y_{uv} \wedge \srel^x_{yz}\wedge
  \prel^z_{ew} \wedge \pE^e\\
&\Rightarrow \exists e,y, z.\ \srel^y_{ue} \wedge \prel^x_{yz}\wedge
  \srel^z_{vw}\wedge \sE^e\\
&\Leftrightarrow \exists e,z.\ \srel^u_{ue} \wedge \prel^x_{uz}\wedge
  \srel^z_{vw}\wedge \sE^e\\
&\Leftrightarrow \exists z. \prel^x_{u\, z}\wedge \srel^z_{v\, w}
\end{align*}
proves (\ref{eq:ri6}). The proofs of (\ref{eq:ri1}), (\ref{eq:ri2}),
(\ref{eq:ri3}) and (\ref{eq:ri5}) from (\ref{eq:ri7}) are similar, and
left to the reader.
\end{proof}

\begin{definition}\label{D:relational-monoids}~
\begin{enumerate}
\item A \emph{relational semigroup} $(X,R)$ is a set $X$ equipped with
  a relationally associative ternary relation $R$.
\item A \emph{relational monoid} is a relational semigroup $(X,R)$
  with a set $E\subseteq X$ of relational units for $R$.
\item A \emph{relational interchange monoid} is a structure
  $(X,\srel,\prel,E)$ such that $(X,\srel,E)$ and $(X,\prel,E)$
  are relational monoids and the relational interchange
  law (\ref{eq:ri7}) holds.
\end{enumerate}
\end{definition}
\begin{definition}
  A (unital) \emph{interchange quantale} is a structure
  $(Q,\le,\scomp,\pcomp,1)$ such that $(Q,\le,\scomp,1)$ and
  $(Q,\le,\pcomp,1)$ are (unital) quantales, and the
  interchange law (\ref{eq:i7}) holds.
\end{definition}
In  light of Lemma~\ref{P:weak-eh} and
\ref{P:rel-interchange-redundant}, we always assume that relational
interchange monoids and interchange quantales have one single unit
that is shared between the relations and compositions, respectively.
The following result then summarises these two lemmas.
\begin{corollary}\label{P:relinterchange-redundant}~
\begin{enumerate}
\item In every relational interchange monoid,
  (\ref{eq:ri1})--(\ref{eq:ri6}) are derivable.
\item In every unital interchange quantale, (\ref{eq:i1})--(\ref{eq:i6})
  are derivable.
\end{enumerate}
\end{corollary}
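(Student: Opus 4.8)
The plan is to observe that both parts of the corollary fall out immediately from the two preceding lemmas, once the single-unit convention that we have just adopted for these structures is fed into them. Lemma~\ref{P:weak-eh} and Lemma~\ref{P:rel-interchange-redundant} each derive the six small laws only under a side hypothesis ($\pe \le \se$, respectively $\pE \subseteq \sE$), and the one thing I would check is that this side hypothesis is automatic here. It is, precisely because a relational interchange monoid and a unital interchange quantale are defined with a \emph{single} unit shared by both compositions.

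For part (2) I would first note that a unital interchange quantale $(Q,\le,\scomp,\pcomp,1)$ is in particular an ordered bimagma: the complete lattice order is a partial order, and since $\scomp$ and $\pcomp$ each preserve sups in both arguments they are monotone in both arguments (from $a\le b$ one gets $c\scomp b = c\scomp(a\vee b) = (c\scomp a)\vee(c\scomp b)\ge c\scomp a$, and symmetrically). As (\ref{eq:i7}) holds by definition, Lemma~\ref{P:weak-eh} applies. The single-unit convention gives $\se = \pe = 1$, so in particular $\pe \le \se$, and the lemma then yields (\ref{eq:i1})--(\ref{eq:i6}). Part (1) is entirely analogous: a relational interchange monoid $(X,\srel,\prel,E)$ is in particular a unital relational bimagma satisfying (\ref{eq:ri7}), so Lemma~\ref{P:rel-interchange-redundant} applies; and since $\sE = \pE = E$ by the single-unit convention we have $\pE \subseteq \sE$, whence (\ref{eq:ri1})--(\ref{eq:ri6}) follow.

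There is no genuine obstacle here. The only substantive content is the bookkeeping observation that the shared-unit assumption built into the definitions of relational interchange monoid and interchange quantale forces exactly the inclusion $\pE \subseteq \sE$ and the inequality $\pe \le \se$ that the two lemmas demand as side conditions; everything else is delivered verbatim by those lemmas. Accordingly I would keep the proof to the two one-line citations above rather than re-running the Eckmann--Hilton substitutions, which have already been carried out in the proof of Lemma~\ref{P:rel-interchange-redundant}.
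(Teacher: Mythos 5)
Your proposal is correct and follows exactly the paper's route: the text preceding the corollary states that it ``summarises'' Lemma~\ref{P:weak-eh} and Lemma~\ref{P:rel-interchange-redundant}, with the shared-unit convention supplying the side conditions $\pe\le\se$ and $\pE\subseteq\sE$ automatically, which is precisely your argument. Your additional remark that sup-preservation of $\scomp$ and $\pcomp$ makes the quantale an ordered bimagma is a sensible bookkeeping check that the paper leaves implicit.
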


The correspondence results from
Section~\ref{S:correspondence-interchange} and
\ref{S:further-correspondences} can now be summarised in terms of
interchange monoids and interchange quantales as well.

\begin{theorem}\label{P:interchange-quantale-correspondence}~
\begin{enumerate}
\item If $X$ is a relational interchange monoid and $Q$ an interchange
  quantale, then $Q^X$ is an interchange quantale.
\item If $Q^X$ is an interchange quantale and $1\neq 0$, then $X$ is a
  relational interchange monoid.
\item If $Q^X$ is an interchange quantale and $E\neq \emptyset$, then
  $Q$ is an interchange quantale.
\end{enumerate}
\end{theorem}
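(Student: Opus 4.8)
The plan is to read off all three statements from the correspondence results of Sections~\ref{S:correspondence-interchange} and~\ref{S:further-correspondences}, assembling a quantale structure for each composition separately and then importing the single interchange law~(\ref{eq:i7}); the derived laws~(\ref{eq:i1})--(\ref{eq:i6}) never have to be handled directly, as Corollary~\ref{P:relinterchange-redundant} supplies them for free. For part~(1), since $X$ is a relational interchange monoid both $(X,\srel,E)$ and $(X,\prel,E)$ are relational monoids, and since $Q$ is an interchange quantale both $(Q,\scomp,1)$ and $(Q,\pcomp,1)$ are unital quantales. I would apply Theorem~\ref{P:conv-algebras}(3) once to $(\srel,\scomp)$ and once to $(\prel,\pcomp)$, obtaining that $(Q^X,\sconv)$ and $(Q^X,\pconv)$ are unital quantales with units $\sid$ and $\pid$. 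Because $E$ and $1$ are shared, $\sid=\pid$, so $Q^X$ carries a single unit; Proposition~\ref{P:correspondence1} with $k=7$ then transports~(\ref{eq:ri7}) in $X$ and~(\ref{eq:i7}) in $Q$ to~(\ref{eq:i7}) in $Q^X$, completing the interchange-quantale structure.

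Part~(2) runs this backwards under the nondegeneracy hypothesis $1\neq 0$. Associativity of $\sconv$ and $\pconv$, which is part of $Q^X$ being an interchange quantale, yields relational associativity of $\srel$ and $\prel$ via Corollary~\ref{P:assoc-correspondence1}(2), whose nondegeneracy requirement is met by $1\scomp(1\scomp 1)=1\neq 0$ and its parallel analogue. Proposition~\ref{P:correspondence-units}(2), applied to each composition, produces relational unit sets $\sE$ and $\pE$ for which $\sid$ and $\pid$ are the respective convolution units; as $Q^X$ has a single unit these coincide, and since $1\neq 0$ the equality $\sid=\pid$ forces $\sE=\pE=:E$. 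Finally~(\ref{eq:ri7}) in $X$ follows from Proposition~\ref{P:correspondence2} with $k=7$, whose hypothesis $(D_4)$ is witnessed by $(1\pcomp 1)\scomp(1\pcomp 1)=1\neq 0$. Hence $X$ is a relational interchange monoid.

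Part~(3) is the dual of part~(2) with the roles of $X$ and $Q$ swapped and $E\neq\emptyset$ in place of $1\neq 0$. Fixing any $e\in E$, the identities $\srel^e_{ee}$ and $\prel^e_{ee}$ recorded after Proposition~\ref{P:correspondence-units} witness the relational nondegeneracy conditions $(RD_1)$--$(RD_4)$ by taking every variable equal to $e$; associativity of $\scomp$ and $\pcomp$ then comes from Corollary~\ref{P:assoc-correspondence1}(3), unitality of $Q$ from Proposition~\ref{P:correspondence-units}(3) with the single unit of $Q^X$ forcing $\se=\pe$, and~(\ref{eq:i7}) in $Q$ from Proposition~\ref{P:correspondence3} with $k=7$. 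I expect the only real difficulty in all three parts to be the bookkeeping about units: verifying in~(1) that the two convolution units genuinely coincide, and extracting in~(2) and~(3) a single shared relational, respectively quantalic, unit from the single unit of $Q^X$, while checking that the generic degeneracy conditions $(D_k)$ and $(RD_k)$ collapse under $1\neq 0$ and $E\neq\emptyset$ exactly as described in the discussion preceding Corollary~\ref{P:correspondence23-cor}.
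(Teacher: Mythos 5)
Your proposal is correct and follows essentially the same route as the paper, which likewise assembles the theorem from Corollary~\ref{P:assoc-correspondence1}, Proposition~\ref{P:correspondence-units} and Propositions~\ref{P:correspondence1}--\ref{P:correspondence3}. Your extra bookkeeping about the shared units and the collapse of the degeneracy conditions $(D_k)$ and $(RD_k)$ is exactly the content the paper delegates to its standing assumption of a single shared unit and to the discussion preceding Corollary~\ref{P:correspondence23-cor}.
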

\begin{proof}
The correspondence for associativity and units in the subquantales is
given by Corollary~\ref{P:assoc-correspondence1} and
Proposition~\ref{P:correspondence-units}; that for interchange between
the subquantales by Propositions~\ref{P:correspondence1},
\ref{P:correspondence2} and \ref{P:correspondence3}.
\end{proof}
Theorem~\ref{P:interchange-quantale-correspondence} shows that, up to
mild nondegeneracy assumptions, all interchange quantales of type
$X\to Q$ are obtained from a relational interchange monoid on $X$ and
an interchange quantale $Q$. To build such quantales, one should
therefore look for relational interchange monoids, and the
advantage is that fewer properties need to be checked.

Interchange quantales generalise concurrent quantales and are strongly
related to concurrent Kleene algebras~\cite{HMSW11}. The difference is
that here we do not assume ``parallel composition'' $\pcomp$ to be
commutative. Yet Theorem~\ref{P:interchange-quantale-correspondence}
adapts easily to the commutative case. For a concurrent quantale in
$Q^X$, an interchange monoid $X$ with relationally commutative $\prel$
and an interchange quantale $Q$ with commutative $\pcomp$ is needed. A
variant of Theorem~\ref{P:interchange-quantale-correspondence} then
follows from Corollary~\ref{P:comm-correspondence-unit} and
\ref{P:correspondence23-cor}. In particular, the nondegeneracy
assumptions simplify to non-triviality assumptions for units and unit
sets.


\section{Duality for Powerset Quantales}\label{S:duality}

An interesting specialisation of
Theorem~\ref{P:interchange-quantale-correspondence} is the case of
$Q=\bool$, which forms an interchange quantale with both
compositions being meet and both units of composition $1$. In
particular, in the booleans, $0\neq 1$.  The interchange law
(\ref{eq:i7}) holds trivially because
$(w\wedge x)\wedge (y\wedge z) = (w \wedge y)\wedge (x\wedge z)$ in
any semilattice by associativity and commutativity of meet.
\begin{corollary}\label{P:complex_cor}
  $\bool^X\cong \pow\, X$ is an interchange quantale if
  and only if $X$ is a relational interchange monoid.
\end{corollary}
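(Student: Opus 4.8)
The plan is to obtain the result as the direct instantiation $Q=\bool$ of Theorem~\ref{P:interchange-quantale-correspondence}, so the only genuine work is checking that $\bool$ qualifies as an interchange quantale and that the relevant nondegeneracy hypothesis holds. First I would record that $\bool$, with both compositions taken to be meet and both units taken to be the top element $1$, is a unital interchange quantale: meet is associative and, since $\bool$ is a frame, preserves arbitrary sups in each argument, so $(\bool,\le,\wedge,1)$ is a unital quantale; setting $\scomp=\pcomp=\wedge$ and $\se=\pe=1$ makes $\bool$ a unital bi-quantale, and (\ref{eq:i7}) holds because $(w\wedge x)\wedge(y\wedge z)=(w\wedge y)\wedge(x\wedge z)$ by associativity and commutativity of meet, as already noted above the statement.

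For the ``if'' direction, suppose $X$ is a relational interchange monoid. Since $\bool$ is an interchange quantale, Theorem~\ref{P:interchange-quantale-correspondence}(1) immediately yields that $\bool^X$ is an interchange quantale. For the ``only if'' direction, suppose $\bool^X$ is an interchange quantale. In $\bool$ we have $0\neq 1$, so Theorem~\ref{P:interchange-quantale-correspondence}(2) applies and shows that $X$ is a relational interchange monoid. Note that part (3) of that theorem is not needed here: we are not deriving the interchange-quantale structure of the weight algebra, since $Q=\bool$ is fixed and already known to be one.

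Finally I would make the isomorphism $\bool^X\cong\pow\,X$ explicit to justify the phrasing. Sending $f:X\to\bool$ to its support $\{x\in X\mid f\,x=1\}$ is a complete-lattice isomorphism under which the convolutions $\sconv$ and $\pconv$ become the complex (Minkowski) products induced by $\srel$ and $\prel$ on $\pow\,X$; thus the interchange-quantale structure transports verbatim between $\bool^X$ and $\pow\,X$. There is no substantive obstacle in this proof: all the content sits in Theorem~\ref{P:interchange-quantale-correspondence}, and the degeneracy hypotheses that complicate the general correspondences collapse here precisely because $0\neq 1$ holds in $\bool$ outright.
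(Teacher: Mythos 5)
Your proposal is correct and matches the paper's intended argument exactly: the paper derives this corollary by instantiating Theorem~\ref{P:interchange-quantale-correspondence} at $Q=\bool$, using precisely the two facts you check — that $\bool$ with $\scomp=\pcomp=\wedge$ and $\se=\pe=1$ is an interchange quantale (with (\ref{eq:i7}) holding by associativity and commutativity of meet) and that $0\neq 1$ in $\bool$, so parts (1) and (2) give the two directions. Your explicit identification of $\bool^X$ with $\pow\,X$ via supports is a harmless elaboration of what the paper treats as standard.
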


In this case, by Corollary~\ref{P:delta-rel-unital-cor} and Lemma~\ref{P:delta-rel},
$\srel^x_{yz} \Leftrightarrow (\delta_y \sconv \delta_z)\, x = 1$,
$\prel^x_{yz} \Leftrightarrow (\delta_y \pconv \delta_z)\, x = 1$ and
likewise for the other relational nondegeneracy conditions.

More interestingly, as a powerset boolean algebra, $\bool^X$ is
complete and atomic---a CABA---and a well known duality holds.  The
work of the Tarski school~\cite{JonssonT51} and Goldblatt~\cite{Goldblatt} shows
that categories of CABAs with $n$-ary operators and relational
structures with $n+1$-ary relations are dually equivalent.  Atomic
boolean prequantales are CABAs with a binary operator; relational
magmas are relational structures with a ternary relation.  Morphisms
in the category $\mathsf{ABP}$ of atomic boolean (pre)quantales
preserve sups, complementation and composition. A morphism $\rho$
between relational magmas $(X,R)$ and $(X',S)$ must satisfy
\begin{equation*}
R^x_{yz} \Rightarrow S^{(\rho\, x)}_{(\rho\, y)(\rho\, z)}
\end{equation*}
for all $x,y,z\in X$. A morphism is \emph{bounded} if, for all
$x,y,z\in X$,
\begin{equation*}
  S^{(\rho\, x)}_{yz} \Rightarrow \exists u,v\in X.\ \rho\, u =
  y\wedge \rho\, v= z \wedge R^x_{uv}.
\end{equation*}
The morphisms in the category $\mathsf{RM}$ of relational
magmas are assumed to be bounded.

Next we summarise this duality between categories.  With every atomic
boolean prequantale $Q$ one associates its dual relational
structure---its atom structure---$Q_+= \At\,Q$ by defining the ternary
relation $R$ in $Q$, as in Example~\ref{ex:memotrees}, by
\begin{equation*}
  R^\alpha_{\beta\gamma} \Leftrightarrow \alpha\le \beta\bullet \gamma
\end{equation*}
for all $\alpha,\beta,\gamma\in Q_+$. With very morphism
$\varphi:Q\to Q'$ in $\mathsf{ABP}$ one associates the map
$\varphi_+:\At\, Q'\to Q$ defined by
\begin{equation*}
  \varphi_+\, \beta = \bigwedge\{a\in Q\mid \beta \le \varphi\,
  a\}.
\end{equation*}
It is easy to check that $\varphi_+$ maps atoms in $Q'$ to atoms ind
$Q$.

Conversely, with every relational magma $(X, R)$ one associates its
dual convolution prequantale---its complex
algebra---$X^+= (\pow\, X,\subseteq,\ast)$. With every bounded
morphism $\rho:X\to X'$ one associates the contravariant powerset (or
preimage) functor $\rho^+: \pow\, X'\to \pow\, X$. It is defined, for
all $B\in X'$, by
\begin{equation*}
\rho^+\, B = \{x\in X\mid \rho\, x \in B\}.
\end{equation*}
In this context, our function $\delta:X\to X\to\bool$ is isomorphic to
the function $\eta:X\to \pow\, X$ defined by $\eta=\{-\}$.
Then $(\delta_y \ast \delta_z)\, x = 1 \Leftrightarrow x\in \{y\}\ast
\{z\}$ and hence $R^x_{y\, z} \Leftrightarrow x\in \{y\}\ast \{z\}$ by
Corollary~\ref{P:delta-rel-unital-cor}.

\begin{proposition}[\cite{JonssonT51,HenkinMonkTarski71}]\label{P:JT-dual}
Let $Q$ be an atomic boolean prequantale and $X$ a relational magma.
\begin{enumerate}
\item $Q\cong (Q_+)^+$ with isomorphism $\sigma:Q\to \pow\, (\At\, Q)$
  given by $\sigma\, a = \{\alpha\mid \alpha\le a\}$.
\item $X\cong
(X^+)_+$ with isomorphism $\eta:X\to \At\, (\pow\, X)$ given by
$\eta\, x = \{x\}$.
\end{enumerate}
\end{proposition}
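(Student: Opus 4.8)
The plan is to treat the two isomorphisms separately. Part~(2) is essentially a matter of unwinding definitions, whereas the substance of Part~(1) lies in showing that the classical powerset representation of a complete atomic boolean algebra additionally respects the composition. Throughout I write $\bullet$ for the single operator on $Q$ and $\ast$ for the induced convolution on a powerset.

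For Part~(1), I would first invoke the standard Tarski representation of a CABA: every element satisfies $a = \Sup\{\alpha\in\At\, Q\mid \alpha\le a\}$, so $\sigma\, a = \{\alpha\mid \alpha\le a\}$ is a bijection onto $\pow\,(\At\, Q)$ with inverse $S\mapsto \Sup S$, and it preserves $\le$, arbitrary sups and meets, and complementation. It then remains to check that $\sigma$ is a homomorphism for the products, that is, $\sigma(a\bullet b) = \sigma\, a \ast \sigma\, b$, where on the right $\ast$ is convolution on $\pow\,(\At\, Q)$ with respect to the atom relation $R^\alpha_{\beta\gamma}\Leftrightarrow \alpha\le \beta\bullet\gamma$; explicitly $\sigma\, a \ast \sigma\, b = \{\alpha\mid \exists \beta\le a,\ \gamma\le b.\ \alpha\le \beta\bullet\gamma\}$. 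Since $\bullet$ preserves sups in both arguments, $a\bullet b = \Sup\{\beta\bullet\gamma\mid \beta\le a,\ \gamma\le b\}$. The inclusion $\sigma\, a\ast\sigma\, b\subseteq \sigma(a\bullet b)$ is then immediate from monotonicity of $\bullet$, and for the reverse inclusion I would use that atoms of a CABA are completely join-prime: if $\alpha\le\Sup_i x_i$ then $\alpha\le x_i$ for some $i$. Applied to the join above, this produces $\beta\le a$ and $\gamma\le b$ with $\alpha\le\beta\bullet\gamma$, whence $\alpha\in\sigma\, a\ast\sigma\, b$.

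For Part~(2), the atoms of $\pow\, X$ are precisely the singletons, so $\eta\, x = \{x\}$ is a bijection of $X$ onto $\At\,(\pow\, X)$. By the definition of the atom structure, the ternary relation on $(X^+)_+$ is $R^{\{x\}}_{\{y\}\{z\}}\Leftrightarrow \{x\}\subseteq \{y\}\ast\{z\}$, and the remark preceding the proposition, which rests on Corollary~\ref{P:delta-rel-unital-cor}, gives $\{y\}\ast\{z\} = \{x\mid R^x_{yz}\}$. Hence $R^{\{x\}}_{\{y\}\{z\}}\Leftrightarrow R^x_{yz}$, so $\eta$ carries the relation on $X$ exactly onto that of $(X^+)_+$; being a bijection that both preserves and reflects $R$, it is an isomorphism in $\mathsf{RM}$.

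The only genuinely non-formal step is the complete join-primeness of atoms used in Part~(1), and this is where the boolean rather than merely lattice-theoretic structure is essential. For an atom $\alpha$, the failure $\alpha\not\le x$ forces $\alpha\wedge x = 0$; then the infinite distributive law $\alpha\wedge\Sup_i x_i = \Sup_i(\alpha\wedge x_i)$, valid in any complete boolean algebra, shows that $\alpha\le\Sup_i x_i$ yields $\alpha = \Sup_i(\alpha\wedge x_i)$, which would be $0$ if every $\alpha\wedge x_i = 0$; so some $\alpha\wedge x_i = \alpha$, i.e.\ $\alpha\le x_i$. Everything else reduces to sup-preservation of $\bullet$ and the classical CABA representation, both of which may be cited.
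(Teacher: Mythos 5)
Your proof is correct and follows essentially the same route as the paper: part (2) is the same unwinding of $\{y\}\ast\{z\}=\{x\mid R^x_{yz}\}$ via the singleton bijection, and part (1) rests, as in the paper, on the classical CABA representation together with the atom-level correspondence for $\alpha\le\beta\bullet\gamma$. The only difference is one of packaging: where the paper invokes a cited criterion that a bijection of atoms preserving and reflecting $\alpha\le\beta\bullet\gamma$ extends to an isomorphism, you verify $\sigma(a\bullet b)=\sigma\,a\ast\sigma\,b$ directly from sup-preservation of $\bullet$ and the complete join-primeness of atoms, which is precisely the content hidden inside that criterion.
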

To prove (1) one can use that any bijection $\varphi$ between the
atoms of two atomic boolean prequantales $Q$ and $Q'$ extends to an
isomorphism if and only if
$\alpha\le \beta\bullet \gamma\Leftrightarrow \varphi\, \alpha \le
\varphi\, \beta\bullet \varphi\, \gamma$
for all $\alpha,\beta,\gamma\in \At\, Q$.  The bijection
$\eta$ between atoms in $Q$ and atoms in $\pow\, Q$ satisfies this
condition, and it turns out that $\sigma$ is its extension.  For (2)
it is easy to check that the bijection $\eta$ is a relational magma
morphism.
\begin{proposition}[\cite{Goldblatt}]\label{P:Goldblatt-dual1}
The maps $(-)^+:\mathsf{RM}\to \mathsf{ABP}$ and
  $(-)_+:\mathsf{ABP}\to \mathsf{RM}$ are contravariant functors.
\end{proposition}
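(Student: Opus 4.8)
The plan is to verify, for each of the two assignments, the three defining conditions of a contravariant functor: that the object map lands in the target category, that the arrow map produces a genuine morphism of the target category, and that identities are preserved while composition is reversed. I would treat $(-)^+$ first, since it is largely routine, and then spend the effort on $(-)_+$, where the substance lies.

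For $(-)^+:\mathsf{RM}\to\mathsf{ABP}$, on objects $\pow X$ is a CABA and, by the definition of convolution via sups, $\ast$ preserves arbitrary unions in each argument, so $X^+$ is an atomic boolean prequantale. On arrows, $\rho^+$ is the preimage map, which preserves arbitrary unions and complements by elementary set theory; the only real point is preservation of composition, $\rho^+(B\ast C)=\rho^+B\ast\rho^+C$. Here I would obtain the inclusion $\supseteq$ from the relation-preservation of $\rho$ (if $R^x_{uv}$ then $S^{(\rho x)}_{(\rho u)(\rho v)}$) and the inclusion $\subseteq$ from boundedness (a decomposition of $\rho\,x$ in $X'$ lifts back to a decomposition of $x$ in $X$). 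Functoriality is then immediate, since $(\id_X)^+=\id_{\pow X}$ and $(\rho\circ\sigma)^+=\sigma^+\circ\rho^+$ are just the standard contravariance of preimage on sets.

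For $(-)_+:\mathsf{ABP}\to\mathsf{RM}$ the object part is trivial—$\At Q$ with its ternary relation is a relational magma—so the content is the arrow part. The first step I would record is the Galois adjunction $\varphi_+\beta\le a \Leftrightarrow \beta\le\varphi\, a$, for $\beta\in\At Q'$ and $a\in Q$: since an $\mathsf{ABP}$-morphism preserves sups and complements it preserves infs, whence $\beta\le\varphi(\varphi_+\beta)$, so $\varphi_+\beta$ is the least $a$ with $\beta\le\varphi\,a$, which is exactly the adjunction. (That $\varphi_+$ sends atoms to atoms is already noted in the text.) Relation-preservation of $\varphi_+$ then follows formally: from $\beta\le\gamma\bullet\delta$, together with $\gamma\le\varphi(\varphi_+\gamma)$, $\delta\le\varphi(\varphi_+\delta)$, monotonicity of $\bullet$ and composition-preservation of $\varphi$, one gets $\beta\le\varphi(\varphi_+\gamma\bullet\varphi_+\delta)$, and the adjunction delivers $\varphi_+\beta\le\varphi_+\gamma\bullet\varphi_+\delta$.

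The main obstacle I expect is boundedness of $\varphi_+$. Given atoms $\alpha_1,\alpha_2\in\At Q$ with $\varphi_+\beta\le\alpha_1\bullet\alpha_2$, the adjunction and composition-preservation give $\beta\le\varphi\alpha_1\bullet\varphi\alpha_2$. By atomicity each $\varphi\alpha_i$ is the join of the atoms below it, and by the adjunction those atoms are precisely the fibre $(\varphi_+)^{-1}(\alpha_i)$; since $\bullet$ preserves sups, $\varphi\alpha_1\bullet\varphi\alpha_2$ becomes the join of all $\gamma\bullet\delta$ with $\varphi_+\gamma=\alpha_1$ and $\varphi_+\delta=\alpha_2$. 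The crucial ingredient is that an atomic boolean prequantale is a CABA, hence completely distributive, so every atom is completely join-prime; thus $\beta$ below this join forces $\beta\le\gamma\bullet\delta$ for some such $\gamma,\delta$, which are exactly the witnesses boundedness demands. Finally, functoriality is cheap: $(\id_Q)_+=\id_{\At Q}$ since $\bigwedge\{a\mid\beta\le a\}=\beta$, and $(\varphi\circ\psi)_+=\psi_+\circ\varphi_+$ because both $\psi_+\circ\varphi_+$ and $(\varphi\circ\psi)_+$ satisfy the same adjunction characterisation relative to $\varphi\circ\psi$, so they coincide by uniqueness of the adjoint.
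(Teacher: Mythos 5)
Your argument is correct, and on the $(-)^+$ half it coincides with the paper's: the two inclusions of $\rho^+ B_1 \ast \rho^+ B_2 = \rho^+(B_1\ast B_2)$ are obtained from the morphism condition and from boundedness in exactly the way the text indicates (your ``elementary set theory'' for sups and complements is the concrete form of the paper's appeal to Stone's theorem for CABAs). The divergence is on the $(-)_+$ half, which the paper dispatches with the single remark that one must ``check functoriality'' and otherwise defers to Goldblatt; you actually supply the content. Your two additions are the right ones: the Galois characterisation $\varphi_+\beta\le a\Leftrightarrow\beta\le\varphi\,a$ (valid because an $\mathsf{ABP}$-morphism preserves sups and complements, hence infs, so $\beta\le\varphi(\varphi_+\beta)$), which makes relation-preservation of $\varphi_+$ and the identity $(\varphi\circ\psi)_+=\psi_+\circ\varphi_+$ formal consequences of uniqueness of adjoints; and the boundedness argument, where you correctly isolate the one place atomicity and complete distributivity are genuinely needed, namely that an atom $\beta\le\bigvee\{\gamma\bullet\delta\mid\varphi_+\gamma=\alpha_1,\ \varphi_+\delta=\alpha_2\}$ is completely join-prime and so lies below a single $\gamma\bullet\delta$, producing the witnesses required for $\varphi_+$ to be a morphism of $\mathsf{RM}$ at all. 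What your route buys is a self-contained verification that $(-)_+$ really lands in the category of \emph{bounded} morphisms, a point the paper's sketch passes over; what the paper's route buys is brevity, since the result is standard and attributed to \cite{Goldblatt}.
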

For $(-)^+$, one must show that $\rho^+$ preserves sups,
complementation and composition, for any bounded morphism $\rho$. The
first two properties follow from Stone's theorem for CABAs.  Proving
$\rho^+\, B_1 \ast \rho^+\, B_2 \subseteq \rho^+\, (B_1 \ast B_2)$ for
$B_1,B_2\in X'$ requires that $\rho$ is a morphism, the converse
inclusion that it is bounded.  Proving $(-)_+$ requires checking functoriality.
\begin{theorem}[\cite{Goldblatt}]\label{P:Goldblatt-dual2}
  The composites $(-)_+\circ (-)^+$ and $(-)^+\circ (-)_+$ are
  naturally isomorphic to the identity functors on the categories
  $\mathsf{ABP}$ and
  $\mathsf{RM}$, respectively. The two categories are dually
  equivalent.
\end{theorem}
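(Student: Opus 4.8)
The plan is to upgrade the object-level data already assembled into a genuine dual equivalence. By Proposition~\ref{P:JT-dual} we have, for each object, isomorphisms $\sigma_Q\colon Q\to (Q_+)^+$ in $\mathsf{ABP}$ and $\eta_X\colon X\to (X^+)_+$ in $\mathsf{RM}$, and by Proposition~\ref{P:Goldblatt-dual1} the assignments $(-)^+$ and $(-)_+$ are contravariant functors, so the composites $((-)_+)^+$ and $((-)^+)_+$ are covariant endofunctors of $\mathsf{ABP}$ and $\mathsf{RM}$. It therefore remains only to check that the families $\sigma=(\sigma_Q)_Q$ and $\eta=(\eta_X)_X$ are natural: a pair of contravariant functors admitting natural isomorphisms $\id\cong((-)_+)^+$ and $\id\cong((-)^+)_+$ is by definition a dual equivalence, which gives the final sentence for free.

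First I would establish naturality of $\sigma$. For a morphism $f\colon Q\to Q'$ in $\mathsf{ABP}$ the relevant square commutes iff $\sigma_{Q'}\circ f=(f_+)^+\circ\sigma_Q$ as maps $Q\to(Q'_+)^+$. Unwinding the definitions, the right-hand composite sends $a$ to $\{\beta\in\At\,Q'\mid f_+\,\beta\le a\}$ and the left-hand one to $\{\beta\in\At\,Q'\mid\beta\le f\,a\}$, so naturality reduces to the adjunction identity $f_+\,\beta\le a\Leftrightarrow\beta\le f\,a$ for all $\beta\in\At\,Q'$ and $a\in Q$. The direction $\Leftarrow$ is immediate from $f_+\,\beta=\bigwedge\{a'\mid\beta\le f\,a'\}$. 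For $\Rightarrow$ I would use that a morphism of atomic boolean prequantales preserves complementation as well as sups, hence preserves arbitrary meets; consequently $f(f_+\,\beta)=\bigwedge\{f\,a'\mid\beta\le f\,a'\}\ge\beta$, and monotonicity of $f$ then turns $f_+\,\beta\le a$ into $\beta\le f(f_+\,\beta)\le f\,a$.

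Naturality of $\eta$ is more direct and needs no adjunction. For a bounded morphism $\rho\colon X\to X'$ one computes $(\rho^+)_+\,\{x\}=\bigwedge\{B\subseteq X'\mid x\in\rho^+\,B\}=\bigwedge\{B\mid\rho\,x\in B\}=\{\rho\,x\}=\eta_{X'}(\rho\,x)$, so the square for $\eta$ commutes on atoms, which suffices since the relevant objects are complete atomic boolean algebras and everything in sight preserves sups.

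The main obstacle is the $\Rightarrow$ direction of the adjunction identity in the naturality of $\sigma$: it is precisely here that the boolean (rather than merely prequantale) structure enters, since the argument relies on $f$ preserving meets via complementation and would fail for a map preserving only sups. Everything else---the object isomorphisms and the functoriality---has already been supplied by Propositions~\ref{P:JT-dual} and~\ref{P:Goldblatt-dual1}, so once both naturality squares are verified the dual equivalence of $\mathsf{ABP}$ and $\mathsf{RM}$ follows formally.
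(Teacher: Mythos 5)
Your proposal is correct and follows the same route the paper indicates: combine the object-level isomorphisms of Proposition~\ref{P:JT-dual} with the functoriality of Proposition~\ref{P:Goldblatt-dual1}, and then verify that $\sigma$ and $\eta$ make the two naturality squares commute. The paper merely displays those squares and defers to Goldblatt, whereas you actually carry out the checks --- the adjunction identity $\varphi_+\,\beta\le a\Leftrightarrow\beta\le\varphi\,a$ (using meet-preservation via complementation) and the computation $(\rho^+)_+\{x\}=\{\rho\,x\}$ --- and both are sound.
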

The isomorphisms are $\sigma_Q: Q\mapsto (Q_+)^+$ and
$\eta_X:X\mapsto (X^+)_+$. Showing that these are components of a
natural transformations requires checking that the following diagrams commute.
\begin{equation*}
\begin{tikzcd}[column sep= 1cm, row sep=1cm]
Q \arrow{r}{\varphi}\arrow{d}{\sigma_Q}& Q'\arrow{d}{\sigma_{Q'}}\\
(Q_+)^+\arrow{r}{(\varphi_+)^+} & (Q'_+)^+
\end{tikzcd}
\qquad\qquad
\begin{tikzcd}[column sep= 1cm, row sep=1cm]
X \arrow{r}{\rho}\arrow{d}{\eta_X}& X'\arrow{d}{\eta_{X'}}\\
(X^+)_+\arrow{r}{(\rho^+)_+} & (X'^+)_+
\end{tikzcd}
\end{equation*}
\begin{question}
  Is there a Stone-type duality for non-atomic (boolean) quantales and
  arbitrary convolution algebras? 
\end{question}


\section{Interchange Kleene Algebras}\label{S:interchange-kas}

We mentioned in Section~\ref{S:summary} that in many classical
convolution algebras, including Rota's incidence algebras and the
formal power series of Scha\"utzenberger and Eilenberg's approach to
formal languages, the underlying set $X$ has a finite decomposition
property (Rota calls partial orders with this property \emph{locally
  finite}~\cite{Rota64}).  As infinite sups are then not needed to
express convolutions, one can specialise quantales to semirings and
Kleene algebras, and in particular concurrent Kleene algebras.  This
is the purpose of this section.

\begin{definition}
  A \emph{dioid} is a semiring $(S,+,\bullet,0,1)$ in which
  addition is idempotent.
\end{definition}
Hence $(S,+,0)$ is a sup-semilattice ordered by
$a\le b\Leftrightarrow a+b=b$ and least element $0$. Moreover $\bullet$
preserves $\le$ in both arguments.  A quantale can thus be seen as a
complete dioid.
\begin{definition}
  An \emph{interchange semiring} is a structure
  $(S,+,\scomp,\pcomp, 0,1)$ such that $(S,+,\scomp,0,1)$ and
  $(S,+,\pcomp,0,1)$ are dioids, and the
  interchange law (\ref{eq:i7}) holds.
\end{definition}
The six small interchange laws are of course derivable in this
setting.
\begin{definition}~
\begin{enumerate}
\item A \emph{Kleene algebra} is a dioid with a unary star operation
  $^\star$ that satisfies the unfold and induction axioms
\begin{equation*}
  1+a\bullet a^\star \le a^\star,\qquad c+a\bullet b\le b\Rightarrow a^\star
  \bullet b\le b,\qquad 1+a^\star \bullet a\le a^\star, \qquad c+b\bullet a
  \le b\Rightarrow b\bullet a^\star \le b.
\end{equation*}
\item An \emph{interchange Kleene algebra} is a structure
  $(K,+,\scomp,\pcomp, 0,1,^{\sstar},^{\pstar})$ such that
  $(K,+,\scomp,0,1,^{\sstar})$ and $(K,+,\pcomp,0,1,^{\pstar})$ are
  Kleene algebras and (\ref{eq:i7}) holds.
\end{enumerate}
\end{definition}
We write $(-)^\star$ instead of the usual $(-)^\ast$ to distinguish
the Kleene star from the convolution operation.

To translate Theorem~\ref{P:interchange-quantale-correspondence} into
the Kleene algebra setting all sups must be guaranteed to be finite.
This can be achieved by imposing that all functions have finite
support, or that the relations $\srel^x_{y\, z}$ and $\prel^x_{yz}$
are \emph{finitely decomposable}, that is, for each $x$ the sets
$\{(y,z)\mid \srel^x_{yz}\}$ and $\{(y,z)\mid \prel^x_{yz}\}$ are
finite. If this is the case we call the relational interchange monoid
\emph{finitely decomposable} as well.

\begin{theorem}\label{P:interchange-semiring-correspondence}
If $X$ is a finitely decomposable relational interchange monoid
  and $S$ an interchange semiring, then $S^X$ is an interchange
  semiring.
\end{theorem}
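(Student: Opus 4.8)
The plan is to reduce everything to Theorem~\ref{P:interchange-quantale-correspondence}(1) by observing that finite decomposability forces every supremum occurring in a convolution to be a \emph{finite} supremum, and that a dioid already possesses all finite suprema (as the joins given by $+$) with both products distributing over them. Thus the only feature of a quantale that the earlier arguments exploit---distribution of $\scomp$ and $\pcomp$ over \emph{arbitrary} sups---is never actually needed beyond the finite case, and the whole development can be replayed with $S$ in place of a quantale $Q$.

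First I would check well-definedness: for $f,g\in S^X$ and each $x$ the index set $\{(y,z)\mid \srel^x_{yz}\}$ is finite, so $(f\sconv g)\,x=\Sup_{\srel^x_{yz}} f\,y\scomp g\,z$ is a finite join of elements of $S$ and hence lies in $S$; similarly for $\pconv$. So $\sconv$ and $\pconv$ are total operations on $S^X$, and the constant-$0$ function annihilates. Because $X$ is a relational interchange monoid, $\srel$ and $\prel$ share a single unit set $E$, and $S$ has a single unit $1$, so $\sid=\pid=\id_E$ serves as one candidate unit in $S^X$.

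Next I would verify the two dioid structures and the interchange law by specialising the existing proofs. Pointwise, $(S^X,+,0)$ is a commutative idempotent monoid since $(S,+,0)$ is. Distributivity of $\sconv$ over $+$ follows from distributivity of $\scomp$ over $+$ in $S$, using that $+$ is the semilattice join and that all the joins involved are finite; the same holds for $\pconv$. Associativity of $\sconv$ is the finite-join version of Corollary~\ref{P:assoc-correspondence1}(1), and the unit laws for $\id_E$ the finite-join version of Proposition~\ref{P:correspondence-units}(1); in each case every step of the original proof is a rearrangement of finite joins together with finite distributivity, valid in any dioid. Finally, the interchange law (\ref{eq:i7}) in $S^X$ is obtained by reading the computation in the proof of Proposition~\ref{P:correspondence1} (case $k=7$) with $\Sup$ meaning finite join: the chain of equalities merely redistributes $\scomp$ and $\pcomp$ over finite joins, and the single inequality applies (\ref{eq:i7}) in $S$ termwise while (\ref{eq:ri7}) in $X$ guarantees that every index tuple on the left reappears on the right, so that the left-hand join is bounded by the right-hand one in the semilattice order.

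The main obstacle---the only place where finite decomposability does genuine work beyond well-definedness---is confirming that the \emph{composite} index sets produced when convolutions are nested remain finite, so that the collapsed joins are truly finite and stay inside $S$. For associativity this is $\{(u,v,w)\mid \exists y.\ \srel^y_{uv}\wedge\srel^x_{yw}\}$, and for (\ref{eq:i7}) it is $\{(t,u,v,w)\mid \exists y,z.\ \prel^y_{tu}\wedge\srel^x_{yz}\wedge\prel^z_{vw}\}$. Each is a finite union, indexed by the finitely many decompositions of $x$, of finite sets of sub-decompositions, and is therefore finite. An alternative, more structural route would be to embed $S$ into its ideal-completion quantale $\hat S=\mathrm{Idl}(S)$ via $s\mapsto{\downarrow}s$---which preserves $0$, $1$, $\scomp$, $\pcomp$ and finite joins, reflects order, and lifts (\ref{eq:i7})---apply Theorem~\ref{P:interchange-quantale-correspondence}(1) to the interchange quantale $\hat S$, and then use finite decomposability to show that convolutions of principal-ideal-valued functions stay principal-ideal-valued, exhibiting $S^X$ as a sub-interchange-semiring of $\hat S^X$.
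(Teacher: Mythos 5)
Your proposal is correct and takes essentially the same route as the paper, whose entire proof consists of the single remark that in the construction of the convolution algebra on $S^X$ all sups remain finite; you have simply made explicit the details the paper leaves to the reader (well-definedness, finiteness of the nested index sets arising in the associativity and interchange computations, and the replay of the quantale-level arguments with finite joins and finite distributivity in a dioid). The ideal-completion embedding you sketch at the end is a reasonable alternative but is not needed and is not what the paper does.
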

\begin{proof}~
In the construction of the convolution algebra on $S^X$ it can
  be checked that all sups remain finite.
\end{proof}

It is easy to generalise these results from dioids to proper semirings
that are ordered. We do not spell out the details. Beyond that it
seems interesting to extend
Theorem~\ref{P:interchange-semiring-correspondence} to interchange
Kleene algebras. First of all, every interchange quantale is an
interchange Kleene algebra, because $^{\sstar}$ and $^{\pstar}$ can be
defined explicitly in this setting using Kleene's fixpoint theorem:
$x^{\sstar} = \Sup_{k\in\mathbb{N}} x^{\textcolor{orange}{k}}$ and
$x^{\pstar} = \Sup_{k\in\mathbb{N}} x^{\textcolor{teal}{k}}$ satisfy
the star axioms, with powers defined recursively in the standard way
as $x^{\textcolor{orange}{0}} = 1$ and
$x^{\textcolor{orange}{i+1}} = x \scomp x^{\textcolor{orange}{i}}$ and
likewise for $x^{\textcolor{teal}{i}}$.

When infinite sups and the sup-preservation properties required for
Kleene's fixpoint theorem are not available, a different approach is
needed.  We have already shown~\cite{ArmstrongSW14} that
formal power series---functions of type $\Sigma^\ast\to K$, where
$\Sigma^\ast$ is the free monoid over the finite alphabet $\Sigma$ and
$K$ a Kleene algebra---form Kleene algebras. In this setting, the star
of a power series can be defined recursively~\cite{KuichS86} as
\begin{equation*}
  f^\star\, \varepsilon = (f\, \varepsilon)^\star, \qquad f^\star\, x = (f\, \varepsilon)^\star\bullet
  \sum_{y,z:x=y\cdot z,y\neq \varepsilon} f\, y \bullet f^\star\, z,
\end{equation*}
where $\sum$ indicates a finite sup. The verification of the star
axioms for power series uses structural induction over finite
words. Yet this is not applicable for general ternary relations.
Instead we use a notion of grading that been used
for arbitrary monoids by Sakarovitch~\cite{Sakarovitch03}.

The function $|-|:X\to\mathbb{N}$ is a \emph{grading} on the
relational monoid $(R,X,E)$ if
\begin{itemize}
\item $|x|>0$ for all $x\in X$ such that $x\notin E$,
\item $|x|=|y|+|z|$ whenever $R^x_{yz}$.
\end{itemize}
Then $(X,R,E)$ is \emph{graded} if there is a grading on $X$.
Thus, in a graded monoid $|e|=0$ if and only if $e\in E$.

\begin{proposition}\label{P:ka-correspondence-prop}
  If $(X,R,\{e\})$ is a graded, finitely decomposable, relational monoid and $K$ a Kleene
  algebra, then $K^X$ is a Kleene algebra with
\begin{equation*}
  f^\star\, e = (f\, e)^\star, \qquad f^\star\, x = (f\, e)^\star \bullet
  \sum_{y,z:R^x_{y
z},y\neq e} f\, y \bullet f^\star\, z.
\end{equation*}
\end{proposition}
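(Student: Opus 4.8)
The plan is to equip $K^X$ with its dioid structure and then verify the four Kleene star axioms, using the grading to replace the transfinite/fixpoint reasoning available in a quantale by ordinary well-founded induction. Since $(X,R,\{e\})$ is a relational monoid, $R$ is relationally associative with singleton unit set $\{e\}$, so by the single-composition specialisation of Theorem~\ref{P:conv-algebras} together with finite decomposability (which keeps all sups in convolutions finite, exactly as in Theorem~\ref{P:interchange-semiring-correspondence}), $K^X$ is a dioid ordered pointwise, with convolution $\ast$ as product and $\id_E=\delta_e$ as unit. I would first check that $f^\star$ is well defined by well-founded recursion on the grading: in the recursive clause $R^x_{yz}$ with $y\neq e$ forces $|y|\geq 1$ (as $|w|=0$ iff $w=e$ in a graded monoid with $E=\{e\}$), hence $|z|=|x|-|y|<|x|$, so every recursive call strictly decreases the grade, while finite decomposability makes each sum a finite element of $K$.

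Next I would establish the two left-handed axioms, both by induction on $|x|$. For the left unfold I would show $\id_E+f\ast f^\star=f^\star$ with equality: splitting the sum $(f\ast f^\star)\,x=\Sup_{R^x_{yz}}f\,y\bullet f^\star\,z$ at $y=e$, the relational left-unit axioms of Definition~\ref{D:rel-units} give $R^x_{ez}\Leftrightarrow z=x$, so the $y=e$ part contributes $f\,e\bullet f^\star\,x$ while the $y\neq e$ part is exactly the sum defining $f^\star$; the identity then reduces, via the grade base case $f^\star\,e=(f\,e)^\star$ and the unfold law $1+a\bullet a^\star=a^\star$ in $K$, to a pointwise computation. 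For the left induction $c+f\ast g\le g\Rightarrow f^\star\ast g\le g$, I would set $G=f^\star\ast g$, use the left unfold and relational associativity to rewrite $G=g+f\ast G$, and prove $G\,x\le g\,x$ by induction on $|x|$: peeling the first factor reduces the $y\neq e$ terms to $(f\ast g)\,x\le g\,x$ by the induction hypothesis, while the single $y=e$ term $f\,e\bullet G\,x$ is absorbed using $f\,e\bullet g\,x\le g\,x$ (the $y=e$ summand of the hypothesis) and the induction axiom of $K$. This exhibits $K^X$ as a left-handed Kleene algebra.

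Finally I would obtain the two right-handed axioms. The right unfold inequality $\id_E+f^\star\ast f\le f^\star$ is purely equational from the left-handed axioms, since $\id_E\le f^\star$ and $f^\star\ast f\le f^\star$ both follow from left unfold and left induction. The genuine obstacle is the right induction law, because the definition of $f^\star$ is left-biased and so does not directly support peeling from the right. I would resolve this by opposition: the opposite relational monoid given by $R^x_{zy}$ is again graded and finitely decomposable, and running the construction over the opposite Kleene algebra $K^{\mathrm{op}}$ yields, via the mirror-image right-peeling recursion, an element $f^{\star_r}$ satisfying the right unfold as an equality and the right induction law, proved by the same grade inductions read through the relational right-unit axioms. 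It then remains to show $f^\star=f^{\star_r}$: each is a solution of the fixpoint inequality defining the other (the left-handed axioms give $\id_E+f^\star\ast f\le f^\star$, the right-handed ones give $\id_E+f\ast f^{\star_r}\le f^{\star_r}$), so the leastness encoded in the two induction laws forces $f^{\star_r}\le f^\star$ and $f^\star\le f^{\star_r}$. Hence the single element $f^\star=f^{\star_r}$ validates all four axioms and $K^X$ is a Kleene algebra. The main difficulty throughout is the bookkeeping of the boundary terms $y=e$ and $z=e$ via the relational unit axioms and the use of relational associativity to re-associate decompositions; the right induction law, needing the mirror construction and the coincidence of the two stars, is where the left bias of the recursion genuinely bites.
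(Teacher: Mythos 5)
Your proposal is correct and, for the core of the argument, follows the paper's own route: reduce to the three laws $\id_E+f\ast f^\star\le f^\star$, $f\ast g\le g\Rightarrow f^\star\ast g\le g$ and $g\ast f\le g\Rightarrow g\ast f^\star\le g$ (discarding the derivable right unfold and the stronger induction forms), prove the left unfold as an equality by splitting the convolution sum at $y=e$ via the relational unit axioms, and prove the left induction law by induction on the grade $|x|$, using star induction in $K$ to absorb the boundary term $(f\,e)^\star\bullet g\,x$. Where you genuinely diverge is the right induction law: the paper dispatches it in one line ``by opposition from (2)'', whereas you observe that the recursion defining $f^\star$ is left-biased, so opposition literally yields the right-handed axioms only for the mirror-image, right-peeling star $f^{\star_r}$ built over the opposite relational monoid; you then close the gap by showing $f^\star=f^{\star_r}$ through mutual leastness (using $f^\star\ast f\le f^\star$ and $f\ast f^{\star_r}\le f^{\star_r}$, both derivable from the one-sided axiom sets already established, to feed each star's induction law with the other star as the bound). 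This extra step is sound and in fact addresses a real subtlety that the paper's terse appeal to opposition glosses over; the cost is the auxiliary construction of $f^{\star_r}$ and the coincidence argument, the benefit is that the right-handed axioms are then obtained for the very element $f^\star$ named in the statement rather than for its mirror image. The only point to tighten is the $y=e$ boundary term in your left induction step: as in the paper, you should factor $f^\star\,e=(f\,e)^\star$ out of $(f^\star\ast g)\,x$ explicitly before invoking star induction in $K$, since the raw summand $f\,e\bullet G\,x$ cannot be handled by the grade induction hypothesis alone.
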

\begin{proof}
  We need to check the unfold and induction axioms of Kleene
  algebra. First, it is well known that the unfold axiom
  $1+a^\star\bullet a \le a^\star$ is implied by the other axioms in
  any Kleene algebra, and can therefore be ignored.  Second, the
  axiom $c+a\bullet b\le b\Rightarrow a^\star \bullet c\le b$ follows
  from the simpler formula
  $a\bullet b\le b\Rightarrow a^\star \bullet b \le b$ and, by
  opposition, $c+b\bullet a\le b\Rightarrow c \bullet a^\star \le b$
  follows from $b\bullet a\le b\Rightarrow b \bullet a^\star \le b$, in
  any Kleene algebra~\cite{Kozen94}.  It
  thus remains to check that
\begin{equation*}
\id_e+f\ast f^\star\le f^\star,
\qquad
f \ast g \le g\Rightarrow f^\star \ast g \le g,
\qquad
g \ast f \le g\Rightarrow g \ast f^\star \le g
\end{equation*}
hold in the convolution algebra $K^X$.
\begin{enumerate}
\item $\id_e+f\ast f^\star = f^\star$. If $x=e$, then
$(\id_e + f \ast f^\star)\, e = 1 + (f\, e) \bullet (f\, e)^\star = (f\,
    e)^\star = f^\star \, e$.

 Otherwise, if $x\neq e$,
\begin{align*}
  (\id_e + f \ast f^\star)\, x &= \sum \left\{f\, y\bullet f^\star\, z \mid
  R^x_{yz}\right\}\\
&= f\, e \bullet f^\star \, x + \sum \left\{f\, y\bullet
  f^\star\, z \mid R^x_{yz}\land y \neq e\right\}\\
&= f\, e \bullet f^\star \, e \bullet \sum \left\{f\, y\bullet
  f^\star\, z \mid R^x_{yz}\land y \neq e\right\} + \sum \left\{f\, y\bullet
  f^\star\, z \mid R^x_{yz} \land y\neq e \right\}\\
&= \left(f\, e \bullet f^\star \, e + \id_E\, e\right) \bullet \sum \left\{f\, y\bullet
  f^\star\, z \mid R^x_{yz} \land y \neq e\right\}\\
&= \left(f\, e\right)^\star \bullet \sum \left\{f\, y\bullet f^\star\, z \mid
  R^x_{yz}\land y\neq e\right\}\\
&=f^\star \, x.
\end{align*}

\item
  $\left(\forall x.\ (f \ast g)\, x \le g\, x\right)\Rightarrow \left(\forall x.\
  (f^\star \ast g)\, x \le g\, x\right)$.  We proceed by induction on $|x|$.
\begin{enumerate}
\item Let $|x|=0$ and hence $x=e$. Then
  $(f^\star \ast g)\, e = (f\, e)^\star \bullet g\, e \le g\, e$
  follows from the assumption $f\, e\bullet g\, e\le g\, e$ and the
  first induction axiom of Kleene algebra.

\item Let $|x|>0$ and therefore $x\neq e$. Then, by the induction
  hypothesis, $\left(f\ast g\right)\, y\le g\, y$ holds for all $y$ with
  $|y|< |x|$.  In addition, the assumption implies that
  $\forall x,y,z.\ R^x_{yz} \Rightarrow f\, y\bullet g\, z\le g\, x$,
  from which
  $(f\, e)^\star\bullet g\, x = f^\star\, e \bullet g\, x \le g\, x$
  follows by star induction in $K$. With this property,
\begin{align*}
  (f^\star \ast g)\, x
 &= f^\star\, e \bullet g\, x +\sum \left\{f^\star\, e\bullet \sum\left\{f\,
   u \bullet f^\star\, v\mid R^y_{uv}\land u\neq e\right\} \bullet g\, z
   \mid R^x_{yz}\land y \neq e\right\}\\
&= f^\star\, e \bullet \left(g\, x + \sum\left\{(f\,
   u \bullet f^\star\, v) \bullet g\, z \mid \exists y.\ R^y_{uv}\land R^x_{yz}
  \land  u\neq e \wedge y\neq e\right\}\right)\\
&= f^\star\, e \bullet \left(g\, x + \sum\left\{ f\,
   u \bullet (f^\star\, v \bullet g\, z) \mid \exists y.\ R^x_{uy}\land
  R^y_{vz} \land u\neq e \land y\neq e\right\}\right)\\
&\le f^\star\, e \bullet \left(g\, x + \sum\left\{ f\, u \bullet (f^\ast \ast
  g)\, y \mid R^x_{uy}\land u\neq e \right\}\right)\\
&\le f^\star\, e \bullet \left(g\, x + \sum\left\{ f\, u \bullet g\, y\mid
  R^x_{uy} \right\}\right)\\
&\le f^\star\, e \bullet \left(g\, x + (f\ast g)\, x\right)\\
&= f^\star\, e \bullet \left(g\, x + g\, x\right)\\
&\le g\, x.
\end{align*}
The first step unfolds the definition of convolution and the Kleene
star in $K^X$. The second step applies distributivity laws in $K$; the
third one associativity in $X$ and $K$.  The fourth step introduces a
convolution as an upper bound, thus dropping the constraint $y\neq
e$. The fifth step applies the induction hypothesis to $y$. The
condition $u\neq e$ guarantees that $|y|<|x|$.  The sixth step applies
the assumption; the last step the derived property.
\end{enumerate}
\item $g \ast f \le g\Rightarrow g \ast f^\star \le g$
  follows by opposition from (2).\qedhere
\end{enumerate}
\end{proof}

\noindent The following theorem is then immediate.

\begin{theorem}\label{P:interchange-ka-correspondence}
If $X$ is a graded relational interchange monoid with unit $e$
  and $K$ an interchange Kleene algebra, then $K^X$ is an interchange
  Kleene algebra.
\end{theorem}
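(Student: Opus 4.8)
The plan is to read off the interchange Kleene algebra structure on $K^X$ from two results already in hand, applied separately to each composition: the interchange-semiring correspondence supplies the two dioid structures together with the interchange law (\ref{eq:i7}), while the Kleene-star correspondence of Proposition~\ref{P:ka-correspondence-prop} supplies the two stars. Concretely, the target structure is $(K^X,+,\sconv,\pconv,0,\id_E,^{\sstar},^{\pstar})$, with $+$, $0$ and $\id_E$ (the indicator of the common unit set $E=\{e\}$) as in the unital setting.

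First I would observe that the hypotheses split cleanly along the two compositions. Because $X$ is a relational interchange monoid with unit $e$, both $(X,\srel,\{e\})$ and $(X,\prel,\{e\})$ are relational monoids (Definition~\ref{D:relational-monoids}), graded by the single grading $|-|:X\to\mathbb{N}$, which by definition is compatible with both $\srel$ and $\prel$. Dually, since $K$ is an interchange Kleene algebra, both $(K,+,\scomp,0,1,^{\sstar})$ and $(K,+,\pcomp,0,1,^{\pstar})$ are Kleene algebras, and $K$ is in particular an interchange semiring.

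With this bookkeeping the argument is short. Theorem~\ref{P:interchange-semiring-correspondence} gives that $K^X$ is an interchange semiring: both $(K^X,+,\sconv,0,\id_E)$ and $(K^X,+,\pconv,0,\id_E)$ are dioids and (\ref{eq:i7}) holds in $K^X$. Applying Proposition~\ref{P:ka-correspondence-prop} once to $(X,\srel,\{e\})$ with $\scomp$ and once to $(X,\prel,\{e\})$ with $\pcomp$ then upgrades each dioid to a Kleene algebra, with $^{\sstar}$ and $^{\pstar}$ defined by the respective recursive star formulas of that proposition. Since the interchange law is insensitive to the choice of stars, combining these facts yields precisely that $K^X$ is an interchange Kleene algebra, which is why the statement is essentially immediate once the hypotheses are unpacked per composition.

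The only point requiring care is finiteness. The recursive star formulas and the interchange calculation of Proposition~\ref{P:correspondence1} are phrased with sums that must be finite, as a Kleene algebra admits no infinite sups; this is the finite-decomposability hypothesis required by both Theorem~\ref{P:interchange-semiring-correspondence} and Proposition~\ref{P:ka-correspondence-prop}. I would therefore make explicit that the grading guarantees termination of the star recursion (since $|z|<|x|$ whenever $y\neq e$), while finite decomposability guarantees well-definedness of the sums. Once both are in place no further computation is needed beyond invoking the two cited results for $\scomp$ and $\pcomp$ in turn.
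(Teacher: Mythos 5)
Your proposal is correct and follows essentially the same route as the paper, which derives the theorem as an immediate consequence of Theorem~\ref{P:interchange-semiring-correspondence} (for the two dioid structures and the interchange law) and Proposition~\ref{P:ka-correspondence-prop} applied once to each of $(X,\srel,\{e\})$ with $\scomp$ and $(X,\prel,\{e\})$ with $\pcomp$ (for the two stars). Your remark that finite decomposability must be supplied alongside the grading---since the grading alone does not force the sets $\{(y,z)\mid R^x_{yz}\}$ to be finite---is a worthwhile clarification of a hypothesis the theorem statement leaves implicit.
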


We have already discussed the relationship between interchange
quantales and concurrent quantales in
Section~\ref{S:interchange-quantales}, namely that concurrent
quantales are interchange quantales in which $\pcomp$ is commutative
and $\se=\pe$.  Similarly, concurrent semirings and concurrent Kleene
algebras are interchange semirings and interchange Kleene algebras
satisfying these two conditions.  It is then a trivial consequence of
Theorem~\ref{P:interchange-semiring-correspondence},
Corollary~\ref{P:comm-correspondence-unit} and Corollary
\ref{P:correspondence23-cor} that $S^X$ is a concurrent semiring if
$S$ is a concurrent semiring and $X$ a finitely decomposable
relational monoid. Similarly, by
Theorem~\ref{P:interchange-ka-correspondence} and these corollaries,
$K^X$ is a concurrent Kleene algebra if $K$ is a concurrent Kleene
algebra and $X$ a graded relational monoid.


\section{Weighted Shuffle Languages}\label{S:shuffle}

This extended example shows how weighted shuffle
languages~\cite{Handbook} can be constructed with our approach. Yet an
alternative view on relational interchange monoids is helpful.
Obviously, the sets $\pow\, (X\times X\times X)$ and $X\times X\to \pow\, X$
are isomorphic.
A ternary relation $R$ can thus be
seen as a \emph{multioperation} $\odot:X\times X\to \pow\, X$ defined
by
\begin{equation*}
  x \in y\odot z \Leftrightarrow R^x_{yz}.
\end{equation*}
It can be extended Kleisli-style to the operation
$\odot: \pow\, S\to\pow\, S\to \pow\, S$ defined,
for all $A,B\subseteq X$, by
\begin{equation*}
A\odot B = \bigcup \{x\odot y\mid x\in A\land y\in B\}.
\end{equation*}
It follows that
$(A\odot B)\, x = \Sup\{A\, y \wedge B\, z\mid R^x_{yz}\}$ if the
sets $A$ and $B$ are identified with their indicator functions, which
turns $\odot$ into a convolution.

The overloading of the multioperation $\odot$ and its extension allows
rewriting the relational interchange laws more compactly in algebraic
form.  Relational associativity becomes
$\{x\}\odot (x\odot z)= (x\odot y)\odot \{z\}$; the relational
interchange law (\ref{eq:ri7}) becomes
$(w\textcolor{teal}{\odot} x)\textcolor{orange}{\odot}
(y\textcolor{teal}{\odot} z) \subseteq (w\textcolor{orange}{\odot} y)
\textcolor{teal}{\odot} (x\textcolor{orange}{\odot} z)$.

Multisemigroups, multimonoids and other multialgebras have been
studied in mathematics for many
decades~\cite{Marty34,Krasner83,ConnesC10,KudryavtsevaM15}. In
computer science they appear in the semantics of separation
logic~\cite{GalmicheL06}.

  The shuffle of two words from an alphabet $\Sigma$ is obviously a
  multioperation
  $\|: \Sigma^\ast \to \Sigma^\ast \to \mathcal{P}\,\Sigma^\ast$. It
  can be defined recursively as
\begin{equation*}
  v\| \varepsilon = \{v\} = \varepsilon\| v,\qquad (av)
  \| (bw)= \{a\}\| (v\| (bw)) \cup \{b\} \| ((av)\|w),
\end{equation*}
where $a$ and $b$ are letters, $v$ and $w$ words and the extension
$\|:\pow\, X^\ast \to \pow\, X^\ast \to \pow\, X^\ast$ of $\|$ has
been tacitly used. It yields the shuffle or Hurwitz product
\begin{equation*}
A\| B=\bigcup\{x\| y\mid x\in A \land y\in B\}
\end{equation*}
for $A,B\subseteq \Sigma^\ast$ at language level.

To construct the quantale of $Q$-weighted shuffle languages using
Theorem~\ref{P:interchange-quantale-correspondence}(1) it remains to
check that the structure $M=(\Sigma^\ast,\srel,\prel,\{\varepsilon\})$
is a relational interchange monoid with shared unit
$\sE=\pE=\{\varepsilon\}$, where
$\srel^x_{yz} \Leftrightarrow x=y\cdot z$, for word concatenation
$\cdot$ and $\prel^x_{yz}\Leftrightarrow x \in y\|z$ for
shuffle.

It is of course straightforward to check that
$(\Sigma^\ast,\srel,\{\varepsilon\})$ is a relational monoid: it is in
fact isomorphic to the free monoid $(\Sigma^\ast,\cdot,\varepsilon)$
and checking the relational associativity and relational unit laws in
the first monoid amounts to checking their algebraic counterparts in
the second one. Checking that $(\Sigma^\ast,\prel,\{\varepsilon\})$ is a relational
monoid---or $(\Sigma^\ast,\|,\varepsilon)$ a multimonoid---and that
the relational interchange law (\ref{eq:ri7}) holds---or the
interchange law
$(w\|x)\cdot (y\| z) \subseteq (w\cdot y) \| (x\cdot z)$ with language
product $\cdot$ in the left-hand term and word concatenation $\cdot$
in the right-hand one---is a surprisingly tedious exercise and
requires nested
inductions.  

The result of this verification is summarised as follows.
\begin{lemma}\label{P:shuffle-quantale}
  $M$ is a relational interchange monoid with unit $\varepsilon$ and
  relationally commutative $\prel$.
\end{lemma}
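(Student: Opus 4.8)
The plan is to route all obligations of Definition~\ref{D:relational-monoids}(3) through a single non-recursive description of shuffle. The sequential part is essentially free: $\srel$ is the graph of word concatenation, so $(\Sigma^\ast,\srel,\{\varepsilon\})$ is a relational monoid by Example~\ref{ex:languages}, with $\varepsilon$ as (shared) unit. The real work concerns $\prel$, and here I would first prove, by induction on $|y|+|z|$, that the recursive shuffle agrees with the \emph{interleaving} description: $s\in y\| z$ iff the positions of $s$ can be two-coloured so that the subsequence of one colour spells $y$ and that of the other spells $z$ (equivalently, $s$ is a merge of $y$ and $z$). The base cases $y=\varepsilon$ or $z=\varepsilon$ are immediate from $\varepsilon\| v=\{v\}=v\|\varepsilon$, and the inductive step matches the two-way case split on the leading letter. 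As a by-product one records $|s|=|y|+|z|$ whenever $s\in y\| z$, which also shows that length is a grading and that $\prel$ is finitely decomposable.

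Once the interleaving description is in hand, the remaining $\prel$-properties become bookkeeping on colourings. Commutativity of $\prel$, i.e.\ $\prel^x_{yz}\Rightarrow\prel^x_{zy}$, is immediate since the description is symmetric in the two colours, which gives the claimed relational commutativity. The unit axioms of Definition~\ref{D:rel-units} for $\pE=\{\varepsilon\}$ follow from $\varepsilon\| v=\{v\}=v\|\varepsilon$: this makes $\varepsilon$ both a left and a right unit and forces the required uniqueness. Relational associativity of $\prel$ amounts to $(u\| v)\| w=u\|(v\| w)$ as subsets of $\Sigma^\ast$; both sides coincide with the set of three-way interleavings of $u,v,w$ (three-colourings of $s$ spelling $u$, $v$, $w$), so the two existential statements of the relational associativity law hold simultaneously. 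Hence $(\Sigma^\ast,\prel,\{\varepsilon\})$ is a relational monoid, and a commutative one.

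For the interchange law I would first use the multioperation encoding from the start of Section~\ref{S:shuffle} to rewrite (\ref{eq:ri7}) as the inclusion $(t\| u)\cdot(v\| w)\subseteq(t\cdot v)\|(u\cdot w)$, in which the outer $\cdot$ on the left is the complex product and the inner $\cdot$'s on the right are word concatenation. This reduces to the pointwise claim that $p\in t\| u$ and $q\in v\| w$ imply $pq\in(t\cdot v)\|(u\cdot w)$. Given a two-colouring of $p$ witnessing $p\in t\| u$ and one of $q$ witnessing $q\in v\| w$, I would combine them into a colouring of $pq$ that marks the $t$-positions of $p$ together with the $v$-positions of $q$ as ``left'' and the $u$-positions of $p$ together with the $w$-positions of $q$ as ``right''. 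Because $pq$ keeps every position of $p$ before every position of $q$, the left subsequence reads $t$ followed by $v$ and the right subsequence reads $u$ followed by $w$, so $pq\in(t\cdot v)\|(u\cdot w)$. This interchange step is the main obstacle: the paper flags it as requiring nested inductions, whereas here the cost is concentrated in the single induction establishing the interleaving description, and the only delicate point afterwards is checking that the combined colouring yields \emph{exactly} $t\cdot v$ and $u\cdot w$ rather than a further interleaving, which is precisely where the order fact that all of $p$ precedes all of $q$ is used. Assembling these steps, $M$ satisfies Definition~\ref{D:relational-monoids}(3) with shared unit $\varepsilon$ and relationally commutative $\prel$, as claimed.
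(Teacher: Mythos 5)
Your proof is correct, and it is worth noting that the paper itself supplies no proof of this lemma: it merely records that checking the multimonoid axioms for $\prel$ and the relational interchange law (\ref{eq:ri7}) ``is a surprisingly tedious exercise and requires nested inductions'', i.e.\ inductions carried out directly against the recursive definition of $\|$. You take a genuinely different (and arguably better organised) route: a single induction on $|y|+|z|$ replaces the recursion by the non-recursive interleaving (two-colouring) characterisation of $s\in y\| z$, after which relational commutativity (swap the colours), the unit axioms for $\pE=\{\varepsilon\}$, relational associativity (both sides of the law describe three-colourings of $x$ spelling $u,v,w$) and the interchange law need no further induction. Your reading of (\ref{eq:ri7}) as the inclusion $(t\| u)\cdot(v\| w)\subseteq (t\cdot v)\|(u\cdot w)$ agrees with the paper's own multioperation reformulation at the start of Section~\ref{S:shuffle}, and you correctly isolate the one delicate point: in $pq$ every position of $p$ precedes every position of $q$, so the combined colouring reads \emph{exactly} $t\cdot v$ and $u\cdot w$ rather than some further interleaving. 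What your factorisation buys is that all the tedium is concentrated in one reusable lemma, and as a by-product you obtain the length grading and finite decomposability needed later for Corollary~\ref{P:shuffle-ka}; what it costs is the (routine) verification that the recursive and interleaving descriptions of shuffle agree --- in carrying that out you should read the paper's clause $\{a\}\|(v\|(bw))$ as prepending $a$, as you implicitly do, since a literal shuffle with the singleton $\{a\}$ would yield a strictly larger set.
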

\noindent The following corollary to
Theorem~\ref{P:interchange-quantale-correspondence}(1) is then automatic.
\begin{corollary}
  If $Q$ is an interchange quantale with unit $\se=\pe= 1$ and
  $\pcomp$ commutative, then $Q^M$ is an interchange quantale with
  $\pconv$ commutative and
\begin{equation*}
 (f\sconv g)\, x = \bigvee_{y,z:x= y\cdot z} f\, y \scomp g\, z,\qquad
 (f\pconv g)\, x = \bigvee_{y,z:x\in y\parallel z} f\, y \pcomp g\, z,\qquad
 \mathit{id}\, x = \delta_\varepsilon \, x.
\end{equation*}
\end{corollary}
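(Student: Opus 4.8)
The plan is to derive the corollary almost entirely by invoking results already in place, since the statement is flagged as ``automatic''. First I would note that Lemma~\ref{P:shuffle-quantale} establishes that $M=(\Sigma^\ast,\srel,\prel,\{\varepsilon\})$ is a relational interchange monoid, while by hypothesis $Q$ is an interchange quantale. Theorem~\ref{P:interchange-quantale-correspondence}(1) then applies verbatim and yields that $Q^M$ is an interchange quantale: this already supplies associativity and unitality of both $\sconv$ and $\pconv$ as well as the interchange law (\ref{eq:i7}) on $Q^M$. Consequently the only genuinely new content to verify is the commutativity of $\pconv$, the explicit shape of the two convolutions, and the coincidence of the two units.

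For the commutativity, I would apply the commutativity correspondence to the parallel (teal) components. Lemma~\ref{P:shuffle-quantale} asserts that $\prel$ is relationally commutative, and $\pcomp$ is commutative by hypothesis; reading the colours of Corollary~\ref{P:comm-correspondence}(1) as teal then forces $\pconv$ to be commutative in $Q^M$. Only the forward (``if'') direction of the correspondence is used, so no nondegeneracy condition on $Q$ or $M$ is required.

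The explicit formulas follow by substituting the defining relations of $M$ into the convolution definitions of Section~\ref{S:correspondence-interchange}. Since $\srel^x_{yz}\Leftrightarrow x=y\cdot z$ we obtain $(f\sconv g)\, x=\Sup_{y,z:x=y\cdot z} f\, y\scomp g\, z$, and since $\prel^x_{yz}\Leftrightarrow x\in y\|z$ we obtain $(f\pconv g)\, x=\Sup_{y,z:x\in y\|z} f\, y\pcomp g\, z$. For the unit I would use the construction of $\sid$ and $\pid$ from Section~\ref{S:further-correspondences}: because $\sE=\pE=\{\varepsilon\}$ is a singleton and $\se=\pe=1$, both indicator functions collapse to $\id_E=\Sup_{e:E^e}\delta_e=\delta_\varepsilon$, so the single shared unit is $\id\, x=\delta_\varepsilon\, x$.

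I do not expect a real obstacle here, as the corollary is immediate from Lemma~\ref{P:shuffle-quantale} together with the correspondence theorems. The two small points deserving care are making sure the commutativity correspondence is applied to the parallel composition rather than the sequential one, and confirming that $\sid$ and $\pid$ genuinely coincide---which holds precisely because $\sE=\pE$ and $\se=\pe=1$, so that the hypotheses $\emptyset\neq\pE\subseteq\sE$ and $0\neq\pe\le\se$ adopted in Section~\ref{S:interchange-quantales} degenerate into equalities and the two units merge into $\delta_\varepsilon$.
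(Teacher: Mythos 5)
Your proposal is correct and matches the paper's intent exactly: the paper declares this corollary ``automatic'' from Lemma~\ref{P:shuffle-quantale} and Theorem~\ref{P:interchange-quantale-correspondence}(1), with commutativity of $\pconv$ supplied by the forward direction of the commutativity correspondence (Corollary~\ref{P:comm-correspondence}(1) applied to the parallel components), precisely as you argue. Your unpacking of the explicit convolution formulas and the collapse of $\sid$ and $\pid$ to $\delta_\varepsilon$ is the same routine substitution the paper leaves implicit.
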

\noindent The operation $\sconv$ is similar to the standard
convolution of formal power series, a $Q$-weighted generalisation of
the standard language product. The operation $\pconv$ generalises the
standard shuffle product $\|$ of languages to the $Q$-weighted
setting. Yet semirings or at least Kleene algebras are normally used
as weight-algebras.  A grading on words is needed, and in this
particular case the length of words can be used. It is then obvious
that $\Sigma^\ast_n$---the size of words of length $n$ is finite
whenever $\Sigma$ is finite. This yields the following corollary to
Theorem~\ref{P:interchange-ka-correspondence}.
\begin{corollary}\label{P:shuffle-ka}
  If $K$ is an interchange Kleene algebra with unit $1$ and $\pcomp$
  commutative, then $K^M$ is an interchange Kleene algebra with unit
  $\delta_\varepsilon$ and $\pconv$ commutative.
\end{corollary}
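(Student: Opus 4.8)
The plan is to obtain this corollary as a direct instance of Theorem~\ref{P:interchange-ka-correspondence}, together with the parallel-commutativity correspondence. By Lemma~\ref{P:shuffle-quantale}, the structure $M$ is already a relational interchange monoid with shared unit $\varepsilon$ and relationally commutative $\prel$; to invoke Theorem~\ref{P:interchange-ka-correspondence} it remains to supply a grading on $M$ (with respect to both $\srel$ and $\prel$) and to note that the ternary relations are finitely decomposable, so that the recursive $\sstar$ and $\pstar$ of Proposition~\ref{P:ka-correspondence-prop} are well defined.

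First I would take the grading $|-|\colon\Sigma^\ast\to\mathbb{N}$ to be word length. The condition $|x|>0$ for $x\neq\varepsilon$ is immediate. For additivity I check both relations: if $\srel^x_{yz}$, that is $x=y\cdot z$, then $|x|=|y|+|z|$ because concatenation merely juxtaposes the two letter sequences; and if $\prel^x_{yz}$, that is $x\in y\parallel z$, then $|x|=|y|+|z|$ as well, since a shuffle interleaves the letters of $y$ and $z$ without deleting or duplicating any, and hence preserves the total number of letters. Thus length is a grading for the interchange monoid $M$.

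Next I would record finite decomposability. For $\srel$ a word $x$ of length $n$ factors as $y\cdot z$ in precisely $n+1$ ways. For $\prel$, every pair $(y,z)$ with $x\in y\parallel z$ arises by assigning each of the $n$ positions of $x$ either to $y$ or to $z$, so there are at most $2^{n}$ such pairs; in either case the set of decompositions of a fixed $x$ is finite (notably without any further assumption on $\Sigma$). Consequently the finite sums in the recursive clauses for $\sstar$ and $\pstar$ on $K^M$ are legitimate, and Theorem~\ref{P:interchange-ka-correspondence}, applied to the graded relational interchange monoid $M$ with unit $\varepsilon$ and to the interchange Kleene algebra $K$, shows that $K^M$ is an interchange Kleene algebra whose unit is $\id_{\{\varepsilon\}}=\delta_\varepsilon$.

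Finally, for commutativity of $\pconv$ I would apply Corollary~\ref{P:comm-correspondence}(1) to the parallel component: $\prel$ is relationally commutative by Lemma~\ref{P:shuffle-quantale} and $\pcomp$ is commutative by hypothesis, so $\pconv$ is commutative in $K^M$. Only this (easy) direction of the correspondence is used, it requires no nondegeneracy condition, and the argument carries over verbatim to the finitary dioid underlying $K$ since all the relevant sups are finite. I expect no genuine obstacle here, as the substantive content---the interchange law relating concatenation and shuffle---is already encapsulated in Lemma~\ref{P:shuffle-quantale}; the single point that repays a moment's care is the verification that shuffle respects word length, which is exactly what legitimises length as a grading for the $\prel$-component and thereby the recursive star.
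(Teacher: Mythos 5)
Your proposal is correct and follows essentially the same route as the paper: word length as the grading (checked for both concatenation and shuffle), finite decomposability of both ternary relations, and then an appeal to Theorem~\ref{P:interchange-ka-correspondence} plus the easy direction of the commutativity correspondence. Your observation that finite decomposability of a fixed word needs no finiteness assumption on $\Sigma$ is in fact slightly sharper than the paper's remark, which only notes that the set of words of each length is finite when the alphabet is.
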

As we have shared units and a commutative shuffle operation, the
convolution algebras of weighted shuffle form in fact concurrent Kleene algebras.

Weighted languages are usually defined over semirings instead of
dioids. Instead of Kleene algebras one can then use star
semirings~\cite{Handbook}.  The Kleene star can then be defined on
$Q^M$ as before.  We conjecture that Corollary~\ref{P:shuffle-ka}
still holds for ordered star semirings, though we have not checked the
details.

Shuffle languages are widely used in the interleaving semantics of
concurrent programs. The finite transition and Aczel traces of the
rely-guarantee calculus~\cite{RoeverBH2001}, in particular, form concurrent
quantales, which suffices at least for the analysis of safety and
invariant properties.


\section{Partial Interchange Monoids}\label{S:pims}

Next we prepare for our second example, namely of digraphs under
serial and parallel composition.  It is then natural to consider these
compositions not as ternary relations, but as partial operations on
graphs.  This leads to more general notions of partial semigroups and
monoids. An approach to convolution with partial semigroups and
monoids has already been developed in~\cite{DongolHS16}.
\begin{definition}[\cite{DongolHS16}]
A \emph{partial monoid} is a structure
  $(S,\otimes, D, E)$ where $S$ is a set, $D\subseteq S\times S$ the
  domain of definition of the composition $\otimes :D\to S$, which is
  associative in the sense that
\begin{equation*}
  D\, x\, y \wedge D\, (x\otimes y)\, z \Leftrightarrow D\, y\, z\wedge
  D\, x\, (y\otimes z),\qquad D\, x\, y \wedge D\, (x\otimes y)\, z
  \Rightarrow x\otimes (y \otimes z) = (x\otimes y)\otimes z,
\end{equation*}
and $E\subseteq X$ is a set of units, which satisfy
\begin{equation*}
  \exists e\in E.\ D\, e\, x\wedge e\otimes x= x,\qquad \exists e\in E.\
  D\, x\, e\wedge x\otimes e= x,\qquad \forall e_1,e_2\in E.\ D\, e_1\, e_2
  \Rightarrow e_1= e_2.
\end{equation*}
\end{definition}
This definition captures the intuition of partiality that the
left-hand side of $x\otimes (y \otimes z) = (x\otimes y)\otimes z$ is
defined if and only if the right-hand side is, and, if either side is
defined, then the two sides are equal.  This notion of equality is
sometimes called \emph{Kleene equality}. Categories, monoids and the
interval algebras in Example~\ref{ex:incidence}, ordered pair algebras
in Example~\ref{ex:relations} and heaplet algebras in
Example~\ref{ex:sep-logic} all form partial monoids. Instead of the
unit axioms presented here we could equally use those of object-free
categories~\cite{MacLane98}. The precise relationship between partial
monoids and object-free categories is discussed in~\cite{CranchDS20}.

The relationship between partial and relational monoids is
straightforward. A relational monoid $(X,R,E)$ is \emph{functional} if
$R^x_{y\, z} = R^{x'}_{y\, z} \Rightarrow x = x'$ holds for all
$x,x'y,z\in X$.  With every functional relational monoid $(X,R,E)$ one
can then associate a partial monoid $(X,\otimes,D,E)$ with
$D\, y \,z \Leftrightarrow \exists x.\ R^x_{yz}$ and $y\otimes z$
being the unique $x\in X$ that satisfies $R^x_{yz}$---if $D\, y\, z$
is defined. We are particularly interested in the converse
construction.
\begin{lemma}[\cite{DongolHS17}]\label{P:pmonoid-relmonoid}
  If $(S,\otimes,D,E)$ is a partial monoid, then $(S,R,E)$ is a
  (functional) relational monoid with
  \begin{equation*}
R^x_{yz}\Leftrightarrow x=y\otimes z\land D\, y\, z.
\end{equation*}
\end{lemma}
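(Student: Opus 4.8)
The plan is to verify, in order, the four ingredients that make $(S,R,E)$ a functional relational monoid: functionality of $R$, relational associativity of $R$, the existence clauses for relational units, and the uniqueness clauses for relational units. Throughout I unfold $R^x_{yz}$ to its definition $x = y\otimes z \wedge D\, y\, z$ and read off the corresponding partial-monoid fact.

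\emph{Functionality and the existence clauses.} Functionality is immediate: if $R^x_{yz}$ and $R^{x'}_{yz}$, then $x = y\otimes z = x'$ because $\otimes$ is a partial function on $D$. The two existence axioms for relational units translate directly. Unfolding $R^x_{ex}\wedge E^e$ gives $\exists e\in E.\ D\, e\, x \wedge e\otimes x = x$, which is exactly the left-unit axiom of the partial monoid, and dually $R^x_{xe}\wedge E^e$ is the right-unit axiom.

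\emph{Relational associativity.} First I would unfold both sides of the associativity biconditional. Since $R^y_{uv}$ pins the witness to $y = u\otimes v$ (available precisely when $D\, u\, v$ holds), the left-hand side $\exists y.\ R^y_{uv}\wedge R^x_{yw}$ collapses to $D\, u\, v \wedge D\,(u\otimes v)\,w \wedge x = (u\otimes v)\otimes w$. Symmetrically the right-hand side $\exists y.\ R^x_{uy}\wedge R^y_{vw}$ collapses to $D\, v\, w \wedge D\, u\,(v\otimes w) \wedge x = u\otimes (v\otimes w)$. The partial-monoid associativity axiom equates the definedness conjuncts and identifies the two products whenever either side is defined, so the two collapsed statements coincide, giving relational associativity.

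\emph{Uniqueness of units.} This is the step I expect to be the main obstacle, because the unit $e$ appearing in $R^x_{ey}\wedge E^e$ need not be the canonical left unit produced by the partial-monoid axioms, so I cannot simply read off $e\otimes y = y$. The approach is: from $R^x_{ey}$ extract $e\in E$ with $D\, e\, y$ and $x = e\otimes y$; invoke the left-unit existence axiom to obtain $e'\in E$ with $D\, e'\, y$ and $e'\otimes y = y$; then apply the associativity biconditional with arguments $e,e',y$. Its right-hand side $D\, e'\, y \wedge D\, e\,(e'\otimes y)$ holds, since $D\, e'\, y$ is given and $e'\otimes y = y$ reduces $D\, e\,(e'\otimes y)$ to the known $D\, e\, y$; hence the left-hand side yields in particular $D\, e\, e'$. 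The uniqueness-of-units axiom $\forall e_1,e_2\in E.\ D\, e_1\, e_2 \Rightarrow e_1 = e_2$ then forces $e = e'$, so $x = e\otimes y = e'\otimes y = y$. Uniqueness of right units follows by the opposite argument. Assembling the four parts shows that $(S,R,E)$ is a functional relational monoid.
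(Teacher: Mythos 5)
Your proof is correct. The paper itself gives no proof of this lemma (it is imported from the cited reference), so there is nothing to compare against line by line; judged on its own, your argument is complete. The three routine parts (functionality, the existence clauses, and relational associativity via collapsing the existentials onto the unique witness $y=u\otimes v$) are handled exactly as one would expect, and you correctly identify the uniqueness clause as the only delicate step: the $e$ in $R^x_{ey}\wedge E^e$ is an arbitrary unit acting on $y$, not a priori the one supplied by the existence axiom. Your resolution---instantiate the definedness biconditional of partial associativity at $(e,e',y)$, observe that its right-hand side holds because $e'\otimes y=y$ reduces $D\,e\,(e'\otimes y)$ to the given $D\,e\,y$, extract $D\,e\,e'$ from the left-hand side, and then invoke $\forall e_1,e_2\in E.\ D\,e_1\,e_2\Rightarrow e_1=e_2$---is exactly the right use of the axioms, and the right-unit case does follow by the symmetric (opposite) argument as you claim.
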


Next we relate partial monoids with relational interchange monoids.
Expressing a variant of the interchange law (\ref{eq:i7}) in the
context of partial monoids requires an ordering on $S$. This motivates
the following definition.
\begin{definition}
  A \emph{preordered partial monoid} is a structure
  $(S,\preceq,\otimes,D,E)$ such that $(S,\preceq)$ is a preorder,
  $(S,\otimes,D,E)$ a partial monoid,
and $\otimes$ is order preserving in the
  sense that
  \begin{equation*}
    x\preceq y \wedge D\, z\, x \Rightarrow z \otimes x\preceq z\otimes y
    \wedge D\, z\, y,\qquad x\preceq y \wedge D\, x\, z \Rightarrow x
    \otimes z\preceq y\otimes z
    \wedge D\, y\, z.
  \end{equation*}
\end{definition}

Lemma~\ref{P:pmonoid-relmonoid} can then be generalised.
\begin{lemma}\label{P:ord-pmonoid-relsemigroup}
  Let $(S,\preceq,\otimes,D)$ be a preordered partial monoid. Then
  $(S,R)$ is a relational semigroup with
  \begin{equation*}
R^x_{yz}\Leftrightarrow x\preceq y\otimes z\land D\, y\, z.
\end{equation*}
  \end{lemma}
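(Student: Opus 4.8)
The plan is to show that the relation $R$ defined by $R^x_{yz}\Leftrightarrow x\preceq y\otimes z\land D\,y\,z$ is relationally associative, since that is precisely what is required for $(S,R)$ to be a relational semigroup (no unit axioms are claimed here). Unfolding the relational associativity equivalence $\exists y.\ R^y_{uv}\wedge R^x_{yw}\Leftrightarrow \exists y.\ R^x_{uy}\wedge R^y_{vw}$ against the definition of $R$, the left-hand side asserts the existence of some $y$ with $y\preceq u\otimes v$, $D\,u\,v$, $x\preceq y\otimes w$ and $D\,y\,w$, while the right-hand side asserts the existence of some $y$ with $x\preceq u\otimes y$, $D\,u\,y$, $y\preceq v\otimes w$ and $D\,v\,w$. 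This generalises Lemma~\ref{P:pmonoid-relmonoid}, where $x=y\otimes z$ is replaced by the laxer $x\preceq y\otimes z$.

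For the left-to-right direction I would start from such a witness $y$ and produce the canonical witness $y'=v\otimes w$ on the right. The key move is to transport definedness along the preorder: from $y\preceq u\otimes v$ and $D\,y\,w$ the \emph{second} order-preservation axiom yields $y\otimes w\preceq (u\otimes v)\otimes w$ together with $D\,(u\otimes v)\,w$. Now $D\,u\,v$ and $D\,(u\otimes v)\,w$ both hold, so the associativity equivalence of the partial monoid gives $D\,v\,w$ and $D\,u\,(v\otimes w)$, and its Kleene-equality clause gives $(u\otimes v)\otimes w=u\otimes(v\otimes w)$. Choosing $y'=v\otimes w$, reflexivity of $\preceq$ gives $y'\preceq v\otimes w$, the facts $D\,v\,w$ and $D\,u\,(v\otimes w)$ (which is $D\,u\,y'$) supply the two definedness requirements, and chaining $x\preceq y\otimes w\preceq (u\otimes v)\otimes w=u\otimes(v\otimes w)=u\otimes y'$ through transitivity of $\preceq$ yields $x\preceq u\otimes y'$, completing this direction.

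The right-to-left direction is handled by opposition, that is, by the mirror-image argument using the \emph{first} order-preservation axiom: from $y\preceq v\otimes w$ and $D\,u\,y$ one obtains $u\otimes y\preceq u\otimes(v\otimes w)$ and $D\,u\,(v\otimes w)$, then applies partial-monoid associativity in the reverse direction to extract $D\,u\,v$ and $D\,(u\otimes v)\,w$ together with the same associativity equation, and takes $y'=u\otimes v$ as the witness, again closing the $\preceq$-chain through transitivity.

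The main obstacle, and the reason the ordered setting is subtler than Lemma~\ref{P:pmonoid-relmonoid}, is the careful bookkeeping of definedness under the preorder. Each order-preservation axiom must be invoked in the correct argument so that it simultaneously establishes definedness of the relevant composite (e.g. $D\,(u\otimes v)\,w$) \emph{and} the $\preceq$-comparison that later feeds transitivity; the associativity equivalence then converts definedness on one association into definedness on the other, while Kleene equality identifies the two associations inside the $\preceq$-chain. Once these three ingredients---order-preservation for definedness-plus-monotonicity, partial-monoid associativity for definedness transfer and equality, and reflexivity/transitivity of $\preceq$---are lined up, both implications follow with no further computation.
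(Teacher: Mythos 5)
Your proof is correct and follows essentially the same route as the paper's: both directions pass through the intermediate form $x\preceq u\otimes(v\otimes w)\land D\,v\,w\land D\,u\,(v\otimes w)$ (equivalently its left-associated mirror) by using the appropriate order-preservation axiom to transport definedness along $\preceq$, the partial-monoid associativity equivalence with its Kleene-equality clause, and the canonical witnesses $v\otimes w$ and $u\otimes v$ together with reflexivity and transitivity of $\preceq$. The paper merely packages the two implications as a single chain of equivalences; the content is identical.
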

  \begin{proof}
   For relational associativity,
    \begin{align*}
      \exists y.\ R^x_{uy} \land R^y_{vw}
& \Leftrightarrow \exists y.\ x\preceq u\otimes y \land D\, u\, y
  \land y\preceq v\otimes w\land D\, v\, w\\
& \Leftrightarrow x\preceq u\otimes (v\otimes w)
  \land D\, v\, w \land D\, u\, (v\otimes w)\\
& \Leftrightarrow x\preceq (u\otimes v)\otimes w
  \land D\, u\, v \land D\, (u\otimes v)\, w\\
& \Leftrightarrow \exists y.\ D\, y \, w
  \land y \preceq u\otimes v \land D\, u\, v \land x\preceq y \otimes w\\
&\Leftrightarrow \exists y.\ R^y_{uv}\land R^x_{yw}.
    \end{align*}
\end{proof}
However the unit laws of preordered partial monoids need not translate
to relational semigroups.
\begin{lemma}\label{P:ord-pmonoid-relsemigroup-units}
  Let $(S,\preceq,\otimes,D)$ be a preordered partial monoid and
  $R^x_{yz}\Leftrightarrow x\preceq y\otimes z\land D\, y\, z$. Then
\begin{enumerate}
\item $\exists e\in E.\ R^x_{ex}$ and $\exists e\in E.\ R^x_{xe}$,
\item $\exists e\in E.\ R^x_{ey} \Rightarrow x\preceq y$ and $\exists e\in E.\ R^x_{ye} \Rightarrow x\preceq y$.
\end{enumerate}
\end{lemma}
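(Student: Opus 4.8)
The plan is to treat the two parts separately: part~(1) is an immediate consequence of the unit existence axioms of a partial monoid together with reflexivity of $\preceq$, whereas part~(2) needs the uniqueness-of-units axiom in combination with associativity.

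For part~(1), consider the first conjunct $\exists e\in E.\ R^x_{ex}$. The left unit axiom supplies an $e\in E$ with $D\, e\, x$ and $e\otimes x = x$. Since $\preceq$ is reflexive we have $x\preceq x = e\otimes x$, and together with $D\, e\, x$ this gives $R^x_{ex}$ by the definition of $R$. The conjunct $\exists e\in E.\ R^x_{xe}$ follows in exactly the same way from the right unit axiom, which supplies $e\in E$ with $D\, x\, e$ and $x\otimes e = x$.

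For part~(2), consider $\exists e\in E.\ R^x_{ey}$, which unfolds to $x\preceq e\otimes y$ and $D\, e\, y$ for some $e\in E$. It suffices to prove $e\otimes y = y$, since then $x\preceq e\otimes y = y$ directly. The subtlety is that $e$ is merely \emph{some} unit composable with $y$ on the left, not a priori the canonical left unit of $y$, so the plan is to identify the two. First I would invoke the left unit axiom for $y$ to obtain $e'\in E$ with $D\, e'\, y$ and $e'\otimes y = y$. Then I would apply the right-to-left direction of the associativity biconditional to the triple $(e,e',y)$: its premise $D\, e'\, y \wedge D\, e\, (e'\otimes y)$ holds, because $D\, e'\, y$ is available and the substitution $e'\otimes y = y$ turns $D\, e\, (e'\otimes y)$ into the hypothesis $D\, e\, y$. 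This yields $D\, e\, e'$, whereupon the uniqueness axiom forces $e = e'$ and hence $e\otimes y = e'\otimes y = y$. The conjunct $\exists e\in E.\ R^x_{ye}\Rightarrow x\preceq y$ follows by the opposite argument: use the right unit axiom for $y$ to get $e'\in E$ with $y\otimes e' = y$, apply associativity to $(y,e',e)$ to derive $D\, e'\, e$, conclude $e = e'$ by uniqueness, and read off $y\otimes e = y$.

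The main obstacle is exactly this step of showing that an arbitrary unit $e$ with $D\, e\, y$ acts as a genuine identity on $y$. In a total monoid this is automatic, but in the partial setting the only link between two units is the uniqueness axiom, whose hypothesis is the definedness $D\, e\, e'$. Supplying that definedness is what the associativity biconditional does, and recognising that $e'\otimes y = y$ collapses one of its definedness premises onto the available hypothesis $D\, e\, y$ is the crux. Everything else is reflexivity of $\preceq$ and routine unfolding of the definition of $R$.
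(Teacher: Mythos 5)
Your proof is correct. The paper states this lemma without proof, and your argument supplies exactly the missing content: part (1) is indeed just the unit existence axioms plus reflexivity of $\preceq$, and for part (2) you correctly identify the crux --- that an arbitrary $e\in E$ with $D\, e\, y$ must be shown to coincide with the canonical left unit $e'$ of $y$, which requires establishing $D\, e\, e'$ via the definedness biconditional of associativity (instantiated at $(e,e',y)$ with the substitution $e'\otimes y = y$) before the uniqueness axiom can be applied.
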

In (2), it cannot generally be expected that $x=y$.  The relationship
$x\preceq y$ cannot be captured directly within relational semigroups
or monoids.

\begin{definition}
  A \emph{partial interchange monoid} is a structure
  $(S,\preceq,\sdot,\pdot,\sD,\pD,\sE\,\pE)$ such that
  $(S,\preceq,\sdot,\sD,\sE)$ and $(S,\preceq,\pdot,\pD,\pE)$ are
  preordered partial monoids, $\pE\subseteq \sE$ and the following
  interchange law holds:
  \begin{equation}
    \label{eq:pi7}
   \pD\, w\, x \land \sD\, (w\pdot x)\, (y\pdot z)\land \pD\, y\,
   z\Rightarrow \sD\, w\, y \land \pD\, (w\sdot y)\, (x \sdot z) \land \sD\, x\, z
\wedge (w \pdot x)\sdot (y\pdot z)\preceq (w\sdot y)\pdot (x\sdot z). \tag{pi7}
  \end{equation}
\end{definition}

In  light of Lemma~\ref{P:ord-pmonoid-relsemigroup}
and \ref{P:ord-pmonoid-relsemigroup-units} we cannot not expect to
relate partial interchange monoids directly with relational
interchange monoids. But the relationship is straightforward if we
forget relational units and restrict to a single monoidal unit.

\begin{lemma}\label{P:pim-small-interchange}
  Let $(S,\preceq,\sdot,\pdot,\sD,\pD,\sE,\pE)$ be a partial
  interchange monoid in which $\sE=\{e\}=\pE$. Then the following
  small interchange laws hold.
  \begin{enumerate}
  \item $\sD\, x \, y \Rightarrow \pD\, x\, y \land x \sdot y \preceq x \pdot y$,
\item $\sD\, x \, y \Rightarrow \pD\, y\, x \land x \sdot y \preceq y \pdot x$,
\item
  $\sD\, x\, (y\pdot z) \land \pD\, y\, z \Rightarrow \sD\, x\, y
  \land \pD\, (x \sdot y)\, x \land x\sdot (y \pdot z) \preceq (x\sdot
  y)\pdot z$,
\item $\pD\, x\, y \land \sD\, (x\pdot y)\, z \Rightarrow \pD\, x\, (y
  \sdot z) \land \sD\, y \, z\land (x\pdot y) \sdot z \preceq x \pdot (y\sdot z)$,
\item $\sD\, x \, (y\pdot z) \land \pD\, y\, z \Rightarrow \pD\, y\,
  (x\pdot z) \land \sD\, y\, z\, \land x\sdot (y \pdot z) \preceq y \pdot (x\sdot z)$,
\item $\pD\, x\, y\land \sD\, (x\pdot y)\, z \Rightarrow \sD\, x\,
  z\land \pD\, (x\sdot z)\, y \land (x\pdot y) \sdot z \preceq (x\sdot z) \pdot y$.
  \end{enumerate}
\end{lemma}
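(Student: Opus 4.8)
The plan is to derive each of the six small laws by instantiating the single strong law (\ref{eq:pi7}) at the shared unit $e$, exactly as the weak Eckmann--Hilton argument of Lemma~\ref{P:weak-eh} derives (\ref{eq:i1})--(\ref{eq:i6}) from (\ref{eq:i7}), and as Lemma~\ref{P:rel-interchange-redundant} does relationally. The assumption $\sE=\{e\}=\pE$ is what makes this work: in each preordered partial monoid $e$ becomes a genuine two-sided unit, so $\sD\,e\,a$, $\sD\,a\,e$, $\pD\,e\,a$, $\pD\,a\,e$ all hold and $e\sdot a=a\sdot e=e\pdot a=a\pdot e=a$ for every $a\in S$. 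Consequently, inserting $e$ into an argument slot of (\ref{eq:pi7}) neither disturbs definedness nor changes the value of the composite in that slot.

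Concretely, writing the four argument slots of (\ref{eq:pi7}) as $\pD\,a\,b\wedge\sD\,(a\pdot b)\,(c\pdot d)\wedge\pD\,c\,d\Rightarrow\sD\,a\,c\wedge\pD\,(a\sdot c)\,(b\sdot d)\wedge\sD\,b\,d\wedge(a\pdot b)\sdot(c\pdot d)\preceq(a\sdot c)\pdot(b\sdot d)$, I would use the instantiations $(a,b,c,d)=(x,e,e,y)$ for (1), $(e,x,y,e)$ for (2), $(x,e,y,z)$ for (3), $(x,y,e,z)$ for (4), $(e,x,y,z)$ for (5), and $(x,y,z,e)$ for (6). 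For instance, (1) takes $(x,e,e,y)$: the antecedent $\pD\,x\,e\wedge\sD\,(x\pdot e)\,(e\pdot y)\wedge\pD\,e\,y$ collapses to its middle conjunct $\sD\,x\,y$ since the outer two hold by the unit laws, while the consequent supplies $\pD\,(x\sdot e)\,(e\sdot y)=\pD\,x\,y$ together with $(x\pdot e)\sdot(e\pdot y)=x\sdot y\preceq(x\sdot e)\pdot(e\sdot y)=x\pdot y$. Dually, (6) takes $(x,y,z,e)$: the antecedent reduces to $\pD\,x\,y\wedge\sD\,(x\pdot y)\,z$, and the consequent yields $\sD\,x\,z$, $\pD\,(x\sdot z)\,y$ and $(x\pdot y)\sdot z\preceq(x\sdot z)\pdot y$. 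The remaining cases are identical in spirit; moreover the pairs (3)/(4) and (5)/(6) are related by opposition (swapping the operands of $\sdot$, $\pdot$ and relabelling), so only one of each need be checked by hand.

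The only genuinely delicate point is the definedness bookkeeping; the order inequalities themselves drop out mechanically. For each instantiation I must verify two things: that under the chosen substitution the full antecedent of (\ref{eq:pi7}) is entailed by the stated hypothesis of the small law together with the unit-definedness facts above, so that (\ref{eq:pi7}) is actually applicable; and that its consequent then delivers exactly the definedness predicates asserted in the conclusion of the small law, after simplifying composites of the form $a\sdot e$, $e\pdot a$ by the unit laws. Because $e$ is two-sided and always composable, both checks are routine in every case: the substitutions have been chosen so that, once the trivially-satisfied unit-definedness conjuncts are discharged, the antecedent of (\ref{eq:pi7}) reduces precisely to the hypothesis of the law at hand.
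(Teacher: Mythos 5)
Your proposal is correct and takes essentially the same route as the paper: the paper also derives each small law by instantiating (\ref{eq:pi7}) at the shared unit $e$ (its worked example is law (3) via the substitution $(x,e,y,z)$, which matches yours) and then simplifying with the partial-monoid unit laws, leaving the remaining cases to the reader. Your six explicit substitutions are all correct, and the definedness bookkeeping you describe is exactly the content of the paper's argument.
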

\begin{proof}
  We show (3) as an example. Suppose $\sD\, x\, (y\pdot z)$ and
  $\pD\, y\, z$. Then $\pD\, x\, e$ and
  $\sD\, (x\pdot e)\, (y\pdot z)$ and therefore $\sD\, x\, y$,
  $\pD\, (x\sdot y)\, (e\sdot z)$, $\sD\, e\, z$ and
  $(x\pdot e)\sdot (y \pdot z) \preceq (x\sdot y)\pdot (e\sdot z)$ by
  (\ref{eq:pi7}). Hence
  $x\sdot (y \pdot z) \preceq (x\sdot y)\pdot z$ by the unit laws of
  partial monoids. The other proofs are similar and left to the reader.
\end{proof}

With multiple units it seems necessary to require that parallel units
are sequential units for all elements, which is artificial.

From now on we call \emph{relational interchange semigroup} a
relational interchange monoid in which units may be absent, and the
small interchange laws (\ref{eq:ri1})-(\ref{eq:ri6}) hold in addition
to (\ref{eq:ri7}).

\begin{lemma}\label{P:pinterchangemonoid-rinterchangesemigroup}
  If $(S,\preceq,\sdot,\pdot,\sD,\pD,\{e\})$ is a partial interchange
  monoid, then $(S,\srel,\prel)$ is a relational interchange semigroup with
  $\srel^x_{yz}\Leftrightarrow x= y\sdot z\land \sD\, y\, z$
  and $\prel^x_{yz}\Leftrightarrow x\preceq y\pdot z\land \pD\, y\, z$.
  \end{lemma}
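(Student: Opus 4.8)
The plan is to treat the two relational associativity laws and the seven relational interchange laws separately, since a relational interchange semigroup is exactly a pair of relationally associative ternary relations on a common carrier satisfying (\ref{eq:ri1})--(\ref{eq:ri7}). Relational associativity of $\srel$ comes for free: $(S,\sdot,\sD,\{e\})$ is a partial monoid, so Lemma~\ref{P:pmonoid-relmonoid} already yields that $(S,\srel)$ is a (functional) relational semigroup for $\srel^x_{yz}\Leftrightarrow x=y\sdot z\land\sD\, y\, z$. Relational associativity of $\prel$ follows from Lemma~\ref{P:ord-pmonoid-relsemigroup} applied to the preordered partial monoid $(S,\preceq,\pdot,\pD)$, whose induced relation $x\preceq y\pdot z\land\pD\, y\, z$ is precisely $\prel$. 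Thus only the interchange laws remain.

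The engine for these is Lemma~\ref{P:pim-small-interchange}, which supplies the six small interchange laws for single-unit partial interchange monoids, alongside (\ref{eq:pi7}) itself. Laws (\ref{eq:ri1}) and (\ref{eq:ri2}) are immediate: unfolding $\srel^x_{uv}$ as $x=u\sdot v\land\sD\, u\, v$ and invoking Lemma~\ref{P:pim-small-interchange}(1) gives $\pD\, u\, v$ and $x=u\sdot v\preceq u\pdot v$, i.e.\ $\prel^x_{uv}$, with (\ref{eq:ri2}) obtained analogously from Lemma~\ref{P:pim-small-interchange}(2). For (\ref{eq:ri3})--(\ref{eq:ri6}) the premise mixes an $\srel$-step (an equality) with a $\prel$-step (a $\preceq$), so I first massage it into the shape the matching small interchange law expects. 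For (\ref{eq:ri3}), say, the premise gives $y$ with $x=u\sdot y$, $\sD\, u\, y$, $y\preceq v\pdot w$ and $\pD\, v\, w$; left-argument order preservation of $\sdot$ turns $y\preceq v\pdot w$ into $\sD\, u\, (v\pdot w)$ together with $u\sdot y\preceq u\sdot (v\pdot w)$, so Lemma~\ref{P:pim-small-interchange}(3) applies and yields $\sD\, u\, v$, $\pD\, (u\sdot v)\, w$ and $u\sdot (v\pdot w)\preceq (u\sdot v)\pdot w$. Taking the witness $u\sdot v$ and composing the two $\preceq$-bounds by transitivity gives $\srel^{u\sdot v}_{uv}$ and $\prel^x_{(u\sdot v)w}$, which is (\ref{eq:ri3}). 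Laws (\ref{eq:ri4})--(\ref{eq:ri6}) run the same way from Lemma~\ref{P:pim-small-interchange}(4)--(6), with the opposition between (\ref{eq:ri3})/(\ref{eq:ri4}) and (\ref{eq:ri5})/(\ref{eq:ri6}) inherited from the source lemmas.

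The main case is (\ref{eq:ri7}), handled directly by (\ref{eq:pi7}). The premise gives $y,z$ with $y\preceq t\pdot u$, $\pD\, t\, u$, $x=y\sdot z$, $\sD\, y\, z$, $z\preceq v\pdot w$ and $\pD\, v\, w$. Two applications of order preservation of $\sdot$---first in the left argument via $y\preceq t\pdot u$, then in the right via $z\preceq v\pdot w$---produce $\sD\, (t\pdot u)\, (v\pdot w)$ and, through transitivity, the bound $x=y\sdot z\preceq (t\pdot u)\sdot (v\pdot w)$. Now (\ref{eq:pi7}), applied to the four elements $t,u,v,w$, delivers $\sD\, t\, v$, $\sD\, u\, w$, $\pD\, (t\sdot v)\, (u\sdot w)$ and $(t\pdot u)\sdot (v\pdot w)\preceq (t\sdot v)\pdot (u\sdot w)$. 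Chaining this with the previous bound gives $x\preceq (t\sdot v)\pdot (u\sdot w)$, so the witnesses $t\sdot v$ and $u\sdot w$ validate $\srel^{t\sdot v}_{tv}$, $\prel^x_{(t\sdot v)(u\sdot w)}$ and $\srel^{u\sdot w}_{uw}$, establishing (\ref{eq:ri7}).

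The only real subtlety, and the step demanding consistent care, is the asymmetry between the strict equality defining $\srel$ and the $\preceq$ defining $\prel$. Reconciling them is exactly the job of the order-preservation axioms on $\sdot$: they convert a $\preceq$-bound sitting in a $\prel$-factor into both the definedness of the enlarged $\sdot$-composite and a matching $\preceq$-bound, which is what unlocks each small interchange law and (\ref{eq:pi7}). Tracking these definedness side conditions and composing $\preceq$-bounds by transitivity is the bulk of the bookkeeping; I expect no deeper ingredient to be needed.
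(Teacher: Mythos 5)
Your proposal is correct and follows essentially the same route as the paper: derive (\ref{eq:ri7}) from (\ref{eq:pi7}) by using the order-preservation axioms of $\sdot$ to turn the $\preceq$-bounds in the $\prel$-factors into a single bound $x\preceq (t\pdot u)\sdot(v\pdot w)$ with the requisite definedness conditions, then chain with (\ref{eq:pi7}) and read off the witnesses, with the small laws handled analogously via Lemma~\ref{P:pim-small-interchange} and associativity via Lemmas~\ref{P:pmonoid-relmonoid} and~\ref{P:ord-pmonoid-relsemigroup}. You are in fact somewhat more explicit than the paper, which leaves the associativity and small-interchange cases to the reader.
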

  \begin{proof}
    We need to check that (\ref{eq:pi7}) implies (\ref{eq:ri7}).
    \begin{align*}
      \exists y,z.\ \prel^y_{tu} \land \srel^x_{yz}\land \prel^z_{vw}
& \Leftrightarrow \exists y,z.\ y\preceq t\pdot u \land \pD\, t\,
  u\land x=y\sdot z\land \sD\, y\, z \land z\preceq v\pdot w \land
  \pD\, v\, w\\
&\Leftrightarrow x\preceq (t\pdot u)\sdot (v\pdot w) \land \pD\, t\,
  u\land \sD\, (t\pdot u)\, (v\pdot w) \land \pD\, v\, w\\
&\Rightarrow x\preceq (t\sdot v)\pdot (u\pdot w) \land \sD\, t\,
  v\land \pD\, (t\pdot v)\, (u\pdot w) \land \sD\, u\, w\\
&\Leftrightarrow \exists y,z.\ y = t\sdot v \land \sD\, t\, v\land
  x\preceq y\pdot z \land \pD\, y\, z
  \land z = u\sdot w \land \sD\, u\, w\\
& \Leftrightarrow \exists y,z.\ \srel^y_{tv} \land \prel^x_{yz} \land
  \srel^z_{uw}.
\qedhere
    \end{align*}
    This calculation does not depend on the presence of units.  Small
    interchange laws hold in $S$ by
    Lemma~\ref{P:pim-small-interchange}. These allow deriving the
    small relational interchange laws (\ref{eq:ri1})-(\ref{eq:ri6}) as
    in the proof of (\ref{eq:ri7}). Hence $S$ is a relational
    interchange semigroup.
  \end{proof}

It is easy to check that we
  could have used
  $\srel^x_{yz}\Leftrightarrow x\preceq y\sdot z\land D\, y\, z$
  instead of $\srel^x_{yz}\Leftrightarrow x = y\sdot z\land D\, y\, z$
  in the proof of
  Lemma~\ref{P:pinterchangemonoid-rinterchangesemigroup}. Using
  an equational encoding for $\prel$, however, would have broken the
  proof. The following example shows that even (\ref{eq:ri1}) would
  break if two equational encodings were used.

\begin{example}
  Consider the partial monoid over $\{a,b\}$ with $\sD= \{(a,a)\}$,
  $\pD =S\times S$, order and compositions defined by
  $b\pdot b=a\pdot b=b\pdot a= a\sdot a = b\prec a = a\pdot a$, and a
  suitable unit adjoined. The small interchange law
  $\sD\, x\, y \Rightarrow \pD\, x\, y \land x\sdot y\preceq x\pdot y$
  holds for all $x$ and $y$.  Let
  $\srel^x_{yz}\Leftrightarrow x=y\sdot z\land \sD\, y\, z$ and
  $\prel^x_{yz}\Leftrightarrow x=y\pdot z\land \pD\, y\, z$.  Then
  $\srel\not\subseteq \prel$, that is, $\srel^b_{aa}$ and
  $\neg \prel^b_{aa}$, because $\sD\, a\, a$, $b=a\sdot a$ and
  $b\neq a= a\pdot a$.\qed
\end{example}

Lemma~\ref{P:pinterchangemonoid-rinterchangesemigroup} yields the
following corollary to
Theorem~\ref{P:interchange-quantale-correspondence}(1).

\begin{corollary}\label{P:rel-interchange-monoid-quantale}
  If $S$ is a partial interchange monoid with unit $e$ and $Q$ an
  interchange quantale, then $Q^S$ is a non-unital interchange
  quantale with convolutions
\begin{equation*}
  (f\sconv g)\, x = \Sup_{y,z:x=y\sdot z} f\, y\scomp g\, z,\qquad
  (f\pconv g)\, x = \Sup_{y,z:x\preceq y\pdot z} f\, y\pcomp g\, z
\end{equation*}
that satisfies the small interchange laws
(\ref{eq:i1})--(\ref{eq:i6}) in addition to (\ref{eq:i7}).
\end{corollary}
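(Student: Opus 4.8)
The plan is to route the entire statement through Lemma~\ref{P:pinterchangemonoid-rinterchangesemigroup}, which is exactly the bridge from the partial to the relational world, and then apply the correspondence results of Section~\ref{S:correspondence-interchange}. First I would invoke that lemma on the partial interchange monoid $(S,\preceq,\sdot,\pdot,\sD,\pD,\{e\})$ to obtain a relational interchange semigroup $(S,\srel,\prel)$ with $\srel^x_{yz}\Leftrightarrow x=y\sdot z\land\sD\,y\,z$ and $\prel^x_{yz}\Leftrightarrow x\preceq y\pdot z\land\pD\,y\,z$. By the definition of a relational interchange semigroup, both $\srel$ and $\prel$ are relationally associative and all seven relational interchange laws (\ref{eq:ri1})--(\ref{eq:ri7}) hold; in particular the six small laws are available, since the lemma derives them from (\ref{eq:pi7}) via Lemma~\ref{P:pim-small-interchange}.

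Next I would record that $Q$, being an interchange quantale (unital by our standing convention), satisfies all seven algebraic laws: (\ref{eq:i7}) by definition and (\ref{eq:i1})--(\ref{eq:i6}) by Corollary~\ref{P:relinterchange-redundant}(2). Since $(Q^S,\le)$ is a complete lattice under the pointwise order and the lifted convolutions preserve sups in both arguments, associativity of $\sconv$ and of $\pconv$ follows from relational associativity of $\srel$ and $\prel$ together with associativity of $\scomp$ and $\pcomp$, by Corollary~\ref{P:assoc-correspondence1}(1) applied to each composition separately (equivalently, Theorem~\ref{P:conv-algebras}(1)). Thus $(Q^S,\le,\sconv)$ and $(Q^S,\le,\pconv)$ are quantales, so $Q^S$ is a bi-quantale.

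For the interchange laws I would apply Proposition~\ref{P:correspondence1} seven times: for each $1\le k\le 7$, (\ri{k}) in $S$ and (\ai{k}) in $Q$ together force (\ai{k}) in $Q^S$. As both hypotheses are now in place, all of (\ref{eq:i1})--(\ref{eq:i7}) hold in $Q^S$. The stated convolution formulas then fall out by substituting the encodings of $\srel$ and $\prel$ into the general definitions: the definedness predicates $\sD\,y\,z$ and $\pD\,y\,z$ are absorbed automatically, as $y\sdot z$ and $y\pdot z$ are undefined without them, so the supremum over $\srel^x_{yz}$ collapses to one over $x=y\sdot z$ and that over $\prel^x_{yz}$ to one over $x\preceq y\pdot z$.

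The adjective \emph{non-unital} reflects that the passage through Lemma~\ref{P:pinterchangemonoid-rinterchangesemigroup} discards units: as Lemma~\ref{P:ord-pmonoid-relsemigroup-units} records, the unit $e$ yields only $R^x_{ey}\Rightarrow x\preceq y$ rather than $x=y$, so the relational-unit axioms of Definition~\ref{D:rel-units} fail in general and no unit lifts to $Q^S$. There is no genuine obstacle here, only one point worth checking: Proposition~\ref{P:correspondence1} is stated for an arbitrary relational bi-magma and bi-prequantale, with no unitality hypothesis, so the absence of relational units and the weakening from monoid to semigroup cause no difficulty in lifting the interchange laws.
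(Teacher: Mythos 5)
Your proposal is correct and follows essentially the same route as the paper, which simply cites Lemma~\ref{P:pinterchangemonoid-rinterchangesemigroup} together with Theorem~\ref{P:interchange-quantale-correspondence}(1); you have merely unfolded that theorem into its constituents (Corollary~\ref{P:assoc-correspondence1}(1) for associativity and Proposition~\ref{P:correspondence1} for the seven interchange laws). Your observation that one must fall back on the unit-free Proposition~\ref{P:correspondence1} rather than the unital Theorem~\ref{P:interchange-quantale-correspondence}(1) -- since the lemma only delivers a relational interchange \emph{semigroup} -- is a worthwhile point of care that the paper's terse citation glosses over.
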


Unitality fails in general because the unit $\id$ of $\sconv$ need
not be the unit of $\pconv$:
\begin{equation*}
(f\pconv \id)\, x = \bigvee\left\{f\, y \pcomp 1\mid \prel^x_{ye}\right\}=
\bigvee\left\{f\, y \mid x\preceq y\right\} \ge f\, x,
\end{equation*}
but not necessarily $f\pconv \id = f$, and similarly for
$\id\pconv f = f$. The retract $(Q,\le,\sconv)$ has unit $\id$;
only the retract $(Q,\le,\pconv)$ does not have $\id$ as a
unit. To obtain equality, and hence unital interchange quantales,
conditions on $f$ are needed.

A partial interchange monoid $(S,\sdot,\pdot,\{e\})$ is
\emph{positive} if $e$ is a minimal element of $S$ with respect to
$\preceq$. It is \emph{serially-decomposable} if
$x\preceq y_1\sdot y_2$ implies that there exists $x_1$, $x_2$ such
that $x=x_1\sdot x_2$, and $x_1\preceq y_1$ and $x_2\preceq y_2$.

\begin{lemma}\label{P:antitone-unital}
  Let $f$ be antitone, that is,
  $x\preceq y\Rightarrow f\, y\le f\, x$. Then
  $f\pconv \id = f = \id\pconv f$.
\end{lemma}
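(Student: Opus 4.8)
The plan is to unfold the definition of $\pconv$, exploit that $\id$ is the delta function concentrated at the shared unit $e$, rewrite the index set of the resulting supremum as the up-set of $x$ using that $e$ is a two-sided unit for $\pdot$, and then invoke antitonicity of $f$ to collapse that supremum to $f\, x$. This essentially finishes the computation already recorded just before the lemma.

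First I would treat the right-unit law. Since $\pcomp$ preserves suprema it annihilates $0$, so in $(f \pconv \id)\, x = \Sup\{f\, y \pcomp \id\, z \mid \prel^x_{yz}\}$ only the terms with $z = e$ survive, and there $\id\, e = 1$ acts as the $\pcomp$-unit. Because $e$ is the two-sided unit of the preordered partial monoid $(S, \preceq, \pdot, \pD, \{e\})$, we have $\pD\, y\, e$ for every $y$ and $y \pdot e = y$, so $\prel^x_{ye} \Leftrightarrow x \preceq y$. This yields $(f \pconv \id)\, x = \Sup\{f\, y \mid x \preceq y\}$, exactly the expression displayed before the statement.

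The key (and only nonroutine) step is to collapse this supremum. Reflexivity of $\preceq$ puts $f\, x$ among the indexed terms, giving $f\, x \le (f \pconv \id)\, x$. Conversely, antitonicity gives $f\, y \le f\, x$ whenever $x \preceq y$, so $f\, x$ is an upper bound of the indexed set and hence $(f \pconv \id)\, x \le f\, x$. Together these force $(f \pconv \id)\, x = f\, x$ for all $x$, that is, $f \pconv \id = f$.

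The left-unit law $\id \pconv f = f$ is entirely symmetric: now only the $y = e$ terms survive, and $e \pdot z = z$ with $\pD\, e\, z$ gives $\prel^x_{ez} \Leftrightarrow x \preceq z$, so $(\id \pconv f)\, x = \Sup\{f\, z \mid x \preceq z\}$, which equals $f\, x$ by the same reflexivity-and-antitonicity argument. I expect no genuine obstacle here; the entire content is that antitonicity caps the upward supremum created by the $\preceq$ in the definition of $\prel$ precisely at the value $f\, x$, so the failure of unitality described in the preceding paragraph is repaired exactly on antitone $f$.
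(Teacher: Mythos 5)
Your proposal is correct and follows essentially the same route as the paper: unfold the convolution, reduce the index set to $\{y \mid x\preceq y\}$ via the unit $e$, and collapse the supremum using reflexivity for the lower bound and antitonicity for the upper bound. No issues.
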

\begin{proof}
  $(f\pconv \id)\, x = \bigvee\left\{f\, y \pcomp 1\mid \prel^x_{ye}\right\}=
  \bigvee\{f\, y \mid x\preceq y\} = f\, x$.
  The $\le$-direction holds by antitonicity, the $\ge$-direction by
  the above calculation. The proof of $\id\pconv f = f$ is similar.
\end{proof}
To make $\id$ antitone it seems appropriate to require that $e$ is
minimal with respect to $\preceq$ and hence that the partial
interchange monoid is positive. We also need to check that $\sconv$
and $\pconv$ preserve antitonicity.

\begin{proposition}\label{P:pim-conv-quantale}
  Let $(S,\sdot,\pdot,\{e\})$ be a positive serially-decomposable
  partial interchange monoid and $Q$ and interchange quantale. Then
  the antitone functions in $Q^S$ form a (unital)
  interchange sub-quantale.
\end{proposition}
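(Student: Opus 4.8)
The plan is to verify three closure properties and then read off the algebraic laws for free. Concretely, I would show that the antitone functions in $Q^S$ are closed under arbitrary suprema, under $\pconv$, and under $\sconv$, and that the unit $\id$ is antitone and a two-sided unit for both convolutions on this subset. Since $Q^S$ is a non-unital interchange quantale by Corollary~\ref{P:rel-interchange-monoid-quantale}, associativity of $\sconv$ and $\pconv$, sup-preservation of both compositions, and the interchange laws $(\ref{eq:i1})$--$(\ref{eq:i7})$ are all (in)equalities between composites; once the subset is shown closed and its suprema coincide with the ambient ones, these laws are computed identically inside and outside the subset and therefore transfer automatically.

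Closure under suprema is immediate: if each $f_i$ is antitone and $x\preceq y$, then $f_i\, y\le f_i\, x\le \Sup_i f_i\, x$ for every $i$, so $(\Sup_i f_i)\, y\le (\Sup_i f_i)\, x$. Hence the antitone functions form a sub-complete-lattice of $Q^S$ with the same suprema, and so a complete lattice in their own right. Closure under $\pconv$ is equally easy and in fact requires no hypothesis on the operands: since $\prel^x_{yz}\Leftrightarrow x\preceq y\pdot z\land \pD\, y\, z$, the index set of $(f\pconv g)\, x$ only grows as $x$ decreases, so $x\preceq x'$ gives $\{(y,z):\prel^{x'}_{yz}\}\subseteq\{(y,z):\prel^x_{yz}\}$ and thus $(f\pconv g)\, x'\le (f\pconv g)\, x$. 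So $f\pconv g$ is antitone for arbitrary $f,g$.

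The main work is closure under $\sconv$, and this is where serial-decomposability enters, since $\srel$ is defined by the \emph{equation} $x=y\sdot z$ and the monotonicity argument above no longer applies. Given antitone $f,g$ and $x\preceq x'$, I would take any decomposition $x'=y\sdot z$ with $\sD\, y\, z$; then $x\preceq y\sdot z$, so serial-decomposability yields $x_1,x_2$ with $x=x_1\sdot x_2$ (hence $\sD\, x_1\, x_2$), $x_1\preceq y$ and $x_2\preceq z$. Antitonicity gives $f\, y\le f\, x_1$ and $g\, z\le g\, x_2$, and since $\scomp$ is order preserving (being sup preserving) we get $f\, y\scomp g\, z\le f\, x_1\scomp g\, x_2\le (f\sconv g)\, x$, the last term being a summand in the supremum over decompositions of $x$. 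Taking the supremum over all decompositions of $x'$ shows $(f\sconv g)\, x'\le (f\sconv g)\, x$. I expect this to be the one genuinely load-bearing step: everything turns on serial-decomposability producing a compatible $\sdot$-decomposition of the smaller element.

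Finally, for the unit I would argue that positivity makes $\id=\delta_e$ antitone: if $x\preceq y$ with $\delta_e\, y=1$ then $y=e$, and minimality of $e$ forces $x=e$, so $\delta_e\, x=1$ too. As $e$ is a genuine relational unit for the equational relation $\srel$, the element $\id$ is already the two-sided unit of $\sconv$ in $Q^S$ and remains so on the subset; and Lemma~\ref{P:antitone-unital} supplies $f\pconv\id=f=\id\pconv f$ for every antitone $f$. Thus $\id$ lies in the subset and is a two-sided unit for both $\sconv$ and $\pconv$ there. Combining closure under suprema and both convolutions with the inherited associativity, sup-preservation and interchange laws, and this shared unit, completes the verification that the antitone functions form a unital interchange sub-quantale.
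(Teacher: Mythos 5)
Your proposal is correct and follows essentially the same route as the paper: positivity makes $\id$ antitone, Lemma~\ref{P:antitone-unital} gives unitality of $\pconv$ on antitone functions, $\pconv$ preserves antitonicity by transitivity of $\preceq$ (index-set inclusion), and $\sconv$ preserves antitonicity via serial decomposability combined with antitonicity of the operands. You spell out the load-bearing $\sconv$ step and the closure under suprema more explicitly than the paper does, but the argument is the same.
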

\begin{proof}
  Unitality follows from Lemma~\ref{P:antitone-unital}.  It remains to
  show that $\id$ is antitone and that $\sconv$ and $\pconv$ preserve
  antitonicity. The first fact follows from positivity.  For
  preservation of $\pconv$, suppose $x\preceq y$. Then
\begin{equation*}
  (f\pconv g)\, y = \bigvee \{f\, y_1 \pcomp f\, y_2\mid y \preceq y_1 \pdot
  y_2\} \le \bigvee \{f\, x_1 \pcomp f\, x_2\mid x \preceq x_1 \pdot
  x_2\} = (f\pconv g)\, x.
\end{equation*}
For preservation of $\sconv$, suppose once again $x\preceq y$. Then
\begin{equation*}
  (f\sconv g)\, y = \bigvee \{f\, y_1 \scomp f\, y_2\mid y = y_1 \sdot
  y_2\} \le \bigvee \{f\, x_1 \scomp f\, x_2\mid x = x_1 \sdot
  x_2\} = (f\sconv g)\, x
\end{equation*}
by $\sdot$-decompositionality.
\end{proof}

\section{Weighted Graph Languages}\label{S:graph-languages}

Our second extended example shows how weighted graph languages can be
constructed with our approach.  A partial interchange monoid structure
can be imposed on graphs in various ways.  Partiality arises because,
typically, the vertices of the graph operands are supposed to be
disjoint. Henceforce, we mean digraph when we say graph. Graphs with
undirected edges can be obtained from these in the obvious way.

Formally, we view graphs as binary relations on some set $X$.  Let
graphs $G_1$ and $G_2$ be disjoint, that is, they have disjoint vertex
sets: $V_{G_1}\cap V_{G_2}=\emptyset$. Their \emph{serial composition
  (complete join) and disjoint union (parallel composition)} are
defined as
\begin{equation*}
  G_1\cdot G_2=G_1\sqcup G_2\sqcup \, (V_{G_1}\times
V_{G_2}),\qquad
G_1\| G_2 = G_1\sqcup G_2,
\end{equation*}
where $\sqcup$ denotes disjoint union.  Both operations are
standard~\cite{CourcelleEngelfriet}. This turns graphs under
serial composition into partial monoids, and graphs under parallel
composition into partial abelian monoids.

A \emph{graph morphism} $\varphi:G_1\to G_2$ between graphs $G_1$ and
$G_2$ satisfies $(x,y)\in G_1\Rightarrow (f\, x,f\, y)\in G_2$.
A morphism $f$ is \emph{faithful}, or a \emph{graph embedding}, if
$(f\, x,f\, y)\in G_2$ implies $(x,y)\in G_1$.  A \emph{graph
  isomorphism} is a bijective (on vertices) graph embedding. We write
$G_1\cong G_2$ if there exists a graph isomorphism between $G_1$ and
$G_2$. We say that $G_1$ and $G_2$ are \emph{isomorphic} or have the
same \emph{graph type} if $G_1\cong G_2$ and call $G/{\cong}$ the
\emph{isomorphism class} or \emph{graph type} of $G$.

The \emph{subsumption relation} $\preceq$ between graphs, which is
defined by $G_1\preceq G_2$ if and only if there exists a bijective
(on vertices) graph morphism $\varphi:G_2\to G_1$, is a preorder. The
associated subsumption equivalence $\simeq$ need not coincide with
$\cong$, as will be explained in Section~\ref{S:graph-type-languages}.
We now fix any set $\mathcal{G}$ of (di)graphs that contains the empty
graph $\varepsilon$ and is closed under serial and parallel
composition.

\begin{proposition}\label{P:graph-cmonoid}
  The structure $(\mathcal{G},\cdot,\|,\{\varepsilon\})$ forms a
  partial interchange monoid with commutative parallel composition and
  shared unit $\varepsilon$.
\end{proposition}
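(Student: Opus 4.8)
The plan is to read the claim as the instantiation of the definition of partial interchange monoid with $\sdot$ taken to be serial composition $\cdot$, $\pdot$ to be parallel composition $\|$, both carrying the common domain of definition $\sD\, G_1\, G_2 \Leftrightarrow \pD\, G_1\, G_2 \Leftrightarrow V_{G_1}\cap V_{G_2}=\emptyset$, and $\sE=\pE=\{\varepsilon\}$. Four obligations then remain. The partial-monoid cores---Kleene associativity and the unit laws for $\varepsilon$ of both $\cdot$ and $\|$---are already granted by the preceding observation that $\cdot$ turns $\mathcal{G}$ into a partial monoid and $\|$ into a partial abelian monoid, so nothing new is needed there; commutativity of $\|$ is immediate from $G_1\sqcup G_2=G_2\sqcup G_1$; and $\pE\subseteq\sE$ holds trivially since both sets equal $\{\varepsilon\}$. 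This leaves the order-preservation clauses for the subsumption preorder $\preceq$ and the interchange law~(\ref{eq:pi7}).

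For order preservation I would first record that $\preceq$ is a preorder, with reflexivity witnessed by the identity and transitivity by composing bijective-on-vertices morphisms. Given $G_1\preceq G_2$ via a bijective-on-vertices morphism $\varphi:V_{G_2}\to V_{G_1}$ and a third graph $H$ with $V_H\cap V_{G_2}=\emptyset$, the map $\varphi$ extended by the identity on $V_H$ is a bijective morphism $H\| G_2\to H\| G_1$, witnessing $H\| G_1\preceq H\| G_2$; for serial composition the same extension works once one checks that the complete-join edges $V_H\times V_{G_2}$ are carried into $V_H\times V_{G_1}$, which holds because $\varphi$ is a vertex bijection. The delicate---and, I expect, the main---point is the accompanying definedness conjunct: order preservation as stated also demands $\sD\, H\, G_2$ (resp.\ $\pD\, H\, G_2$), and this does \emph{not} follow from $\sD\, H\, G_1$ together with $G_1\preceq G_2$, since subsumption may rename vertices into those of $H$. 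I would discharge this through the standing convention that subsumption-related graphs are presented on a shared vertex set (equivalently, that the witnessing bijection is chosen to fix the ambient vertices), so that disjointness is transported along $\preceq$; this is the same convention that underlies the passage to isomorphism classes in Section~\ref{S:graph-type-languages}, and it is precisely the place where the bookkeeping must be done honestly.

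Finally, for~(\ref{eq:pi7}) I would observe that its three hypotheses $\pD\, w\, x$, $\sD\,(w\| x)\,(y\| z)$ and $\pD\, y\, z$ assert exactly that $V_w,V_x,V_y,V_z$ are pairwise disjoint, from which the definedness conclusions $\sD\, w\, y$, $\sD\, x\, z$ and $\pD\,(w\cdot y)\,(x\cdot z)$ read off at once. The inequality is then a single edge-set comparison: both $(w\| x)\cdot(y\| z)$ and $(w\cdot y)\|(x\cdot z)$ carry the vertex set $V_w\cup V_x\cup V_y\cup V_z$, the former having edges $w\sqcup x\sqcup y\sqcup z\,\sqcup\,\bigl((V_w\cup V_x)\times(V_y\cup V_z)\bigr)$ and the latter $w\sqcup x\sqcup y\sqcup z\,\sqcup\,(V_w\times V_y)\,\sqcup\,(V_x\times V_z)$. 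Since $(V_w\times V_y)\cup(V_x\times V_z)\subseteq(V_w\cup V_x)\times(V_y\cup V_z)$, the right-hand edge set is contained in the left-hand one, so the identity map on the common vertex set is a bijective-on-vertices morphism from $(w\cdot y)\|(x\cdot z)$ to $(w\| x)\cdot(y\| z)$; by definition of $\preceq$ this yields $(w\| x)\cdot(y\| z)\preceq(w\cdot y)\|(x\cdot z)$, as~(\ref{eq:pi7}) requires. The interchange computation is thus routine once the vertex/edge bookkeeping is set up; the genuine difficulty, and the main obstacle, is the compatibility of subsumption with the disjointness side-conditions in the order-preservation step.
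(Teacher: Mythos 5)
Your proof is correct and follows essentially the same route as the paper's: the interchange law is verified exactly as in the paper, via the identity map on the shared vertex set $V_w\cup V_x\cup V_y\cup V_z$ together with the edge-set inclusion $(V_w\times V_y)\cup(V_x\times V_z)\subseteq(V_w\cup V_x)\times(V_y\cup V_z)$, and the remaining partial-monoid and commutativity obligations are delegated to the preceding observations just as the paper does. The only difference is one of care rather than method: the paper dismisses the partial isotonicity clauses as routine, while you rightly flag that transporting the disjointness side-condition $\sD\,H\,G_2$ along $G_1\preceq G_2$ needs a convention on vertex sets, which is a legitimate and arguably more honest account of the same argument.
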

\begin{proof}
  First of all, the partial associativity and unit laws, partial
  commutativity of disjoint union as well as partial isotonicity of
  the two compositions must be shown.This is routine.  In the presence
  of a shared unit $\varepsilon$ it then remains to verify
  (\ref{eq:pi7}).  For this we need the following isotonicity property
  of cartesian products: $A\subseteq B$ implies
  $A\times C\subseteq B\times C$ and $C\times A\subseteq C\times B$.

  We only show that the weak interchange law
  $(G_1\| G_2)\cdot (G_3\| G_4)\preceq_r (G_1\cdot G_2)\|(G_2\cdot
  G_4)$
  holds and leave the remaining laws to the reader.  We use the
  identity function on the $G_i$ to construct the bijective
  morphism. We need to show that
  $V_{(G_1\| G_2)\cdot (G_3\| G_4)}= V_{(G_1\cdot G_3)\|(G_2\cdot
    G_4)}$
  and
  $(G_1\cdot G_1)\|(G_3\cdot G_4) \subseteq (G_1\| G_3)\cdot (G_2\|
  G_4)$
  as a relation. First, $V_{G_i\cdot G_j} = V_{G_i}\cup V_{G_j} = V_{G_i\| G_j}$ and
  therefore
  \begin{equation*}
    V_{(G_1\| G_2)\cdot (G_3\| G_4) }
= V_{G_1} \cup V_{G_2}\cup V_{G_3}\cup V_{G_4}
=V_{(G_1\cdot G_3)\|(G_2\cdot G_4)}.
  \end{equation*}
Second,
  \begin{align*}
 (G_1\cdot G_1)\|(G_3\cdot G_4) &= (G_1\cup G_2\cup V_{G_1}\times
    V_{G_2})\|(G_3\cup G_4\cup V_{G_3}\times V_{G_4})\\
 &= G_1\cup G_2\cup V_{G_1}\times
    V_{G_2}\cup G_3\cup G_4\cup V_{G_3}\times V_{G_4}\\
 &\subseteq G_1\cup G_3\cup G_2\cup G_4 \cup (V_{G_1}\cup V_{G_3})\times
    (V_{G_2}\cup V_{G_4})\\
&= (G_1\| G_3) \cup (G_2\| G_4) \cup V_{G_1\| G_3}\times
    V_{G_2\| G_4}\\
& = (G_1\| G_3)\cdot (G_2\| G_4).
  \end{align*}
\end{proof}
Lemma~\ref{P:pinterchangemonoid-rinterchangesemigroup} and
Corollary~\ref{P:rel-interchange-monoid-quantale} then imply that
weighted graph languages form interchange quantales up-to unitality of
the parallel quantale retract. But one can do better.
\begin{lemma}
  The partial interchange monoid $(\mathcal{G},\cdot,\|,\{\varepsilon\})$ is positive and serially decomposable.
\end{lemma}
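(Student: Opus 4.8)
The plan is to check the two conditions directly against the definitions of serial composition and of the subsumption preorder, reusing the partial interchange monoid structure already established in Proposition~\ref{P:graph-cmonoid}. Positivity asks that $\varepsilon$ be $\preceq$-minimal, and this I would settle by a vertex count: if $G\preceq\varepsilon$ then there is a bijective-on-vertices graph morphism $\varepsilon\to G$, but $\varepsilon$ has no vertices, so $V_G=\emptyset$ and hence $G=\varepsilon$. Thus nothing lies strictly below $\varepsilon$. This minimality is precisely the ingredient that makes $\id$ antitone in Proposition~\ref{P:pim-conv-quantale}, via Lemma~\ref{P:antitone-unital}.

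For serial decomposability I would start from $G\preceq G_1\cdot G_2$, witnessed by a bijective-on-vertices morphism $\psi\colon G_1\cdot G_2\to G$. Since $V_{G_1\cdot G_2}=V_{G_1}\sqcup V_{G_2}$ and $\psi$ is a vertex bijection, $V_G$ splits as $W_1\sqcup W_2$ with $W_i=\psi(V_{G_i})$. I would then let $H_i$ be the subgraph of $G$ induced on $W_i$; the restriction of $\psi$ is a bijective morphism $G_i\to H_i$, which yields $H_i\preceq G_i$ for $i=1,2$. It remains to prove the equation $G=H_1\cdot H_2$, that is, $G=H_1\sqcup H_2\sqcup(W_1\times W_2)$. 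The within-part edges are $H_1$ and $H_2$ by construction, and the forward cross-edges satisfy $W_1\times W_2\subseteq G$ because the morphism $\psi$ must carry the complete directed join $V_{G_1}\times V_{G_2}\subseteq G_1\cdot G_2$ into $G$, while the reverse inclusion of forward edges is immediate.

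The hard part will be ruling out the backward cross-edges, i.e.\ establishing $G\cap(W_2\times W_1)=\emptyset$, since $H_1\cdot H_2$ by definition contains no edge from $W_2$ to $W_1$. This is the one step that is not mere bookkeeping: a subsumption morphism guarantees only the forward join and is silent about edges in the opposite direction, so the argument must exploit the ambient structure of the graphs in $\mathcal{G}$ rather than $\psi$ alone. The natural way to close it is to observe that the forward join already places every edge $w_1\to w_2$ with $w_i\in W_i$, so a backward edge $w_2\to w_1$ would create a two-cycle across the partition; such configurations are exactly what antisymmetry (acyclicity) forbids, and this is the point at which the structure underlying the posets and pomsets of the subsequent sections is needed. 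Granting the absence of backward edges, $G=H_1\cdot H_2$ follows, serial decomposability is established, and together with positivity it supplies precisely the hypotheses of Proposition~\ref{P:pim-conv-quantale}.
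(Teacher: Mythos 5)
Your approach is the same as the paper's: positivity via the vertex count (this is exactly what the paper means by $\varepsilon$ being an isolated point of $\preceq$), and serial decomposability by transporting the partition $V_{G_1\cdot G_2}=V_{G_1}\sqcup V_{G_2}$ along the vertex bijection, taking induced subgraphs $H_1,H_2$, and checking $H_i\preceq G_i$ and the forward join $W_1\times W_2\subseteq G$. All of that is correct. The difference is that you have made explicit the one step the paper elides: the paper simply asserts that the arrows added by the bijective morphism ``must either be added to $G_1$ or to $G_2$'', which silently assumes there are no backward cross-edges from $W_2$ to $W_1$. You are right that the subsumption morphism alone says nothing about such edges, and right that this is the crux.

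However, your proposed repair does not close the gap at the level of generality at which the lemma is stated. The paper fixes an \emph{arbitrary} set $\mathcal{G}$ of digraphs containing $\varepsilon$ and closed under $\cdot$ and $\|$; acyclicity or antisymmetry is nowhere assumed. Without it the claim fails: take $G_1=\{a\}$ and $G_2=\{b\}$ (single vertices, no arrows) and let $G$ be the two-cycle $\{(a,b),(b,a)\}$ on $\{a,b\}$. The identity is a bijective morphism $G_1\cdot G_2\to G$, so $G\preceq G_1\cdot G_2$, yet $G$ cannot equal any $H_1\cdot H_2$ with singleton factors, since such a composition contains only one of the two cross-arrows. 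So if $\mathcal{G}$ contains such a $G$, serial decomposability is simply false, and the lemma needs a hypothesis it does not currently have. Your acyclicity argument does rescue the statement for the posets and pomsets used in the later sections, but as a proof of the lemma as stated it is conditional. (A second, smaller issue, also present in the paper's proof: nothing guarantees that the induced subgraphs $H_1,H_2$ lie in $\mathcal{G}$, which is only assumed closed under the two compositions, not under the relevant subgraphs.) In short, you have correctly diagnosed a genuine defect of the paper's own argument; your write-up is honest about leaving the key step open, but as it stands the proof is incomplete, and completing it requires restricting $\mathcal{G}$ (e.g.\ to acyclic graphs with suitable closure properties) or weakening the lemma.
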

\begin{proof}
  It is clear that $\varepsilon$ is an isolated point with respect to
  $\preceq$ and hence minimal. This proves positivity.  The proof of
  serial decomposability is intuitive, but somewhat tedious to spell
  out formally. Suppose $G\preceq G_1\cdot G_2$.  Then the vertices
  of $G_1$ and $G_2$ are disjoint and in addition to the arrows of
  $G_1$ and $G_2$ we have $V_{G_1}\times V_{G_2}$. Hence if
  $G\preceq G_1\cdot G_2$, then the arrows added by the bijective
  graph morphism $\varphi:G_1\cdot G_1 \to G$ must either be added to
  $G_1$ or to $G_2$, while $V_{G_1}\times V_{G_2}$ stays the
  same. There must thus be $G_1'\preceq G_1$ and $G_2'\preceq G$ such
  that $G=G_1'\cdot G_2'$.
\end{proof}
Proposition~\ref{P:pim-conv-quantale} then specialises as follows.
\begin{corollary}\label{P:graph-conv-algebra}
If $Q$ is an interchange quantale with unit $1$ and $\pcomp$
commutative, then $Q^\mathcal{G}$ is a (generally non-unital) interchange
quantale with $\pconv$ commutative and
\begin{equation*}
  (f\sconv g)\, x = \Sup_{y,z:x=y\cdot z} f\, y\scomp g\, z,\qquad
  (f\pconv g)\, x = \Sup_{y,z:x\preceq y\| z} f\, y\pcomp g\, z.
\end{equation*}
The subquantale of antitone functions in $Q^\mathcal{G}$ is unital.
\end{corollary}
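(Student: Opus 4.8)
The plan is to assemble this corollary from the three results directly preceding it, since its genuine content---that graphs under serial and parallel composition organise into a suitable partial interchange monoid---has already been discharged. First I would combine Proposition~\ref{P:graph-cmonoid}, which gives that $(\mathcal{G},\cdot,\|,\{\varepsilon\})$ is a partial interchange monoid with commutative parallel composition and shared unit $\varepsilon$, with the preceding lemma, which supplies positivity and serial decomposability. Together these furnish exactly the hypotheses required by the construction theorems downstream.

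Next I would apply Corollary~\ref{P:rel-interchange-monoid-quantale}, instantiating $\sdot$ as serial composition $\cdot$ and $\pdot$ as parallel composition $\|$. This immediately delivers that $Q^\mathcal{G}$ is a (generally non-unital) interchange quantale whose convolutions take precisely the displayed form---the serial convolution ranging over decompositions $x = y\cdot z$ and the parallel one over subsumptions $x\preceq y\| z$---and that it obeys all of (\ref{eq:i1})--(\ref{eq:i7}).

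To secure commutativity of $\pconv$, I would first observe that the relation $\prel^x_{yz}\Leftrightarrow x\preceq y\| z\land \pD\, y\, z$ is relationally commutative: as $\|$ is commutative we have $y\| z = z\| y$ and $\pD\, y\, z\Leftrightarrow \pD\, z\, y$, so $\prel^x_{yz}\Leftrightarrow \prel^x_{zy}$. Combined with the hypothesis that $\pcomp$ is commutative in $Q$, Corollary~\ref{P:comm-correspondence}(1) then transports commutativity to $\pconv$ on $Q^\mathcal{G}$. Finally, the unitality of the antitone subquantale is a direct instance of Proposition~\ref{P:pim-conv-quantale}, whose hypotheses we have just verified.

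Since every step is a citation of an earlier result, there is no substantive obstacle here; the only point demanding a moment's care is confirming that relational commutativity of $\prel$ follows from commutativity of $\|$ at the level of definedness as well as of values, which is routine given that disjoint union is symmetric.
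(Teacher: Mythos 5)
Your proposal is correct and follows exactly the route the paper intends: the paper offers no explicit proof, merely noting that Proposition~\ref{P:pim-conv-quantale} ``specialises'' after Proposition~\ref{P:graph-cmonoid} and the positivity/serial-decomposability lemma, which is precisely the chain of citations you assemble (via Corollary~\ref{P:rel-interchange-monoid-quantale} for the non-unital interchange quantale structure and the commutativity correspondence for $\pconv$). Your explicit check that relational commutativity of $\prel$ holds at the level of definedness as well as values is a worthwhile detail the paper leaves tacit.
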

Labels can be added to vertices ad libitum, which yields proper
weighted graph languages. Both the serial and the parallel composition
preserve order properties. Corollary~\ref{P:graph-conv-algebra} thus
specialises immediately to weighted partial orders.

Next we consider convolution algebras that are powerset liftings, that
is, $Q= \bool$. Then $f:\mathcal{G}\to \bool$ is a set indicator
function and we may write $x\in f$ instead of $f\, x$, identifying the
indicator function with the set it represents. Then
$(f\sconv g)\, x = \Sup\{f\, y\scomp g\, z\mid x = y\cdot z\}$
rewrites as
$x\in f\sconv g \Leftrightarrow \exists y,z.\ x=y\cdot z\land y\in
f\land z\in g$ and hence
\begin{equation*}
f\sconv g= \{y\cdot
z\mid y\in f \land z\in g\}.
\end{equation*}
Similarly,
\begin{equation*}
f\pconv g=\{x\mid x\preceq
y\|z \land y\in f \land z\in g\}=\{y\|
z\mid y\in f \land z\in g\}{\downarrow},
\end{equation*}
where $\downarrow$ denotes the down-closure with respect to
$\preceq$. Moreover, the antitonicity condition rewrites as $x\preceq y
\land f\, y \Rightarrow f\, x$, which is precisely $f=f{\downarrow}$,
that is, $f$ is a down-set with respect to $\preceq$.
\begin{corollary}
  The down-sets in $\pow\, \mathcal{G}$ form a unital
  interchange quantale.
\end{corollary}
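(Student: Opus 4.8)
The plan is to obtain the statement as the $Q=\bool$ instance of Corollary~\ref{P:graph-conv-algebra}, translated from antitone functions into set-theoretic language. First I would record that $\bool$ itself satisfies the hypotheses: taking both compositions to be meet, with $\se=\pe=1$, the interchange law (\ref{eq:i7}) holds trivially by associativity and commutativity of meet (as already observed in Section~\ref{S:duality}), and $\pcomp=\wedge$ is commutative. So $\bool$ is an interchange quantale with unit $1$ and commutative parallel composition.

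Next I would invoke that $(\mathcal{G},\cdot,\|,\{\varepsilon\})$ is a positive, serially-decomposable partial interchange monoid with commutative parallel composition and shared unit $\varepsilon$: this is exactly Proposition~\ref{P:graph-cmonoid} together with the lemma asserting positivity and serial decomposability. Proposition~\ref{P:pim-conv-quantale}, which underlies Corollary~\ref{P:graph-conv-algebra}, then guarantees that the antitone functions in $\bool^\mathcal{G}$ form a \emph{unital} interchange sub-quantale, with $\sconv$ and $\pconv$ restricting to this sub-quantale and $\id$ serving as the shared unit.

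It then remains only to identify the antitone functions with the down-sets, which is the bookkeeping already carried out in the paragraph preceding the corollary. For an indicator function $f:\mathcal{G}\to\bool$, antitonicity reads $x\preceq y\land f\,y\Rightarrow f\,x$, which is precisely the condition $f=f{\downarrow}$, i.e. that the set represented by $f$ is downward closed under $\preceq$. Under the isomorphism $\bool^\mathcal{G}\cong\pow\,\mathcal{G}$ the two convolutions become the complex products $f\sconv g=\{y\cdot z\mid y\in f\land z\in g\}$ and $f\pconv g=\{y\| z\mid y\in f\land z\in g\}{\downarrow}$, and the unit $\id$ becomes the indicator of $\{\varepsilon\}$, which is a down-set because $\varepsilon$ is isolated by positivity. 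Substituting these identifications into the sub-quantale produced above yields the claim.

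I do not expect a genuine obstacle, since the result is a direct specialisation of Proposition~\ref{P:pim-conv-quantale}; the only points needing a word of care are that $\pconv$ does not leave the class of down-sets (immediate from $A{\downarrow}{\downarrow}=A{\downarrow}$, so the explicit down-closure in the formula for $f\pconv g$ lands back among down-sets) and that $\sconv$ preserves down-sets, which is exactly the serial-decomposability hypothesis used in the proof of Proposition~\ref{P:pim-conv-quantale}. Both are already covered by that proposition, so no new verification is required.
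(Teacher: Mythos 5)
Your proof is correct and follows essentially the same route as the paper, which presents this corollary as an immediate specialisation of Corollary~\ref{P:graph-conv-algebra} (hence of Proposition~\ref{P:pim-conv-quantale}) to $Q=\bool$, combined with the identification of antitone indicator functions with down-sets carried out in the preceding paragraph. The extra remarks you add --- that $\{\varepsilon\}$ is a down-set by positivity and that closure of down-sets under $\sconv$ and $\pconv$ is exactly what Proposition~\ref{P:pim-conv-quantale} already establishes --- are accurate and only make explicit what the paper leaves implicit.
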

Finally we consider the finite case, and obtain the following
corollary of Theorem~\ref{P:interchange-ka-correspondence}.
\begin{corollary}
  If $K$ is an interchange Kleene algebra with unit $1$ and
  $\mathcal{G}$ a partial interchange monoid of finite graphs, then
  the antitone functions in
  $K^\mathcal{G}$ form an interchange Kleene algebra.
\end{corollary}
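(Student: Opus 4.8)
The plan is to combine the antitone-function construction of Proposition~\ref{P:pim-conv-quantale} with the recursive star of Proposition~\ref{P:ka-correspondence-prop}, using finiteness of the graphs to supply the grading and finite decomposability that replace completeness of the weight algebra.

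First I would fix the grading $|G|=|V_G|$, the number of vertices of $G$. Since serial composition adds no new vertices beyond the disjoint union and subsumption $\preceq$ is a bijection on vertices, one checks $|x|=|y|+|z|$ whenever $\srel^x_{yz}$ or $\prel^x_{yz}$, while $|G|>0$ exactly when $G\neq\varepsilon$; hence $(\mathcal{G},\srel,\prel,\{\varepsilon\})$ is graded. Finiteness of each graph likewise yields finite decomposability of both $\srel$ and $\prel$: a finite vertex set admits only finitely many serial splits, and only finitely many pairs $(y,z)$ with $x\preceq y\|z$. Consequently every convolution supremum in $K^{\mathcal{G}}$ is finite and may be read as a dioid sum, so no completeness of $K$ is needed.

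Next I would establish the interchange-semiring skeleton on the antitone functions. Because finite decomposability keeps every such sum finite, the quantale-level Corollary~\ref{P:graph-conv-algebra} has a verbatim dioid analogue obtained by repeating the proof of Proposition~\ref{P:pim-conv-quantale}: positivity makes $\id$ antitone, serial decomposability and isotonicity of $\|$ make $\sconv$ and $\pconv$ preserve antitonicity, and Lemma~\ref{P:antitone-unital} makes $\id$ a two-sided unit for $\pconv$ on antitone functions. The interchange law (\ref{eq:i7}) is star-free and descends directly from (\ref{eq:ri7}), supplied by Lemma~\ref{P:pinterchangemonoid-rinterchangesemigroup}, through Proposition~\ref{P:correspondence1}. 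Thus the antitone functions form a unital interchange semiring.

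It then remains to supply the two Kleene stars and verify that they keep us inside the antitone functions. For $\sconv$, the relation $\srel$ is a genuine graded, finitely decomposable relational monoid with unit $\varepsilon$, so Proposition~\ref{P:ka-correspondence-prop} defines the star recursively and discharges the Kleene axioms; I would add only a routine induction on the grading, using that $\sconv$ and finite sums preserve antitonicity, to see that $f^{\sstar}$ is antitone whenever $f$ is. The parallel star is the main obstacle: by Lemma~\ref{P:ord-pmonoid-relsemigroup-units} the relation $\prel$ has only weak relational units, so $(\mathcal{G},\prel,\{\varepsilon\})$ is a relational semigroup rather than a monoid and Proposition~\ref{P:ka-correspondence-prop} does not apply off the shelf. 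Here I would re-run that proposition's inductive proof \emph{restricted to antitone functions}, where Lemma~\ref{P:antitone-unital} restores exactly the unit behaviour $\pid\pconv f=f=f\pconv\pid$ on which the unfold and induction steps rely, and again check by induction that $f^{\pstar}$ is antitone. With both retracts now Kleene algebras on the antitone functions and (\ref{eq:i7}) already in hand, the antitone functions form an interchange Kleene algebra.
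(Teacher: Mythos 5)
Your proposal is correct and follows the same route as the paper, which justifies this corollary in a single sentence (finite graphs admit only finitely many serial and parallel decompositions) by appeal to Theorem~\ref{P:interchange-ka-correspondence}, Proposition~\ref{P:ka-correspondence-prop} and the antitone-function construction of Proposition~\ref{P:pim-conv-quantale}; your grading by vertex count and the finite-decomposability check are exactly the intended ingredients. You go further than the paper in one respect worth keeping: you observe that $(\mathcal{G},\prel,\{\varepsilon\})$ is only a relational semigroup with the weak units of Lemma~\ref{P:ord-pmonoid-relsemigroup-units}, so Proposition~\ref{P:ka-correspondence-prop} does not apply verbatim to the parallel star, and you correctly repair this by re-running its induction on the antitone functions, where Lemma~\ref{P:antitone-unital} restores the unit law $\pid\pconv f=f=f\pconv\pid$ on which the unfold and induction steps depend.
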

This holds because any finite graph can be decomposed in finitely many
ways serially or parallelly into subgraphs. Once again, all results
specialise to partial orders, and in particular to labelled partial
orders, where vertices are labelled with letters from some alphabet.
Sets of partial orders in general, and labelled partial orders in
particular, are widely used in concurrency
theory~\cite{Grabowski,Vogler92} and the theory of distributed
systems~\cite{Lamport78}.


\section{Weighted Languages of Types of Finite
  Graphs}\label{S:graph-type-languages}

Many applications, including those in concurrency and distributed
systems, require isomorphism classes and hence types of graphs or
(labelled) partial orders. Lifting the results from
Section~\ref{S:graph-languages} to these is not entirely
straightforward. This is well known~\cite{Esik02}, but we spell out
details for the sake of completeness.

\begin{example}[\cite{Esik02}]
  Consider the infinite poset $(P,\le_P)$ with
  $P=\{p_{i,j}\mid i,j\in \mathbb{N}\land (i=0\lor j=0)\}$ and
  $p_{i,j}\le_P p_{k,l}$ if and only if $i = k = 0$ and $j \le l$, and
  the infinite poset $(Q,\le_Q)$ with
  $Q=\{q_{i,j}\mid i,j\in \mathbb{N}\land (i=0\lor i=1 \lor j=0)\}$
  and $q_{i,j}\le_P q_{k,l}$ if and only if $i = k = 0$ or $i=k=1$,
  and $j\le l$.

  Intuitively, $P$ consists of the disjoint union of the infinite
  chain formed by the $p_{0,j}$ and the elements $p_{i,0}$ with
  $i> 0$, whereas $Q$ consists of the disjoint union of the infinite
  chain formed by the $p_{0,j}$, the infinite chain formed by the
  $p_{1,j}$ and the elements $p_{i,0}$ with $i\ge 1$.

Define the functions $\varphi:P\to Q$ and $\psi:Q\to P$ by
\begin{equation*}
  \varphi\, p_{i,j} =
  \begin{cases}
    q_{0,j} & \text{ if } i=0,\\
q_{1,k} & \text{ if } i > 0 \land j = 2k+1,\\
q_{k,0} & \text{ if } i > 0 \land j = 2k,
  \end{cases}
\qquad
  \psi\, q_{i,j} =
  \begin{cases}
    p_{0,2k} & \text{ if } i=0,\\
p_{1,2k+1} &\text{ if } i=1,\\
p_{i-1,0} & \text{ if } i > 1.
  \end{cases}
\end{equation*}
Intuitively, $\varphi$ maps the chain in $P$ onto the first chain in
$Q$ and the isolated elements in $P$ alternatingly onto the second
chain and the isolated elements in $Q$, whereas $\psi$ maps the
elements of the two chains in $Q$ alternatingly onto the chain in $P$,
and isolated points in $Q$ onto isolated points in $P$. The morphisms
are shown in Figure~\ref{fig:graph-morphisms}.

\begin{figure}[t]
  \centering
\begin{equation*}
    \begin{tikzcd}[column sep =.1cm, row sep = -.1cm,shorten <=-3pt, shorten >=-3pt]
&\phantom{\circ}&&&&&&\phantom{\circ} \\
&\phantom{\circ}&&&&&&& \\
&\circ\ar[dash,dashed,uu] \ar[blue,mapsto,bend left=10,rrrrrr] &&&&&\phantom{\circ}&\circ\ar[dash,dashed,uu] \\
\phantom{\circ}&&&&&\phantom{\circ}&&\\
&\circ \ar[blue,mapsto,bend left=10,rrrrr] &&&&&\circ\ar[dash,dashed,uu] &\\
\circ\ar[dash,dashed,uu]\ar[blue,mapsto,bend left=10,rrrrr] &&&&&\circ\ar[dash,dashed,uu] &&\\
&\circ \ar[blue,mapsto,bend left=10,rrrrrr] &&&&&&\circ \\
\phantom{\circ}&&&&&&&\\
& \circ \ar[blue,mapsto,bend left=10,rrrrr] &&&&& \circ\ar[dash,uuuu] & \\
\circ\ar[dash,uuuu]\ar[blue,mapsto,bend left=10,rrrrr] &&&&& \circ\ar[dash,uuuu] && \\
& \circ \ar[blue,mapsto,bend left=10,rrrrrr] &&&&&&\circ\\
\phantom{\circ}&&&&&&&\\
& \circ \ar[blue,mapsto,bend left=10,rrrrr] &&&&&\circ\ar[dash,uuuu]& \\
\circ\ar[dash,uuuu]\ar[blue,mapsto,bend left=10,rrrrr] &\phantom{\circ}&\phantom{\circ}&\phantom{\circ}&\phantom{\circ}&\circ\ar[dash,uuuu] &&
\end{tikzcd}
\qquad\qquad
   \begin{tikzcd}[column sep =.1cm, row sep = -.1cm,shorten <=-3pt,
     shorten >=-3pt]
&&\phantom{\circ}&&&&&\phantom{\circ}\\
&\phantom{\circ}&&&&&\phantom{\circ}&\\
&&\circ\ar[dash,dashed,uu] \ar[blue,mapsto,bend left=10,rrrrr] &&&&&\circ\ar[dash,dashed,uu] \\
\phantom{\circ} &\circ\ar[dash,dashed,uu]\ar[blue,mapsto,bend
left=10,rrrrr] &&&&&\circ\ar[dash,dashed,uu] &\\
\phantom{\circ}&&&&&&&\\
\circ\ar[dash,dashed,uu] \ar[blue,mapsto,bend left=10,rrrrrr] &&&&&&\circ\ar[dash,uu] &\\
&&\circ \ar[blue,mapsto,bend left=10,rrrrr] &&&&&\circ\\
& \circ\ar[dash,uuuu]\ar[blue,mapsto,bend left=10,rrrrr]
&&&&&\circ\ar[dash,uu]&\\
\phantom{\circ}&&&&&&&\\
\circ\ar[dash,uuuu]\ar[blue,mapsto,bend left=10,rrrrrr] &&&&&&\circ\ar[dash,uu] &\\
&&\circ \ar[blue,mapsto,bend left=10,rrrrr]  &&&&&\circ\\
&\circ\ar[dash,uuuu]\ar[blue,mapsto,bend left=10,rrrrr]
&&&&&\circ\ar[dash,uu]&\\
\phantom{\circ}&&&&&&&\\
\circ\ar[dash,uuuu]\ar[blue,mapsto,bend left=10,rrrrrr] &&&\phantom{\circ}&\phantom{\circ}&\phantom{\circ}&\circ\ar[dash,uu] &
\end{tikzcd}
\end{equation*}
 \caption{Posets $P$ and $Q$ with bijective morphisms $\varphi$ in left
   diagram and $\psi$ in right diagram.}
  \label{fig:graph-morphisms}
\end{figure}
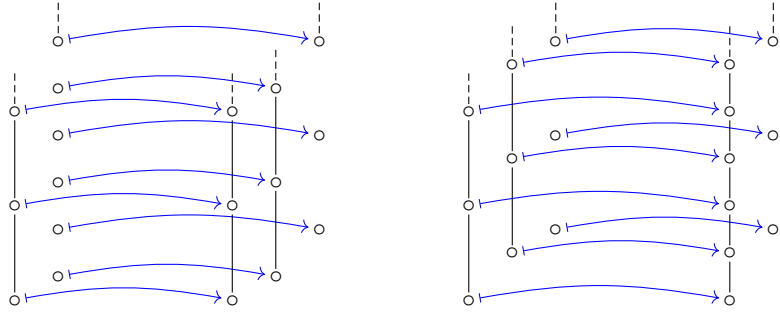

By construction, $\varphi$ and $\psi$ are both bijective and
order-preserving. Hence $P\preceq Q$ and $Q\preceq P$, but of course
neither $P=Q$ nor $P\cong Q$.\qed
\end{example}

At least in the finite case the situation is simpler. An explanation
requires two simple facts about groups.

\begin{lemma}\label{P:group-aux1}
  Let $G$ be the cyclic group generated by $x$ and let $x^i = x^j$ for
  some integers $i<j$.  Then $G=\{1,x,x^2,\dots x^{k-1}\}$, where
  $k=j-i$.
\end{lemma}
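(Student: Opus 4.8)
The plan is to reduce everything to the single relation $x^k = 1$ and then apply the division algorithm in $\mathbb{Z}$. First I would use that $G$ is a group: since $x^i = x^j$ with $i < j$, multiplying both sides on the left by the inverse $x^{-i}$ gives $x^{j-i} = 1$, that is, $x^k = 1$ with $k = j - i > 0$. This is the one step where the group structure (as opposed to merely a monoid) is genuinely needed, but it is essential, since without inverses one cannot cancel the shared factor $x^i$.

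Next, because $G$ is cyclic with generator $x$, every element $g \in G$ is of the form $g = x^n$ for some integer $n \in \mathbb{Z}$, allowing $n$ to be positive, negative or zero. I would then write $n = qk + r$ with $0 \le r < k$ via the division algorithm, so that
\[
  x^n = (x^k)^q \cdot x^r = 1^q \cdot x^r = x^r,
\]
using $x^k = 1$. Hence $g = x^r$ with $0 \le r \le k-1$, which establishes the inclusion $G \subseteq \{1, x, x^2, \dots, x^{k-1}\}$. The reverse inclusion is immediate: $G$ contains $x$ and is closed under multiplication, so each $x^r$ with $0 \le r \le k-1$ lies in $G$. Combining the two inclusions yields the claimed set equality.

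There is no real obstacle here; the only point worth flagging is that $G = \{1, x, \dots, x^{k-1}\}$ is an equality of \emph{sets}, and the listed powers need not be pairwise distinct --- if the order of $x$ is a proper divisor of $k$, the list will contain repetitions. The lemma as stated remains correct, since set equality is insensitive to repeated entries; what it delivers for the later graph-type arguments is that $G$ is finite with at most $k$ elements, each of the displayed form.
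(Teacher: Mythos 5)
Your proof is correct and follows essentially the same route as the paper: derive $x^k=1$ by cancelling $x^i$, then reduce an arbitrary power $x^n$ modulo $k$ via the division algorithm. If anything you are slightly more careful than the paper, since you let $n$ range over $\mathbb{Z}$ (as it must for a cyclic \emph{group}) rather than $\mathbb{N}$, and you flag explicitly that the listed powers need not be distinct.
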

\begin{proof}
  The assumption implies that $x^i = x^i x^k$ for some $k$, and thus
  $x^k= 1$ by cancellation.  By cyclicity, every $g\in G$ is of the
  form $g=x^n$ for some $n\in\mathbb{N}$ that can be written $n=pk+q$
  for some $p,q\in\mathbb{N}$ with $q\le k-1$.  Hence
  $g=(x^k)^p x^q=1^p x^q = x^q$ for some $0\le q\le k-1$ or,
  equivalently, $g\in \{1,x,x^2,\dots x^{k-1}\}$.  Since every
  $x^n\in G$, this shows that $G=\{1,x,x^2,\dots x^{k-1}\}$.
\end{proof}
\begin{lemma}\label{P:group-aux2}
  Let $G$ be a finite cyclic group of order $n$ generated by $x$.
  Then $G=\{1,x,x^2,\dots, x^{n-1}\}$ and $x^n=1$.
\end{lemma}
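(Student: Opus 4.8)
The plan is to reduce everything to the \emph{order} of the generator $x$, that is, to the least positive integer $k$ with $x^k=1$, and then invoke Lemma~\ref{P:group-aux1} to identify $G$ and a cardinality count to pin down $k=n$.

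First I would argue that such a $k$ exists. Since $G$ is finite, the elements $x^0,x^1,x^2,\dots$ cannot all be distinct, so there are integers $i<j$ with $x^i=x^j$; cancelling $x^i$ gives $x^{j-i}=1$ with $j-i>0$. Hence the set of positive integers $m$ with $x^m=1$ is nonempty, and I take $k$ to be its least element. This is exactly the hypothesis of Lemma~\ref{P:group-aux1} applied with $i=0$ and $j=k$ (using $x^0=1=x^k$), which yields $G=\{1,x,x^2,\dots,x^{k-1}\}$.

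Next I would verify that the listed powers $1,x,\dots,x^{k-1}$ are pairwise distinct. If $x^a=x^b$ for some $0\le a<b\le k-1$, then $x^{b-a}=1$ with $0<b-a<k$, contradicting the minimality of $k$. Therefore $G$ is a set of exactly $k$ distinct elements, so $n=\lvert G\rvert=k$. Consequently $G=\{1,x,x^2,\dots,x^{n-1}\}$, and $x^n=x^k=1$ by the choice of $k$, which is the claim.

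Honestly this is a routine argument with no genuine obstacle; the one point that warrants care is the distinctness step, since Lemma~\ref{P:group-aux1} by itself only gives the inclusion $G=\{1,\dots,x^{k-1}\}$ and hence $\lvert G\rvert\le k$. It is the minimality of $k$ that upgrades this to a count of exactly $k$ distinct elements, which is what forces $k=n$ rather than merely $k\ge n$.
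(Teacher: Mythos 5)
Your proof is correct and follows essentially the same route as the paper's: finiteness forces a collision among powers of $x$, cancellation produces a relation $x^k=1$, Lemma~\ref{P:group-aux1} identifies $G=\{1,x,\dots,x^{k-1}\}$, and minimality plus a cardinality count pins down $k=n$. The only cosmetic difference is that you take $k$ minimal with $x^k=1$ whereas the paper takes $j$ minimal with a collision $x^j=x^i$; these are equivalent, and your observation that minimality is what upgrades $\lvert G\rvert\le k$ to $\lvert G\rvert=k$ is exactly the point the paper's proof relies on.
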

\begin{proof}
  By the pigeonhole principle, there must be a minimal
  $j\in\mathbb{N}$, $j\le n$, such that $x^j=x^i$ for some $i\in\mathbb{N}$ with
  $i<j$. Hence the elements $1,x,x^2,\dots, x^{j-1}$ are pairwise
  distinct.  Then, by Lemma~\ref{P:group-aux1}, $j=n$, $i=0$ and $x^n= x^0=1$.
 \end{proof}
\begin{lemma}\label{P:refinement-fin-equiv}
  Let $G_1$ and $G_1$ be finite graphs such that $G_1\preceq G_2$ and
  $G_2\preceq G_1$.  Then $G_1\cong G_2$.
\end{lemma}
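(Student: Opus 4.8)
The plan is to use finiteness to upgrade a bijective morphism into a genuine isomorphism. Unpacking the definition of $\preceq$, the hypotheses $G_1\preceq G_2$ and $G_2\preceq G_1$ furnish bijective (on vertices) graph morphisms $\varphi:G_2\to G_1$ and $\psi:G_1\to G_2$. Since a graph isomorphism is precisely a bijective embedding, and $\psi$ is already bijective on vertices, it suffices to show that $\psi$ reflects edges, i.e.\ that it is faithful.

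First I would consider the composite $f=\varphi\circ\psi:G_1\to G_1$. As a composite of bijective graph morphisms it is again a bijective morphism, hence a permutation of the finite vertex set $V_{G_1}$. The cyclic subgroup $\langle f\rangle$ of the symmetric group on $V_{G_1}$ is therefore a finite cyclic group, say of order $n$, so Lemma~\ref{P:group-aux2} gives $f^n=\id$. Consequently $f^{-1}=f^{n-1}$ is a composite of the morphism $f$ with itself and is thus itself a graph morphism. A bijective morphism whose inverse is also a morphism must reflect edges: if $(fx,fy)\in G_1$, then applying the morphism $f^{-1}$ yields $(x,y)\in G_1$. Hence $f$ is a graph automorphism of $G_1$, and in particular faithful.

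With $f$ faithful, I transfer this property to $\psi$. Suppose $(\psi x,\psi y)\in G_2$. Applying the morphism $\varphi$ gives $(\varphi\psi x,\varphi\psi y)=(fx,fy)\in G_1$, and faithfulness of $f$ then forces $(x,y)\in G_1$. Thus $\psi$ reflects edges; being also bijective on vertices, it is a graph isomorphism $G_1\to G_2$, so $G_1\cong G_2$.

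The conceptual hurdle is that $\preceq$ is defined purely through edge-\emph{preserving} maps, so bijectivity on vertices does not by itself yield an isomorphism; indeed the preceding example exhibits infinite posets with $P\preceq Q$ and $Q\preceq P$ yet $P\not\cong Q$. Finiteness is exactly what rescues the argument, and Lemmas~\ref{P:group-aux1} and~\ref{P:group-aux2} package the needed fact that a permutation of a finite set has finite order, forcing $f^{-1}$ to be a morphism. (Alternatively one could argue by counting edges: the morphism inequalities $|G_1|\le|G_2|\le|G_1|$ force $(\psi\times\psi)(G_1)=G_2$, whence injectivity of $\psi\times\psi$ makes $\psi$ faithful directly; but the group-theoretic route is the one matching the auxiliary lemmas just established.)
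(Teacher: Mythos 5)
Your proof is correct and follows essentially the same route as the paper: both form the composite of the two bijective morphisms, observe that it generates a finite cyclic group on the finite vertex set so that some power is the identity (Lemma~\ref{P:group-aux2}), and use this to reflect edges and upgrade one of the bijective morphisms to an isomorphism. Your intermediate step of noting that $f^{-1}=f^{n-1}$ is itself a morphism is just a clean repackaging of the paper's direct computation with $\chi^{k-1}$.
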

\begin{proof}
  By assumption there exists order preserving bijections
  $\varphi:G_2\to G_1$ and $\psi:G_1\to G_2$, hence
  $\chi=\psi\circ \varphi$ is an order preserving bijection on $G_1$.
  As $\chi$ can be seen as a group action on the finite set $V_1$, it
  generates a finite cyclic group.  Hence there is some
  $k\in\mathbb{N}$ such that $\chi^k=\mathit{id}_{V_1}$ by
  Proposition~\ref{P:group-aux2}. It then follows that $f$ is
  faithful: Suppose $(\varphi\, x,\varphi\, y)\in G_2$. Then
  $ x = \chi^k\, x = \chi^{k-1}(\psi(\varphi\, x))\to_R
  \chi^{k-1}(\psi(\varphi\, y)) = \chi^k\, y = y$.
  It follows that $\varphi$ is a graph isomorphism and $G_1\cong G_2$.
\end{proof}
A similar fact has been proved by \'Esik~\cite{Esik02}. We henceforth
restrict our attention to finite graphs.

Let $[G]=\{G'\mid G'\cong G\}$ denote the type of $G$.  We
extend the subsumption preorder $\preceq$ to equivalence classes by
$[G_1]\preceq [G_2]\Leftrightarrow G_1\preceq G_2$, overloading
notation.  This relation is well defined.
 \begin{lemma}\label{P:type-preorder-defined}
   Let $G_1'\cong G_1$, $G_1\preceq G_2$ and $G_2\cong G_2'$. Then
   $G_1'\preceq G_2'$.
 \end{lemma}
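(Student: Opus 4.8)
The plan is to prove this well-definedness statement simply by composing the two witnessing isomorphisms with the bijective morphism that witnesses $G_1 \preceq G_2$. First I would unfold the three hypotheses into concrete maps. Since a graph isomorphism is a bijective graph embedding, and the inverse of a bijective graph embedding is again a bijective graph embedding, isomorphisms may be oriented in either direction; so I choose an isomorphism $\alpha : G_1 \to G_1'$ witnessing $G_1' \cong G_1$ and an isomorphism $\beta : G_2' \to G_2$ witnessing $G_2 \cong G_2'$. From $G_1 \preceq G_2$ I obtain, by the definition of subsumption, a bijective graph morphism $\varphi : G_2 \to G_1$.

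Next I would form the composite $\psi = \alpha \circ \varphi \circ \beta : G_2' \to G_1'$ and show it witnesses $G_1' \preceq G_2'$. Bijectivity on vertices is immediate, since $\psi$ is a composite of three vertex-bijections. For the morphism property I would chase an edge $(x,y) \in G_2'$ through the three maps: $\beta$ sends it to an edge of $G_2$ (the morphism part of the embedding $\beta$), $\varphi$ sends that to an edge of $G_1$ (it is a morphism), and $\alpha$ sends that to an edge of $G_1'$ (the morphism part of $\alpha$). Hence $(\psi\, x, \psi\, y) \in G_1'$, so $\psi$ is a bijective graph morphism $G_2' \to G_1'$, which is exactly the witness required for $G_1' \preceq G_2'$.

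The only genuine subtlety, and the step most likely to trip one up, is the contravariance built into the definition of $\preceq$: the witnessing morphism for $G_1 \preceq G_2$ runs from $G_2$ to $G_1$, from the larger index to the smaller. One must therefore orient the three maps so that their domains and codomains match up along the composite $G_2' \to G_2 \to G_1 \to G_1'$, which is precisely what the invertibility of isomorphisms is used for. Note that only the forward, edge-preserving direction of each map is exploited, so faithfulness of $\alpha$ and $\beta$ is not actually needed here, only that they are morphisms; for the same reason this argument, unlike Lemma~\ref{P:refinement-fin-equiv}, does not require any finiteness hypothesis.
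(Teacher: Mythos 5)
Your proposal is correct and follows essentially the same route as the paper: both proofs orient the two isomorphisms so that the composite $G_2'\to G_2\to G_1\to G_1'$ is a bijective graph morphism witnessing $G_1'\preceq G_2'$. Your added observations --- that faithfulness of the isomorphisms is never used and that no finiteness hypothesis is needed here --- are accurate but not essential.
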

\begin{proof}
  Let $\varphi_1$ be the graph isomorphism of type $G_1\to G_1'$,
  $\varphi_2$ the graph isomorphism of type $G_2'\to G_2$ and $\psi$
  the bijective graph morphism of type $G_2\to G_1$. Then
  $\varphi_1\circ \psi\circ \varphi_2:G_2'\to G_1'$ is a bijective
  graph morphism as well. Hence $G_1' \preceq G_2'$.
\end{proof}
\begin{lemma}\label{P:type-refinement-fin-po}
The relation $\preceq$ is a partial order on $\mathcal{G}/{\cong}$ if all graphs
in $\mathcal{G}$ are finite.
\end{lemma}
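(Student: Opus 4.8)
The plan is to verify the three defining properties of a partial order---reflexivity, transitivity and antisymmetry---for the relation $\preceq$ on $\mathcal{G}/{\cong}$, exploiting that $\preceq$ is already a preorder on graphs and that its lift to types is well defined by Lemma~\ref{P:type-preorder-defined}.

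Reflexivity and transitivity transfer directly from the graph level. For reflexivity, the identity map $\id:G\to G$ is a bijective graph morphism, so $G\preceq G$ and hence $[G]\preceq [G]$ by definition of the lifted relation. For transitivity, suppose $[G_1]\preceq [G_2]$ and $[G_2]\preceq [G_3]$; unfolding the definition gives $G_1\preceq G_2$ and $G_2\preceq G_3$, and composing the two witnessing bijective morphisms yields a bijective morphism of type $G_3\to G_1$, so $G_1\preceq G_3$ and therefore $[G_1]\preceq [G_3]$. These are simply the preorder laws for $\preceq$ on graphs, restated on types.

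The only substantial step is antisymmetry, and this is precisely where finiteness is used. Suppose $[G_1]\preceq [G_2]$ and $[G_2]\preceq [G_1]$. By definition of the lifted relation this means $G_1\preceq G_2$ and $G_2\preceq G_1$. Since all graphs in $\mathcal{G}$ are finite, Lemma~\ref{P:refinement-fin-equiv} applies and yields $G_1\cong G_2$, that is, $[G_1]=[G_2]$. Hence $\preceq$ is antisymmetric on $\mathcal{G}/{\cong}$, and combined with the two previous properties it is a partial order.

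I expect the only genuine obstacle to lie in antisymmetry, and that obstacle has in fact already been surmounted in Lemma~\ref{P:refinement-fin-equiv} via the cyclic-group argument built on Lemmas~\ref{P:group-aux1} and \ref{P:group-aux2}: the composite $\psi\circ\varphi$ of the two bijective morphisms is an order-preserving bijection of the finite vertex set, so some power of it is the identity, forcing $\varphi$ to be faithful. Without finiteness antisymmetry fails outright, as the infinite posets $P$ and $Q$ constructed above demonstrate. Given Lemma~\ref{P:refinement-fin-equiv}, the remaining work is entirely routine bookkeeping.
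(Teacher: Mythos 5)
Your proof is correct and follows the same route as the paper: reflexivity and transitivity are inherited from the preorder $\preceq$ on $\mathcal{G}$, and antisymmetry is exactly Lemma~\ref{P:refinement-fin-equiv}, which is where finiteness enters. The extra detail you supply (explicit identity and composite morphisms, and the remark on well-definedness via Lemma~\ref{P:type-preorder-defined}) is all consistent with the paper's argument.
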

\begin{proof}
  Reflexivity and transitivity for $\preceq$ on $\mathcal{G}/{\cong}$
  follows from reflexivity and transitivity of $\preceq$ on
  $\mathcal{G}$.  For antisymmetry, $[G_1]\preceq [G_2]$ and
  $[G_2]\preceq [G_1]$ imply $[G_1]=[G_2]$ for all
  $G_1,G_2\in\mathcal{G}$ by Lemma~\ref{P:refinement-fin-equiv}.
\end{proof}
Extending serial and parallel composition of graphs is standard:
$[G_1]\cdot [G_2]=\{G_1'\cdot G_2'\mid G_1'\cong G\wedge G_2'\cong
G\}$
and likewise for $[G_1]\| [G_2]$. It is also well known that both
compositions are well defined: if $G_1\cong G_1'$ and $G_2\cong G_2'$,
then $[G_1]\cdot[G_2]=[G_1']\cdot [G_2']$ and
$[G_1]\|[G_2]=[G_1']\| [G_2']$. By contrast to serial and parallel
compositions of graphs, those of graph types are total. Finally,
equivalence classes are closed with respect to serial and parallel
composition.

\begin{lemma}\label{P:type-compositions-closed}
For all $G_1,G_2\in\mathcal{G}$,
  \begin{enumerate}
  \item $[G_1\cdot G_2] = [G_1]\cdot [G_2]$,
\item $[G_1\| G_2] = [G_1]\|[G_2]$.
  \end{enumerate}
\end{lemma}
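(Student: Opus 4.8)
The plan is to prove each identity by showing the two inclusions separately, treating serial composition (1) in full and then observing that parallel composition (2) follows by the identical argument with the cross-edge block deleted. Throughout I regard a type $[G]$ as the set of graphs isomorphic to $G$, and I use the characterisation that $\varphi$ is a graph isomorphism $G\to H$ exactly when it is a vertex bijection with $(x,y)\in G \Leftrightarrow (\varphi\,x,\varphi\,y)\in H$, i.e. it both preserves and reflects edges. Since the left-hand sides presuppose that $G_1\cdot G_2$ (resp. $G_1\|G_2$) is defined, I assume $V_{G_1}\cap V_{G_2}=\emptyset$.

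For the inclusion $[G_1]\cdot[G_2]\subseteq[G_1\cdot G_2]$, I start from a graph $H=G_1'\cdot G_2'$ with isomorphisms $\varphi_1\colon G_1\to G_1'$ and $\varphi_2\colon G_2\to G_2'$, where $V_{G_1'}\cap V_{G_2'}=\emptyset$ by definedness of the composite. Because $V_{G_1}$ and $V_{G_2}$ are disjoint, the union $\varphi=\varphi_1\sqcup\varphi_2$ is a well-defined vertex bijection $V_{G_1}\sqcup V_{G_2}\to V_{G_1'}\sqcup V_{G_2'}$. I then verify that $\varphi$ is an isomorphism $G_1\cdot G_2\to H$ by inspecting the three kinds of edges of $G_1\cdot G_2$: pairs internal to $G_1$ and to $G_2$ are handled by $\varphi_1$ and $\varphi_2$, while the complete cross-block $V_{G_1}\times V_{G_2}$ maps bijectively onto $V_{G_1'}\times V_{G_2'}$, which is precisely the cross-block of $G_1'\cdot G_2'$. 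Hence $H\cong G_1\cdot G_2$.

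For the converse inclusion $[G_1\cdot G_2]\subseteq[G_1]\cdot[G_2]$, I take $H\cong G_1\cdot G_2$ via an isomorphism $\varphi\colon G_1\cdot G_2\to H$ and use it to recover a decomposition of $H$. Setting $A=\varphi(V_{G_1})$ and $B=\varphi(V_{G_2})$ partitions $V_H$, and I let $G_1'$ and $G_2'$ be the subgraphs of $H$ induced on $A$ and $B$; the restrictions of $\varphi$ then give $G_1'\cong G_1$ and $G_2'\cong G_2$. It remains to check $H=G_1'\cdot G_2'$. The inclusion $G_1'\cdot G_2'\subseteq H$ follows from the definition of induced subgraph together with the observation that every pair in $A\times B$ is an edge of $H$, being the $\varphi$-image of a pair in $V_{G_1}\times V_{G_2}$, all of which are edges of $G_1\cdot G_2$. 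The inclusion $H\subseteq G_1'\cdot G_2'$ uses that $\varphi$ reflects edges: each edge of $H$ pulls back to an edge of $G_1\cdot G_2$, which is internal to $G_1$, internal to $G_2$, or a cross-pair, and these push forward into $G_1'$, $G_2'$, or $A\times B$ respectively.

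Part (2) follows the same two-step pattern, the only change being that parallel composition contributes no cross-edges, so $G_1\|G_2$ has edges only inside $V_{G_1}$ or inside $V_{G_2}$ and the splitting and recombination of isomorphisms go through verbatim with the cross-block omitted. I expect the main obstacle to be bookkeeping the partiality: one must ensure the representatives on the right-hand sides have disjoint vertex sets so the composites are defined, and—on the harder $\subseteq$ direction of (1)—confirm that the induced decomposition of $H$ reproduces the full complete join rather than some smaller bipartite graph, which is exactly where the edge-reflecting (faithfulness) property of an isomorphism, rather than mere edge-preservation, is indispensable.
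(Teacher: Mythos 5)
Your proof is correct and takes essentially the same route as the paper, whose entire argument is the one-line equivalence ``$H\cong G_1\cdot G_2$ iff $H=G_1'\cdot G_2'$ for some $G_1'\cong G_1$ and $G_2'\cong G_2$''; you simply supply the two constructions (gluing $\varphi_1\sqcup\varphi_2$, and decomposing $H$ along the $\varphi$-images of $V_{G_1}$ and $V_{G_2}$) that the paper leaves implicit. Your remark that edge-reflection is what forces the recovered cross-block to be the full complete join is exactly the right point to flag.
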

\begin{proof}
  $H\in [G_1\cdot G_2]$ if and only if $H\cong G_1\cdot G_2$. This is
  the case if and only if there are graphs
  $G_1'$ and $G_2'$ such that $H=G_1'\cdot G_2'$ and $G_1'\cong G_1$ and
  $G_2'\cong G_2$, which holds if and only if $H\in [G_1]\cdot [G_2]$. The proof for $\|$ is similar.
\end{proof}
\begin{proposition}\label{P:graph-type-interchange-monoid}
  The structure $(\mathcal{G}/{\cong},\cdot,\|,[\varepsilon])$ is an
  interchage monoid in which $\|$ is commutative, if all graphs in
  $\mathcal{G}$ are finite.
\end{proposition}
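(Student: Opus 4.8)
The plan is to lift the partial interchange monoid structure of $\mathcal{G}$, established in Proposition~\ref{P:graph-cmonoid}, to the quotient $\mathcal{G}/{\cong}$, exploiting that on isomorphism types the two compositions become \emph{total} (so every definedness predicate is everywhere true). The finiteness hypothesis feeds in only through the preceding lemmas: it is what forces mutual subsumption to collapse to isomorphism (Lemma~\ref{P:refinement-fin-equiv}), so that $\preceq$ descends to a genuine partial order on types (Lemma~\ref{P:type-refinement-fin-po}). I would therefore open by recording the three representative-independence facts already in hand: $\preceq$ is well-defined on types (Lemma~\ref{P:type-preorder-defined}) and antisymmetric (Lemma~\ref{P:type-refinement-fin-po}), and both $\cdot$ and $\|$ descend to total operations with $[G_1]\cdot[G_2]=[G_1\cdot G_2]$ and $[G_1]\|[G_2]=[G_1\|G_2]$ (Lemma~\ref{P:type-compositions-closed}). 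The whole argument then amounts to transporting each partial interchange monoid axiom across these identities, using that one may always pick representatives with disjoint vertex sets to make the graph-level compositions defined.

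Concretely, I would check the axioms in turn. For the two preordered partial monoid structures $(\mathcal{G}/{\cong},\preceq,\cdot,\{[\varepsilon]\})$ and $(\mathcal{G}/{\cong},\preceq,\|,\{[\varepsilon]\})$: associativity follows from associativity up to isomorphism of the corresponding graph compositions, pushed through $[G_1]\cdot[G_2]=[G_1\cdot G_2]$ and its parallel analogue; the unit laws hold as genuine equalities, since $[\varepsilon]\cdot[G]=[\varepsilon\cdot G]=[G]$ and $[\varepsilon]\|[G]=[G]$; and order preservation descends from the isotonicity of $\cdot$ and $\|$ recorded in Proposition~\ref{P:graph-cmonoid}, again after choosing disjoint representatives. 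Commutativity of $\|$ is immediate from $G_1\|G_2\cong G_2\|G_1$, whence $[G_1]\|[G_2]=[G_2]\|[G_1]$. Because the operations are total, every definedness premise and conclusion of the interchange law (\ref{eq:pi7}) is vacuously met, so (\ref{eq:pi7}) collapses to the single inequality $([G_1]\|[G_2])\cdot([G_3]\|[G_4])\preceq([G_1]\cdot[G_3])\|([G_2]\cdot[G_4])$; this is exactly the graph-level interchange inequality proved in Proposition~\ref{P:graph-cmonoid}, transported along Lemma~\ref{P:type-compositions-closed} together with well-definedness of $\preceq$ on types.

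I expect the only real care—rather than a genuine obstacle—to be the bookkeeping of representatives: each lifted statement must be seen to be independent of the chosen class representatives and of the disjointification used to define the composites, which is precisely what Lemmas~\ref{P:type-preorder-defined} and \ref{P:type-compositions-closed} secure. The conceptual difficulty, namely antisymmetry of subsumption on finite types, has already been isolated in Lemmas~\ref{P:refinement-fin-equiv}--\ref{P:type-refinement-fin-po}, so once these are invoked the present proposition is essentially a routine transfer of Proposition~\ref{P:graph-cmonoid} to the quotient. I would accordingly spell out the interchange and commutativity steps in full, remark that the associativity, unit, and isotonicity verifications are immediate from the cited lemmas, and leave the remaining analogous cases to the reader, as elsewhere in the paper.
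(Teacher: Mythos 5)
Your proposal is correct and follows essentially the same route as the paper: both reduce the interchange law on types to the graph-level inequality of Proposition~\ref{P:graph-cmonoid} via the definition of $\preceq$ on equivalence classes and Lemma~\ref{P:type-compositions-closed}, and both dispatch associativity, commutativity and the unit laws as routine transfers along the quotient. The paper's proof is merely terser, leaving the representative-independence bookkeeping implicit in the cited lemmas rather than spelling it out as you do.
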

\begin{proof}
  The associativity, commutativity and unit laws are easy to check,
  noting that $[\varepsilon]=\{\varepsilon\}$.  For the interchange
  law
  $[(G_1\| G_2)\cdot (G_3\| G_4)]\preceq [(G_1\cdot G_3)\|(G_2\cdot
  G_4)]$,
  by definition of $\preceq$ on equivalence classes, it suffices to
  show that $(G_1\| G_2)\cdot (G_3\| G_4)\preceq (G_1\cdot G_3)\|(G_2\cdot
  G_4)$, which holds by Proposition~\ref{P:graph-cmonoid}.
\end{proof}

Proposition~\ref{P:graph-type-interchange-monoid} specialises
immediately to types of finite partial orders with serial and parallel
composition, which are known as partial words or pomsets in
concurrency theory---when vertex labels are
added~\cite{Grabowski,Gischer}. The instance of
Proposition~\ref{P:graph-type-interchange-monoid} for pomsets is due
to Gischer~\cite{Gischer}.

Because some compositions in $\mathcal{G}/{\cong}$ may results in the
empty set, the interchange monoid can have an annihilator $0$, that
is, an element for which $x\cdot 0=0= 0\cdot x$ and $x\| 0 = 0$ holds
for any element $x$.

The lifting to convolution algebras---interchange quantales, unital
interchange quantales, interchange Kleene algebras---then follows the
results of the previous section. The result that the powerset lifting
of finite pomsets yields concurrent semirings, interchange semirings
in which $\pcomp$ is commutative and $Q=\bool$, is due to
Gischer~\cite{Gischer}. Extensions to concurrent Kleene algebras and
concurrent quantales have been proved more recently~\cite{HMSW11}.


\section{Conclusion}\label{S:conclusion}

The results in this article support the construction of concurrent
quantales and Kleene algebras from relational structures, multimonoids
and partial monoids. They can be formalised easily in proof assistants
and applied in concurrency verification. In fact, the lifting from
ternary relations and partial monoids to quantalic convolution
algebras---without interchange laws---has already been formalised with
Isabelle/HOL~\cite{DongolGHS17}. Extending this to concurrency is left
for future work.

Another interesting avenue for research is the extension of Stone-type
duality to our constructions, building on work of Harding, Walker and
Walker for lattice-valued functions~\cite{HardingWW18}. Moreover, a
categorification of our approach will be published in a successor
paper.

Finally, we hope that our results will benefit the construction of
real-word graph-based models for concurrent and distributed systems,
and ultimately the design of programming languages and verification
tools for such systems.

\vspace{\baselineskip}

\textbf{Acknowledgement:} The authors would like to thank Tony Hoare
for discussions on models of concurrent Kleene algebras, and to
Brijesh Dongol and Ian Hayes for their collaboration on convolution
algebras and comments on early versions of this work. The second and
third author have been partially supported by EPSRC grant EP/R032351/1.


\bibliographystyle{alpha}
\bibliography{cka-conv2}

\end{document}